\DeclareMathAlphabet{\mathbsf}{\encodingdefault}{\sfdefault}{bx}{n}
\definecolor{codegreen}{rgb}{0,0.6,0}
\definecolor{codeblue}{rgb}{0,0,0.8}
\definecolor{codegray}{rgb}{0.5,0.5,0.5}
\definecolor{codepurple}{rgb}{0.58,0,0.82}
\definecolor{backcolour}{rgb}{0.95,0.95,0.92}
\lstdefinestyle{mystyle}{
    commentstyle=\color{codegreen},
    keywordstyle=\color{codeblue},
    numberstyle=\tiny\color{codegray},
    stringstyle=\color{codepurple},
    basicstyle=\ttfamily\small,
    breakatwhitespace=false,
    breaklines=true,
    captionpos=b,
    keepspaces=true,
    numbers=left,
    numbersep=5pt,
    showspaces=false,
    showstringspaces=false,
    showtabs=false,
    tabsize=2,
    columns=fullflexible,
    morekeywords={
      Unl, Lin, Any, linfun, fun, let, in, do, return,
      linlet, lindo},
    literate={
    {->}{{$\to$}}{2}
    {<-}{{$\revto$}}{2}
    {-@}{{$\multimap$}}{2}
    {<|}{{$\langle$}}{1}
    {|>}{{$\rangle$}}{1}
    {[|}{{$\llbracket$}}{1}
    {|]}{{$\rrbracket$}}{1}
  }
}
\def\Multimap{\hbox{$=$}\kern -0.5pt\hbox{$\circ$}}
\lstdefinestyle{links}{
    commentstyle=\textit,
    numberstyle=\tiny,
    stringstyle=\ttfamily\small,
    basicstyle=\ttfamily\small,
    breakatwhitespace=false,
    breaklines=true,
    captionpos=b,
    keepspaces=true,
    numbers=none,
    numbersep=5pt,
    showspaces=false,
    showstringspaces=false,
    showtabs=false,
    tabsize=2,
    columns=fullflexible,
    keywords=[1]{
      linfun, let, linlet, in, lindo, xlin,
      fun, do, handle, sig, switch, case, var, return, forall, if, else
    },
    keywordstyle=[1]\bfseries,
    keywords=[2]{
      Unl, Lin, Any
    },
    keywordstyle=[2],
    morecomment = [l]{\#},
}
\newcommand{\Calc}{\ensuremath{\mathrm{F}^\Lin_{\mathrm{eff}}}\xspace}
\newcommand{\CalcS}{\ensuremath{\mathrm{Q}^{\Lin}_{\mathrm{eff}}}\xspace}
\newcommand{\CalcQ}{\ensuremath{\mathrm{Q}^\Lin_{\mathrm{eff}}}\xspace} %
\newcommand{\Links}{\textsc{Links}\xspace}
\newcommand{\Quill}{\textsc{Quill}\xspace}
\newcommand{\Rose}{\textsc{Rose}\xspace}
\newcommand{\Eff}{\textsc{Eff}\xspace}
\newcommand{\Koka}{\textsc{Koka}\xspace}
\newcommand{\Effekt}{\textsc{Effekt}\xspace}
\newcommand{\Frank}{\textsc{Frank}\xspace}
\newcommand{\Helium}{\textsc{Helium}\xspace}
\newcommand{\Fpop}{\ensuremath{\mathrm{F}^\Lin}\xspace}
\newcommand{\Rplus}{\protect\hspace{-.05em}\protect\raisebox{.15ex}{\texttt{+}}}
\newcommand{\Cpp}{\textsc{C}\Rplus\Rplus\xspace}
\newcommand{\Go}{\textsc{Go}}
\newcommand{\Java}{\textsc{Java}}
\newcommand{\Scheme}{\textsc{Scheme}}
\newcommand{\OCaml}{\textsc{OCaml}}
\newcommand{\Wasm}{\textsc{WebAssembly}}
\colorlet{hlcolor}{lightgray!50}
\newcommand{\hlbox}[1]{{\colorbox{hlcolor}{$#1$}}}
\newcommand{\highlightwithstyle}[2]{
  {\setlength{\fboxsep}{1pt} \colorbox{hlcolor}{\pgfsetfillopacity{1}$#1#2$}}
}
\newcommand{\hl}[1]{\mathpalette\highlightwithstyle{#1}}
\newcommand{\kind}[2]{{#1} ^ {#2}}
\newcommand{\sto}{\twoheadrightarrow}
\newcommand{\Belong}[2]{#2 \ni #1}
\newcommand{\ogeneric}[2][0.7]{%
  \vphantom{{\oplus}}\mathpalette\o@generic{{#1}{#2}}%
}
\newcommand{\o@generic}[2]{\o@@generic#1#2}
\newcommand{\o@@generic}[3]{%
  \begingroup
  \sbox\z@{$\m@th#1\oplus$}%
  \dimen@=\dimexpr\ht\z@+\dp\z@\relax
  \savebox\tw@[\totalheight]{$\m@th#1\bigcirc$}%
  \makebox[\wd\z@]{%
    \ooalign{%
      $#1\vcenter{\hbox{\resizebox{\dimen@}{!}{\usebox\tw@}}}$\cr
      \hidewidth
      $#1\vcenter{\hbox{\resizebox{#2\dimen@}{!}{$#1\vphantom{\oplus}{#3}$}}}$%
      \hidewidth
      \cr
    }%
  }%
  \endgroup
}
\newcommand{\meta}{\mathsf}
\newcommand{\lsig}{:}
\newcommand{\lsep}{{\,;\,}}
\newcommand{\linle}{\preceq}
\newcommand{\rle}{{\,\leqslant\,}}
\newcommand{\rlack}{{\,\bot\,}}
\newcommand{\un}{\linle \Unl}
\newcommand{\subtype}{\leqslant}
\newcommand{\cR}{K}
\newcommand{\qto}{\Rightarrow}
\newcommand{\fresh}{\;\mathit{fresh}}
\newcommand{\ol}[1]{\overline{#1}}
\newcommand{\setcomplement}[1]{{#1^\mathtt{c}}}
\newcommand{\satsubst}[1]{\llbracket{#1}\rrbracket_{sat}}
\newcommand{\munify}{\meta{unify}}
\newcommand{\solve}{\meta{solve}}
\newcommand{\solverow}{\meta{srow}}
\newcommand{\unifyrow}{\meta{urow}}
\newcommand{\unifylab}{\meta{ulab}}
\newcommand{\unifylin}{\meta{ulin}}
\newcommand{\solvelin}{\meta{trlin}}
\newcommand{\sqgen}{\sqsubseteq}
\newcommand{\mL}{\mathcal{L}}
\newcommand{\mS}{\mathcal{S}}
\newcommand{\mT}{\mathcal{T}}
\newcommand{\rowset}[1]{\mathsf{set}(#1)}
\newcommand{\factor}{\mathsf{factorise}}
\newcommand{\mlet}{\text{let}}
\newcommand{\mreturn}{\text{return}}
\newcommand{\massert}{\text{assert}}
\newcommand{\mfresh}{\text{assume fresh}}
\newcommand{\mfail}{\text{fail}}
\newcommand{\Lin}{\circ}
\newcommand{\Unl}{\bullet}
\newcommand{\islin}{\meta{lin}}
\newcommand{\lintag}[1]{{#1}^\Lin}
\newcommand{\taglin}{\meta{tag}}
\newcommand{\ftv}[1]{\meta{ftv}(#1)}
\newcommand{\dom}[1]{\meta{dom}(#1)}
\newcommand{\BL}[1]{\mathsf{bl}(#1)}
\newcommand{\gen}{\mathsf{gen}}
\newcommand{\makeunl}[1]{\mathsf{un}(#1)} %
\newcommand{\makeleq}[1]{\mathsf{leq}(#1)}
\newcommand{\makesub}[1]{\mathsf{sub}(#1)}
\newcommand{\makelack}[1]{\mathsf{lack}(#1)}
\renewcommand{\infer}[6]{#1 \vdash #2 : #3 \dashv #4, #5, #6}
\renewcommand{\infer}[6]{#1 \vdash #2 : #3 \dashv #4, #5, #6}
\newcommand{\unify}[3]{#1 \sim #2 : #3}
\newcommand{\umul}{\cup}
\newcommand{\refa}[1]{{\color{red}    {#1}\ifthenelse{\equal{#1}{}}{}{\,}(1)}}
\newcommand{\refb}[1]{{\color{blue}   {#1}\ifthenelse{\equal{#1}{}}{}{\,}(2)}}
\newcommand{\refc}[1]{{\color{violet} {#1}\ifthenelse{\equal{#1}{}}{}{\,}(3)}}
\newcommand{\refd}[1]{{\color{purple} {#1}\ifthenelse{\equal{#1}{}}{}{\,}(4)}}
\newcommand{\refe}[1]{{\color{cyan}   {#1}\ifthenelse{\equal{#1}{}}{}{\,}(5)}}
\newcommand{\reff}[1]{{\color{magenta}{#1}\ifthenelse{\equal{#1}{}}{}{\,}(6)}}
\newcommand{\refg}[1]{{\color{brown} {#1}\ifthenelse{\equal{#1}{}}{}{\,}(7)}}
\newcommand{\refh}[1]{{\color{orange}  {#1}\ifthenelse{\equal{#1}{}}{}{\,}(8)}}
\DeclareRobustCommand{\Circle}{%
  \mathbin{\mathpalette\on@ntimes\relax}%
}
\newcommand{\on@ntimes}[2]{%
  \vcenter{\hbox{%
    \sbox0{\m@th$#1\otimes$}%
    \setlength\unitlength{\wd0}%
    \begin{picture}(1,1)
    \linethickness{0.35pt}
    \put(.5,.5){\circle{.8}}
    \end{picture}%
  }}%
}
\newcommand{\xmark}{\text{\ding{55}}} %
\newcommand{\tmark}{\text{\ding{51}}} %
\newcommand{\notsmall}{}
\newcommand{\slab}[1]{\textrm{#1}}
\newcommand{\semlab}[1]{\text{\scshape{E-#1}}}
\newcommand{\lsemlab}[1]{\text{\scshape{L-#1}}}
\newcommand{\tylab}[1]{\text{\scshape{T-#1}}}
\newcommand{\qtylab}[1]{\text{\scshape{Q-#1}}}
\newcommand{\tilab}[1]{\text{\scshape{Q-#1}$^\text{W}$}}
\newcommand{\sklab}[1]{\text{\scshape{S-#1}}}
\newcommand{\klab}[1]{\text{\scshape{K-#1}}}
\newcommand{\plab}[1]{\text{\scshape{P-#1}}}
\newcommand{\ulab}[1]{\text{\scshape{U-#1}}}
\newcommand{\revto}{\ensuremath{\leftarrow}}
\newcommand{\dec}[1]{\mathsf{#1}}
\newcommand{\keyw}[1]{{{\mathbsf{#1}}}}
\newcommand{\Handle}{\keyw{handle}\;}
\newcommand{\With}{\;\keyw{with}\;}
\newcommand{\Let}{\keyw{let}\;}
\newcommand{\Lety}{\keyw{let}}
\newcommand{\Seq}{\,;}
\newcommand{\Rec}{\keyw{rec}\;}
\newcommand{\In}{\;\keyw{in}\;}
\newcommand{\Do}{\keyw{do}\;}
\newcommand{\Ret}{\keyw{return}\;}
\renewcommand{\If}{\keyw{if}\;}
\newcommand{\Then}{\;\keyw{then}\;}
\renewcommand{\Else}{\;\keyw{else}\;}
\newcommand{\File}{\mathsf{File}}
\newcommand{\OpenFile}{\mathsf{open}}
\newcommand{\CloseFile}{\mathsf{close}}
\newcommand{\WriteFile}{\mathsf{write}}
\newcommand{\Abs}{\mathsf{Abs}}
\newcommand{\Presence}{\mathsf{Presence}}
\newcommand{\Row}{\mathsf{Row}}
\newcommand{\Type}{\mathsf{Type}}
\newcommand{\Comp}{\mathsf{Comp}}
\newcommand{\Effect}{\mathsf{Effect}}
\newcommand{\Handler}{\mathsf{Handler}}
\newcommand{\Unit}{()}
\newcommand{\TUnit}{\Unit}
\newcommand{\Int}{\mathsf{Int}}
\newcommand{\Bool}{\mathsf{Bool}}
\newcommand{\String}{\mathsf{String}}
\newcommand{\makestring}[1]{\text{\textquotedbl #1\textquotedbl}}
\newcommand{\True}{\mathsf{true}}
\newcommand{\False}{\mathsf{false}}
\newcommand{\Choose}{\dec{Choose}}
\newcommand{\Get}{\mathsf{Get}}
\newcommand{\Print}{\mathsf{Print}}
\newcommand{\Close}{\mathsf{Close}}
\newcommand{\eff}{\, !\,}
\newcommand{\typ}[2]{#1 \vdash #2}
\newcommand{\typc}[3]{\typ{#1}{#2 \eff #3}}
\newcommand{\hto}{\rightrightarrows}
\newcommand{\reducesto}{\leadsto}
\newcommand{\lreducesto}[2]{\; \prescript{#1}{#2}{\leadsto}\; }
\newcommand{\TL}[1]{\mathscr{L}(#1)}
\newcommand{\ba}{\begin{array}}
\newcommand{\ea}{\end{array}}
\newcommand{\bl}{\ba[t]{@{}l@{}}}
\newcommand{\el}{\ea}
\renewenvironment{displaymath}{\notsmall\[}{\]\normalsize\ignorespacesafterend}
\newenvironment{syntax}{\begin{displaymath}\ba{@{}l@{\quad}r@{~}c@{~}l@{}}}{\ea\end{displaymath}\ignorespacesafterend}
\newenvironment{bigreductions}{\begin{displaymath}\ba{@{}l@{\quad}@{}r@{~}r@{~}l@{}}}{\ea\end{displaymath}\ignorespacesafterend}
\newenvironment{reductions}{\begin{displaymath}\ba{@{}l@{\quad}@{}r@{~}c@{~}l@{}}}{\ea\end{displaymath}\ignorespacesafterend}
\newcommand{\EC}{\mathcal{E}}
\newcommand{\FC}{\mathcal{F}}
\begin{document}

\title{Soundly Handling Linearity}

\author{Wenhao Tang}
\orcid{0009-0000-6589-3821}
\affiliation{%
  \institution{The University of Edinburgh}
  \country{United Kingdom}}
\email{wenhao.tang@ed.ac.uk}

\author{Daniel Hillerstr{\"o}m}
\orcid{0000-0003-4730-9315}
\affiliation{%
  \institution{Huawei Zurich Research Center}
  \country{Switzerland}}
\email{daniel.hillerstrom@ed.ac.uk}

\author{Sam Lindley}
\orcid{0000-0002-1360-4714}
\affiliation{%
  \institution{The University of Edinburgh}
  \country{United Kingdom}}
\email{sam.lindley@ed.ac.uk}

\author{J.\ Garrett Morris}
\orcid{0000-0002-3992-1080}
\affiliation{%
  \institution{University of Iowa}
  \country{USA}
}
\email{garrett-morris@uiowa.edu}

\begin{abstract}

We propose a novel approach to soundly combining linear types with
multi-shot effect handlers.
Linear type systems statically ensure that resources such as file
handles and communication channels are used exactly once.
Effect handlers provide a rich modular programming abstraction for
implementing features ranging from exceptions to concurrency to
backtracking.
Whereas conventional linear type systems bake in the assumption that
continuations are invoked exactly once, effect handlers allow
continuations to be discarded (e.g. for exceptions) or invoked more
than once (e.g. for backtracking).
This mismatch leads to soundness bugs in existing systems such as the
programming language \Links, which combines linearity (for session
types) with effect handlers.
We introduce control-flow linearity as a means to ensure that
continuations are used in accordance with the linearity of any
resources they capture, ruling out such soundness bugs.

We formalise the notion of control-flow linearity in a System F-style
core calculus $\Calc$ equipped with linear types, an effect type
system, and effect handlers. We define a linearity-aware semantics in
order to formally prove that $\Calc$ preserves the integrity of linear
values in the sense that no linear value is discarded or duplicated.
In order to show that control-flow linearity can be made practical, we
adapt \Links based on the design of \Calc, in doing so fixing a
long-standing soundness bug.

Finally, to better expose the potential of control-flow linearity, we
define an ML-style core calculus $\CalcS$, based on qualified types,
which requires no programmer provided annotations, and instead relies
entirely on type inference to infer control-flow linearity.
Both linearity and effects are captured by qualified types.
$\CalcS$ overcomes a number of practical limitations of $\Calc$,
supporting abstraction over linearity, linearity dependencies between
type variables, and a much more fine-grained notion of control-flow
linearity.

\end{abstract}

\begin{CCSXML}
<ccs2012>
   <concept>
       <concept_id>10003752.10010124.10010125.10010126</concept_id>
       <concept_desc>Theory of computation~Control primitives</concept_desc>
       <concept_significance>500</concept_significance>
       </concept>
   <concept>
       <concept_id>10003752.10010124.10010125.10010130</concept_id>
       <concept_desc>Theory of computation~Type structures</concept_desc>
       <concept_significance>500</concept_significance>
       </concept>
</ccs2012>
\end{CCSXML}

\ccsdesc[500]{Theory of computation~Control primitives}
\ccsdesc[500]{Theory of computation~Type structures}

\keywords{control-flow linearity, multi-shot continuations, linear resources}  %

\maketitle

\section{Introduction}
\label{sec:introduction}

Many programming languages support linear resources such as file
handles, communication channels, network connections, and so forth.
Special care must be taken to preserve the integrity of linear
resources in the presence of first-class continuations that may be
invoked multiple times~\cite{FriedmanH85}, as a linear resource may be
inadvertently be accessed more than once.
\Java{}~\citep{Pressler18} and \OCaml{}~\citep{multicore} have each
recently been retrofitted with facilities for programming with
first-class continuations that must be invoked exactly once, partly in
order to avoid such pitfalls.
Nonetheless, multi-shot continuations are a compelling feature,
supporting applications such as backtracking
search~\cite{FriedmanHK84} and probabilistic
programming~\citep{KiselyovS09}. In this paper we explore how to
soundly handle linearity in the presence of multi-shot effect
handlers~\citep{PlotkinP13}.

We first illustrate the issues with combining linearity with
multi-shot effect handlers by exhibiting a soundness bug in the
programming language \Links~\citep{CooperLWY06}, which is equipped
with linear session-typed channels~\citep{lindley2017lfst} and effect
handlers with multi-shot continuations~\citep{HillerstromLA20}.
We begin by defining a function \lstinline[style=links]$outch$ that
forks a child process and returns an output channel for communicating
with it.
The idea is that we will use a combination of exceptions and
multi-shot continuations to send two integers, rather than an integer
followed by a string, along the endpoint (with session type
\lstinline[style=links]$!Int.!String.End$) returned by
the function \lstinline[style=links]$outch$.
\begin{lstlisting}[style=links]
sig outch : () ~> !Int.!String.End
fun outch() {
  fork(fun(ic) {
    var (i, ic) = receive(ic);     # receive the integer
    var (s, ic) = receive(ic);     # receive the string
    println(intToString(i) ^^ s);  # convert, concat, and print
    close(ic)                      # close the input channel
  })
}
\end{lstlisting}
The primitive \lstinline[style=links]$fork$ creates a child process
and two endpoints of a session-typed channel. One endpoint is passed
to the child process and the other endpoint is returned to the caller.
Here the function returns an output endpoint of type
\lstinline[style=links]$!Int.!String.End$ and the child process is
supplied with an input endpoint of type
\lstinline[style=links]$?Int.?String.End$.
The child receives an integer and a string on the input endpoint, then
prints them out before closing the endpoint.

Now we invoke \lstinline[style=links]$outch$ in a context in which we
exploit the power of multi-shot continuations to return twice and the power of
exceptions to abort the current computation.
\begin{lstlisting}[style=links]
handle({
  var oc = outch();
  var msg = if (do Choose) 42 else 84; # choose an integer message to send
  var oc = send(msg, oc);
  do Fail;                             # this is our exception
  var oc = send("well-typed", oc);
  close(oc)
}) {
  case <Fail> -> ()
  case <Choose => resume> -> resume(true); resume(false)
}
\end{lstlisting}
We handle a computation that performs two operations: 1)
\lstinline[style=links]$Choose : () => Bool$; and 2)
\lstinline[style=links]$Fail : forall a. () => a$.
The handled computation invokes \lstinline[style=links]$outch$,
forking a child process and binding the output endpoint of the
resulting channel to \lstinline[style=links]$oc$. Next, it invokes the
operation \lstinline[style=links]$Choose$ to select between two
possible integer messages, which is sent on the channel. Then, it
performs the \lstinline[style=links]$Fail$ operation, before sending a
string along the channel and closing it.
This is all very well and satisfies the type-checker; however, the
described control flow is not actually what happens, because in fact
the continuation of \lstinline[style=links]$Choose$ is invoked twice and
the continuation of \lstinline[style=links]$Fail$ is never invoked.
The behaviours of \lstinline[style=links]$Fail$ and
\lstinline[style=links]$Choose$ are defined by the corresponding
operation clauses of the handler. For \lstinline[style=links]$Fail$
the captured continuation is discarded (it must be: it is never bound);
for \lstinline[style=links]$Choose$ the continuation is bound to
\lstinline[style=links]$resume$ and invoked twice: first with
\lstinline[style=links]$true$ and then with
\lstinline[style=links]$false$.

Running the program causes a segmentation fault when printing the
received values, as it erroneously attempts to concatenate a string
with an integer. To see why, follow the control flow of the parent
process. It performs \lstinline[style=links]$Choose$, which initially
selects \lstinline[style=links]$42$ and sends it over the channel.
The child process receives this integer and subsequently expects to
receive a string.
Back on the parent process execution is aborted via
\lstinline[style=links]$Fail$, which causes the initial invocation of
\lstinline[style=links]$resume$ to return, leading to the second
invocation of \lstinline[style=links]$resume$, which restores the
aborted context at the point of selecting an integer. Now
\lstinline[style=links]$Choose$ selects \lstinline[style=links]$84$
and sends it over the channel.
The child process receives this second integer, mistakenly treating it
as a string.

In this paper we rule out such soundness bugs by tracking
\emph{control-flow linearity}: a means to statically assure how often
a continuation may be invoked, mediating between linear resources and
effectful operations to ensure that effect handlers cannot violate
linearity constraints on resources.

The main contributions of this paper are:
\begin{itemize}
\item
  We give high-level overview of the main ideas of the paper through a
  series of worked examples that illustrate the difficulties of
  combining effect handlers with linearity, how they can be resolved
  by tracking control-flow linearity, and how the approach can be
  refined using qualified types~\citep{jones94} (Section~\ref{sec:overview}).
\item
  We introduce \Calc (pronounced ``F-eff-pop''), a System F-style core
  calculus equipped with linear types, an effect type system, and
  effect handlers (Section~\ref{sec:feffpop}). We prove syntactic type
  soundness and a semantic linear safety property.
\item
  Inspired by \Calc we implement control-flow linearity in \Links,
  fixing a long-standing type-soundness bug (Section~\ref{sec:links}).
\item
  Motivated by expressiveness limitations of \Calc we introduce \CalcS
  (pronounced ``Q-eff-pop''), an ML-style core calculus inspired by
  \Quill~\citep{quill} and \Rose~\citep{rose}, based on qualified types
  (Section~\ref{sec:qeffpopsub}). We prove soundness and completeness
  of type inference for \CalcS. Along the way, we identify a semantic
  soundness bug in \Quill and conjecture a fix.
\end{itemize}
Section~\ref{sec:extensions} outlines how control-flow linearity
applies to shallow handlers~\citep{HillerstromL18}.
Section~\ref{sec:related-work} discusses related work and
Section~\ref{sec:conclusion} conclude and discusses future work.

\section{Overview}
\label{sec:overview}

In this section, we give a high-level overview of the main ideas of
the paper by way of a series of examples.
We first compare standard value linearity with non-standard
control-flow linearity, illustrating how the latter may be tracked in
an explicit calculus \Calc (Section~\ref{sec:feffpop}).
For readability we omit uninteresting syntactic artifacts from our
examples.
We show how control-flow linearity allows linear resources and
multi-shot continuations to coexist peacefully.
We then highlight two limitations of \Calc: linear types require
syntactic overhead which harms modularity, and row-polymorphism based
effect types lead to coarse tracking of control-flow linearity.
We exploit qualified types to relax both limitations in an ML-style
calculus \CalcS (Section~\ref{sec:qeffpopsub}).

\subsection{Value Linearity}
\label{sec:value-linearity}

Value linearity classifies the \emph{use} of values: linear
values must be used exactly once whereas unlimited values can be used
zero, one, or multiple times (linear types differ from uniqueness
types, which instead track the number of references to a value).
Equivalently, value linearity characterises whether values contain
linear resources: linear values can contain linear resources whereas
unlimited values cannot.
Conventional linear type systems track value linearity.
\Calc adapts the subkinding-based linear type system of
\Fpop~\citep{fpop}.
The linearity $Y$ of a value type is part of its kind $\kind\Type Y$
and can be either linear $\Lin$ or unlimited $\Unl$.
For example, file handles are linear resources
($\File:\kind\Type\Lin$) and integers are unlimited resources
($\Int:\kind\Type\Unl$).

A linearity annotation on a $\lambda$-abstraction defines the
linearity of the function itself.
Consider the following function $\dec{faithfulWrite}$ which takes a
file handle $f$ and returns another function that takes a string $s$,
faithfully writes $s$ to $f$, and then closes the file handle.
\[\ba{l@{\;}c@{\;}l}
  \dec{faithfulWrite} &:& \File \to^\Unl (\String \to^\Lin \TUnit) \\
  \dec{faithfulWrite} &=& \lambda^\Unl f . (
    \lambda^\Lin s . \Let\, f' \revto \WriteFile\, (s, f) \In \CloseFile\, f')
\ea\]
The outer unlimited function ($\to^\Unl$) %
yields a linear function ($\to^\Lin$) expecting a string.
The linear type system dictates that the inner function is linear as
it captures the linear file handle $f$.

One important property of value linearity is that unlimited value
types can be treated as linear value types, as it is always safe to
use unlimited values (which contain no linear resources) just once.
This property is embodied by the subkinding relation
$\vdash\kind\Type\Unl\le\kind\Type\Lin$ in \Calc.
For instance, consider the polymorphic identity function.
\[\ba{l@{\;}c@{\;}l}
  \dec{id} &:& \forall \mu^{\Row}\,\alpha^{\kind\Type\Lin} .\, \alpha \to^\Unl \alpha \eff \{\mu\} \\
  \dec{id} &=& \Lambda \mu^{\Row}\,\alpha^{\kind\Type\Lin} .\, \lambda^\Unl x .\, x
\ea\]
The return type of the function is a computation type $\alpha\eff\{\mu\}$
where $\alpha$ is the linear type of values returned ($x$ is used
exactly once) and $\mu$ is the row of effects performed by the
function.
(We chose to omit the corresponding effect annotations in the
signature of $\dec{faithfulWrite}$ because they are empty, but
henceforth we will write them explicitly.)
Subkinding allows the identity function to be applied to both linear
and unlimited values. It is always sound to use an unlimited value
exactly once. Thus, we have both $\vdash\Int:\kind\Type\Lin$ and
$\vdash\File:\kind\Type\Lin$, and if $R$ is an effect row type:
\[\ba{l@{\;}c@{\;}l}
\dec{id}\;R\;\File &:& \File \to^\Unl \File \eff \{R\} \\
\dec{id}\;R\;\Int &:& \Int \to^\Unl \Int \eff \{R\}
\ea\]

\subsection{Control-Flow Linearity}
\label{sec:control-flow-linearity}

Control-flow linearity tracks how many times control may enter a local
context: a control-flow-linear context must be entered exactly once; a
control-flow-unlimited context may be entered zero, one, or multiple
times.
Equivalently, control-flow linearity characterises whether a
local context captures linear resources: a control-flow-linear context
can capture linear resources; a control-flow-unlimited context cannot.

To better explain control-flow linearity, we first reprise the
soundness problem due to the interaction of linear resources and
multi-shot continuations of \Cref{sec:introduction} via a simpler
example in \Calc.
Consider the following function $\dec{dubiousWrite}_\xmark$, which takes a
file handle and non-deterministically writes $\makestring{A}$ or
$\makestring{B}$ to it depending on the result of $\Choose$.
We ignore control-flow linearity for now.
\[\ba{l@{\;}c@{\;}l@{\;}l}
  \dec{dubiousWrite}_\xmark &:& \File \to^\Unl \TUnit\eff\{\Choose:\TUnit\sto\Bool\} \span\\
  \dec{dubiousWrite}_\xmark &=& \lambda^\Unl f . \span\\
    \span\span\quad&\Let b \revto (\Do\Choose\,\Unit)^{\{\Choose:\TUnit\sto\Bool\}} \In \\
    \span\span&
    \left.\ba{@{}l@{}}
    \Let s \revto \If b \Then \makestring{A} \Else\makestring{B} \In \\
    \Let f' \revto \WriteFile\,(s,f)\In \CloseFile\, f'
    \ea\quad\right\} \text{continuation of $\Choose$}
\ea\]
The $\Do \Choose\,\Unit$ expression invokes operation $\Choose$ with a
unit argument.
\Calc adapts an effect system based on \citeauthor{remy1994type}-style
row polymorphism \citep{LindleyC12, linksrow}.
Effect types in \Calc are rows containing operation labels with their
signatures and ended with potential row variables.
The effect type $\{\Choose : \TUnit\sto\Bool\}$ denotes that
$\dec{dubiousWrite}_\xmark$ may invoke the operation $\Choose$, which
takes a unit and returns a boolean value as indicated by its signature
$\TUnit\sto\Bool$.
The problem arises when we handle $\Choose$ using multi-shot
continuations.
\[
\Let f \revto \OpenFile\,\makestring{C.txt} \In
\Handle (\dec{dubiousWrite}_\xmark\;f)
\With \{\Choose\;\_\;r\mapsto r\;\True\Seq r\;\False\}
\]
The file $\makestring{C.txt}$ is opened and the file handle is bound
to $f$ before $\dec{dubiousWrite}_\xmark\;f$ is handled by an effect
handler that handles the $\Choose$ operation.
In the handler clause, $r$ binds the continuation of $\Choose$, which
expects a parameter of type $\Bool$.
As $r$ is invoked twice (first with $\True$ and then with $\False$),
the file handle $f$ is written and closed twice, which leads to a
runtime error because it is closed before the second write.
The essential problem is that the continuation of $\Choose$ should be
used linearly as it captures the linear file handle $f$, but it is
invoked twice by the effect handler.
Conventional linear type systems cannot detect this kind of error as
they only track value linearity.

Motivated by the observation that only a local context, reified as the
continuation of an operation, may be captured by a multi-shot handler,
we track control-flow linearity at the granularity of operations.
We use the control-flow linearity of an operation to represent the
control-flow linearity of the continuation of the operation.
Control-flow-linear operations can be used in contexts which may
contain linear resources, whereas control-flow-unlimited operations
cannot.
An operation signature $A \sto^Y B$ is annotated with a linearity $Y$
to denote its control-flow linearity.
The $\dec{dubiousWrite}_\xmark$ function can now be rewritten to
correctly track control-flow linearity as follows.
\[\ba{l@{\;}c@{\;}l@{\;}l}
  \dec{dubiousWrite}_\tmark &:& \File \to^\Unl \TUnit\eff\{\Choose:\TUnit\sto^\Lin\Bool\} \span\\
  \dec{dubiousWrite}_\tmark &=& \lambda^\Unl f . \span\\
    \span\span\quad&\Lety^\Lin b \revto (\Do\Choose\,\Unit)^{\{\Choose:\TUnit\sto^\Lin\Bool\}} \In \\
    \span\span&
    \left.\ba{@{}l@{}}
    \Lety^\Lin s \revto \If b \Then \makestring{A} \Else \makestring{B} \In \\
    \Lety^\Lin f' \revto \WriteFile\,(s,f)\In \CloseFile\, f'
    \ea\quad\right\} \text{continuation of $\Choose$}
\ea\]
Now, the type of $\dec{dubiousWrite}_\tmark$ specifies that the
operation $\Choose:\TUnit\sto^\Lin\Bool$ is control-flow linear
(i.e. the continuation of $\Choose$ is linear).
We also annotate let-bindings with linearity information.
In $\Lety^Y x \revto M \In N$, the term $N$ has control-flow linearity
$Y$, and in particular the $\Lin$ annotations on the let-bindings in
$\dec{dubiousWrite}_\tmark$ permit the use of the linear file handle
throughout.

The linear type system of \Calc uses the control-flow linearity of
operations to restrict the use of continuations in handlers, which
ensures that control-flow-linear contexts are entered only once.
For instance, consider the handling of $\dec{dubiousWrite}_\tmark$ with
the same multi-shot handler.
\[
\Let f \revto \OpenFile\,\makestring{C.txt} \In
\Handle (\dec{dubiousWrite}_\tmark\;f)
\With \{\Choose\;\_\;r\mapsto r\;\True\Seq r\;\False\}
\]
This is ill-typed due to the fact that $\Choose$ is control-flow
linear, which means the resumption $r$ has a linear function type,
meaning it must be applied exactly once.

We lift the control-flow linearity of operations to effect row types
and reflect it in their kinds $\kind\Row Y$.
Similar to value linearity, we also have a subkinding relation for
control-flow linearity.
Recall that the control-flow linearity of (the operations in) effect
row types is actually the control-flow linearity of their contexts,
not themselves.
This induces a duality between value linearity and control-flow
linearity paralleling the duality between positive values and negative
continuations.
As a consequence, the subkinding relation for control-flow linearity
is $\vdash\kind\Row\Lin\le\kind\Row\Unl$, the reverse of that for
value linearity.
Intuitively, this says that control-flow-linear operations can be
treated as control-flow-unlimited operations, because it is safe to
use control-flow-linear operations in unlimited contexts.
For example, consider the following function $\dec{tossCoin}$ which
takes a function that returns a boolean and tosses a coin using this
function.
\[\ba{l@{\;}c@{\;}l@{\;}l}
  \dec{tossCoin} &:& \forall \mu^{\kind\Row\Unl} .
    (\TUnit\to^\Unl\Bool\eff\{\mu\}) \to^\Unl \String\eff\{\mu\} \span\\
  \dec{tossCoin} &=& \Lambda \mu^{\kind\Row\Unl} .
    \lambda^\Unl g .
    &\Lety^\Unl\, b \revto g\,\Unit \In \If b \Then \makestring{heads} \Else \makestring{tails}
\ea\]
As no linear resource is used, the effect type of $\dec{tossCoin}$ and
its parameter is given by a control-flow-unlimited row variable
$\mu:\kind\Row\Unl$.
Via subkinding, we can instantiate $\mu$ with operations with either
control-flow linearity.
For instance, suppose we have $\vdash R_1:\kind\Row\Unl$ and $\vdash
R_2:\kind\Row\Lin$ for $R_1 = \Choose:()\sto^\Unl\Bool$ and $R_2 =
\Choose:()\sto^\Lin\Bool$, then:
\[\ba{l}%
\dec{tossCoin}\; R_1\; (\lambda^\Unl \Unit .  (\Do\,\Choose\,\Unit)^{\{R_1\}})
: \String\eff\{R_1\} \\
\dec{tossCoin}\; R_2\; (\lambda^\Unl \Unit .  (\Do\,\Choose\,\Unit)^{\{R_2\}})
: \String\eff\{R_2\} \\
\ea\]

The subkinding relation of control-flow linearity only
influences how operations are used, not how they are handled.
We can \emph{use} control-flow-linear operations as
control-flow-unlimited operations (i.e., use them in unlimited
contexts), but this does not imply that we can \emph{handle}
control-flow-linear operations as control-flow-unlimited operations
(i.e., handle them by resuming any number of times).
Our linear type system does not allow control-flow-linear operations
to be handled by multi-shot handlers despite the subkinding relation
$\kind\Row\Lin\le\kind\Row\Unl$.
This is because when handling, we directly look at the control-flow
linearity on operation signatures instead of their kinds, where no
$\sto^\Lin$ can be upcast to $\sto^\Unl$.
This can be seen more clearly from the typing rules in
\Cref{sec:feffpop-typing}.
We formally state the soundness of \Calc in
\Cref{sec:metatheory,sec:feffpop-linear-semantics}.

\subsection{Qualified Linear Types}
\label{sec:qualified-linear-types}

As we have seen from the examples so far, \Calc requires linearity
annotations on $\lambda$-abstractions and let-bindings.
Though this can suffice for an explicit calculus, it can prove
cumbersome for practical programming languages and curtail the
modularity of programs.
Unfortunately, we cannot entirely overcome these limitations by
introducing subsumption relations between types, or using
Hindley-Milner type inference to infer them.
The reason is that there are inner dependencies on the linearity.
For instance, consider the following function $\dec{verboseId}$ which
is almost the same as the function $\dec{id}$ in
\Cref{sec:value-linearity} but outputs the log message $\makestring{id
  is called}$ using the operation $\Print:\String\sto\TUnit$ before
returning.
\[\ba{l@{\;}c@{\;}l}
  \dec{verboseId} &:& \forall \mu^{\kind\Row {Y_1}}\,\alpha^{\kind\Type{Y_2}} .\,
    \alpha \to^{Y_0} \alpha \eff \{\Print : \String\sto^{Y_3}\TUnit \lsep\mu\} \\
  \dec{verboseId} &=& \Lambda \mu^{\kind\Row {Y_1}}\,\alpha^{\kind\Type{Y_2}} .\,
    \lambda^{Y_0} x .\, \Lety^{Y_4}\, \Unit \revto \Do \Print\; \makestring{id is called} \In x
\ea\]

Depending on different choices of $Y_0$, $Y_1$, $Y_2$, $Y_3$, and
$Y_4$, we can give ten well typed variations of $\dec{verboseId}$.
Their types are shown as follows, omitting primary kinds and
signatures for readability.
\begin{equation*}\small
  \begin{split}
    \forall \mu^{\Unl} \, \alpha^{\Unl} . \alpha \to^\Unl \alpha \eff \{\Print : \Unl\lsep \mu\} \\
    \forall \mu^{\Unl} \, \alpha^{\Unl} . \alpha \to^\Unl \alpha \eff \{\Print : \Lin\lsep \mu\} \\
    \forall \mu^{\Lin} \, \alpha^{\Unl} . \alpha \to^\Unl \alpha \eff \{\Print : \Unl\lsep \mu\} \\
    \forall \mu^{\Lin} \, \alpha^{\Unl} . \alpha \to^\Unl \alpha \eff \{\Print : \Lin\lsep \mu\} \\
    \forall \mu^{\Lin} \, \alpha^{\Lin} . \alpha \to^\Unl \alpha \eff \{\Print : \Lin\lsep \mu\} \\
  \end{split}
  \quad \quad
  \begin{split}
    \forall \mu^{\Unl} \, \alpha^{\Unl} . \alpha \to^\Lin \alpha \eff \{\Print : \Unl\lsep \mu\} \\
    \forall \mu^{\Unl} \, \alpha^{\Unl} . \alpha \to^\Lin \alpha \eff \{\Print : \Lin\lsep \mu\} \\
    \forall \mu^{\Lin} \, \alpha^{\Unl} . \alpha \to^\Lin \alpha \eff \{\Print : \Unl\lsep \mu\} \\
    \forall \mu^{\Lin} \, \alpha^{\Unl} . \alpha \to^\Lin \alpha \eff \{\Print : \Lin\lsep \mu\} \\
    \forall \mu^{\Lin} \, \alpha^{\Lin} . \alpha \to^\Lin \alpha \eff \{\Print : \Lin\lsep \mu\} \\
  \end{split}
\end{equation*}
The key observation is that the control-flow linearity of the
operation $\Print$ (as well as the row variable $\mu$) depends on the
value linearity of the parameter type $\alpha$, because the parameter
$x$ is used in the continuation of $\Print$.
To express this kind of dependency, we use a linear type system based
on qualified types inspired by \Quill~\citep{quill}.
In the ML-style calculus \CalcS with qualified linear types,
$\dec{verboseId}$ can be written and ascribed a principal type as
follows.
\[\ba{l@{\;}c@{\;}l}
  \dec{verboseId} &:& \forall \alpha\, \mu\, \phi\, \phi' .\,
    (\alpha\linle\phi) \qto
    \alpha \to^{\phi'} \alpha \eff \{\Print : \phi ; \mu\} \\
  \dec{verboseId} &=& \lambda x .\, \Do \Print\; \makestring{42}\, ;\, x
\ea\]
The linearity variables $\phi$ and $\phi'$ quantify over $\Lin$ and $\Unl$.
We do not use kinds to represent linearity of type variables; instead,
all linearity information is represented using predicates of the form
$\tau\linle\tau'$, where $\tau$ is a value type, row type or linearity
type ($\Lin,\Unl$ or a linearity variable).
The type scheme of $\dec{verboseId}$ is extended with the predicate
$\alpha\linle\phi$, meaning that the value linearity of $\alpha$ is
less than that of $\phi$, which is the control-flow linearity of
$\Print$.
This type scheme succinctly expresses all ten possibilities listed
above.
The type inference algorithm of \CalcS
(Section~\ref{sec:type-inference}) infers all such linearity
dependency constraints without the need for any type, effect, or
linearity annotations.

\subsection{Qualified Effect Types}
\label{sec:qualified-effect-types}

In addition to the syntactic overhead of linear types, the row-based
effect system of \Calc is also not entirely satisfying when tracking
control-flow linearity.
Row-based effect systems
have demonstrated their practicality in research languages such as
\Links~\citep{linksrow}, \Koka~\citep{koka}, and \Frank~\citep{frank}.
In such effect systems, sequenced computations must have the same
effect type, which can be smoothly realised by unification in systems
based on Hindley-Milner type inference.
However, though fixing effect types between sequenced computations is
often acceptable, it does introduce some imprecision, and this can
become more pronounced when control-flow linearity is brought into the
mix.

To see the problem concretely in \Calc, consider the following
function $\dec{verboseClose}$ which takes a file handle, reads a
string using the operation $\Get:\TUnit\sto\String$, closes the file
handle, and outputs the string using the operation
$\Print:\String\sto\TUnit$.
\[\ba{l@{\;}c@{\;}l@{\;}l}
  \dec{verboseClose} &:& \File \to^\Unl \TUnit\eff\{R\} \span\\
  \dec{verboseClose} &=& \lambda^\Unl f .
  &\Lety^\Lin s\revto (\Do \Get\, \TUnit)^{\{R_1\}}
  \In \Lety^\Unl \Unit\revto \CloseFile\, f
  \In (\Do \Print\, s)^{\{R_2\}}
\ea\]

Note that the second $\Lety$-binding does not need to be annotated as
linear, because the linear resource $f$ does not appear after it.
The linear resource $f$ also does not appear in the continuation of
$\Print$.
Since $R_1$, $R_2$, and $R$ should be equal in the row-based effect
system of \Calc, omitting the full operation signatures for
simplicity, we could write $R = R_1 = R_2 = \{\Get : \Lin, \Print :
\Unl\}$ in the ideal case.
However, this is actually ill-typed because all operations in $R_1$
should be control-flow linear, as the linear resource $f$ is used in
their continuations.

An intuitive way to relax this limitation of \Calc is to introduce a
trivial subtyping relation on concrete effect row types.
We say $R_1$ is a subtype of $R_2$, if all operation labels in $R_1$
are also in $R_2$ with the same signatures, and when $R_1$ ends with a
row variable, $R_2$ must end with the same row variable.
Then, in the $\dec{verboseClose}$ example, we can write $R_1 = \{\Get
: \Lin\}$, $R_2 = \{\Print : \Unl\}$, and $R =
\{\Get:\Lin,\Print:\Unl\}$, which are safe given that $R_1$ and $R_2$
are both subtypes of $R$.

We call the subtyping relation trivial because it does not allow
subtyping between row variables; an open row $R_1$ is a subtype of
$R_2$ only if $R_2$ contains the same row variable as $R_1$.
For the above $\dec{verboseClose}$ example this works, but for other
functions which make greater use of polymorphism, it can still seem
overly-restrictive.
For instance, consider the following function $\dec{sandwichClose}$
which takes two functions and a file handle, and makes a sandwich
using them.
\[\ba{l@{\;}c@{\;}l@{\;}l}
  \dec{sandwichClose} &:& (\TUnit\to^\Unl\TUnit\eff\{R_1\}, \File, \TUnit\to^\Unl\TUnit\eff\{R_2\})
    \to^\Unl \TUnit\eff\{R\} \span\\
  \dec{sandwichClose} &=& \lambda^\Unl (g, f, h) .
  &\Lety^\Lin \Unit\revto g\,\Unit \In \Lety^\Unl \Unit\revto \CloseFile\, f
  \In h\,\Unit
\ea\]
Using our trivial-subtyping workaround, we require both $R_1$ and
$R_2$ to be subtypes of $R$.
The problem appears when we try to be polymorphic over $R_1$ and
$R_2$.
Because they are subtypes of the same row type $R$, their row
variables must be the same, i.e., we can only write $R_1=R_2=\mu$ in
\Calc.

To support non-trivial subtyping relations between row variables, we
may again use qualified types, this time to express row subtyping
constraints.
In addition to qualified linear types, \CalcS also supports qualified
effect types inspired by \Rose~\citep{rose}.
In \CalcS, the function $\dec{sandwichClose}$ can be given the
following type. Note that here we still choose to fix functions to be
unlimited for readability.

\[\ba{l@{\;}c@{\;}l@{\;}l}
  \dec{sandwichClose} &:& \forall \mu_1\,\mu_2\,\mu.
  (\mu_1\rle\mu, \mu_2\rle\mu, \File\linle\mu_1) \span\\
  &\qto&
  (\TUnit\to^\Unl\TUnit\eff\{\mu_1\}, \File, \TUnit\to^\Unl\TUnit\eff\{\mu_2\})
    \to^\Unl \TUnit\eff\{\mu\} \span\\
  \dec{sandwichClose} &=& \lambda^\Unl (g, f, h) .
  &\Let \Unit\revto g\,\Unit \In \Let \Unit\revto \CloseFile\, f
  \In h\,\Unit
\ea\]
The constraints $\mu_1\rle\mu$ and $\mu_2\rle\mu$ express that
rows $\mu_1$ and $\mu_2$ are contained in $\mu$, and the constraint
$\File\linle\mu_1$ expresses that the value linearity of $\File$ is
less than the control-flow linearity of $\mu_1$, which essentially
means that $\mu_1$ is control-flow linear.
As in \Cref{sec:qualified-linear-types}, the type inference algorithm
of \CalcS infers these row subtyping constraints without the need for
any annotation.
The qualified linear types and qualified effect types of \CalcS are
decidable.
We give a constraint solving algorithm which checks the satisfiability
of both linearity constraints and row constraints in
\Cref{sec:constraint-solving}.

\section{An explicit handler calculus with linear types}
\label{sec:feffpop}

In this section, we present the syntax, type-and-effect system,
operational semantics and metatheory of \Calc, a System F-style
fine-grain call-by-value calculus with linear types and effect
handlers.
\Calc is based on the core language of \Links which adapts the
subkinding-based linear type system of \Fpop~\citep{fpop} and a
row-based effect system~\citep{linksrow,LindleyC12}.
The linear type system and effect system of \Calc are extended to
track control-flow linearity, which addresses the soundness problem
arising from the interference of linear resources and multi-shot
continuations.
We show that \Calc is truly linearity safe by defining a
linearity-aware semantics and proving that no linear resource is
discarded or duplicated during evaluation in the presence of
multi-shot effect handlers.

\subsection{Syntax and Kinding Rules}
\label{sec:syntax-and-kinding}

\Cref{fig:feffpop-syntax} shows the syntax of types, kinds, contexts,
values, and computations of \Calc.
We introduce a syntactic category $Y$ for linearity consisting of
$\Unl$ and $\Lin$, which intuitively means unlimited and linear,
respectively.
The meaning of linearity varies for values and effects; value types
track value linearity, and effect types track control-flow linearity.
Everything relevant to linearity is highlighted in the figure.
The remaining part is a relatively standard fine-grain call-by-value
calculus with effect handlers and row-based effect system
\citep{HillerstromLA20}.

\begin{figure}[htb]
  \begin{syntax}
  \slab{Value types}    &A,B  &::= & \alpha \mid A \to^{\hl{Y}} C
                               \mid  \forall^{\hl{Y}} \alpha^K.C
                               \\
  \slab{Computation types}
                        &C,D  &::= & A \eff E \\
  \slab{Effect types}   &E    &::= & \{R\}\\
  \slab{Row types}      &R    &::= & \ell : P;R \mid \mu \mid \cdot \\
  \slab{Presence types} &P    &::= & \Abs \mid A \sto^{\hl{Y}} B \mid \theta \\
  \slab{Handler types}  &F    &::= & C \hto D \\
  \slab{Types}          &T    &::= & A \mid R \mid P \mid C \mid E \mid F \\
  \slab{Kinds}          &K    &::= & \kind \Type {\hl{Y}}
    \mid \kind {\Row_{\mathcal{L}}} {\hl{Y}}
    \mid \kind \Presence {\hl{Y}} \mid \Effect \mid \Comp \mid \Handler\\
  \slab{Linearity}      &{Y}    &::= & \hlbox{\Unl \mid \Lin} \\
  \slab{Label sets}     &\mathcal{L} &::=& \emptyset \mid \{\ell\} \uplus \mathcal{L}\\
  \slab{Type contexts} &\Gamma   &::=& \cdot \mid \Gamma, x:A \\
  \slab{Kind contexts} &\Delta   &::=& \cdot \mid \Delta, \alpha:K \\
  \slab{Values}        &V,W  &::=  & x
                               \mid  \lambda^{\hl{Y}} x^A . M
                               \mid \Lambda^{\hl{Y}} \alpha^K . M \\
  \slab{Computations}  &M,N  &::= & V\,W \mid V\,T \mid (\Ret V)^E
                                  \mid (\Do \ell\; V)^E \\
                       &     &\mid& \Lety^{\hl Y} x \revto M \In N \mid \Handle M \With H\\
  \slab{Handlers}      &H    &::= & \{ \Ret x \mapsto M \}
                              \mid  \{ \ell \; p \; r \mapsto M \} \uplus H
  \end{syntax}

  \caption{Syntax of Types, Kinds, Contexts, Values and Computations of $\Calc$}
  \label{fig:feffpop-syntax}
\end{figure}

\Calc explicitly distinguishes between value types and computation
types as well as their terms.
Value types include type variables $\alpha$, function types $A\to^Y
C$, and polymorphic types $\forall^Y\alpha^K.C$.
Value terms include value variables $x$, $\lambda$-abstractions $\lambda^Y
x^A.M$, and type abstractions $\Lambda^Y\alpha^K.M$.
Function types, polymorphic types, and abstractions are annotated with
their value linearity $Y$.
In examples we will freely make use of base types and algebraic data
types whose treatment is quite standard.
We elect to allow polymorphic computation types rather than applying
the value restriction.

A computation type $A\eff E$ comprises a result value type $A$ and an
effect type $E$ specifying the operations that the computation might
perform.
Effect types $\{R\}$ are represented by row types $R$.
Each operation label in rows is annotated with a presence type $P$,
which indicates that the label is either absent $\Abs$, present with
signature $A\sto^Y B$, or polymorphic $\theta$ in its presence.
An operation signature $A\sto^Y B$ describes an operation with
parameter of type $A$ that returns a result of type $B$ and whose
control-flow linearity is $Y$.
Row types are either open (ending with a row variable $\mu$) or closed
(ending with $\cdot$, which we often omit).
We identify rows up to reordering of labels and ignore absent labels
in closed row types \citep{remy1994type}.
Handler types $C\hto D$ represent handlers transforming computations
of type $C$ to computations of type $D$.
By convention, we let $\alpha$ range over value type variables, $\mu$
over row type variables, and $\theta$ over presence type variables,
but we also let $\alpha$ range over all over them (e.g. when binding
quantifiers of unspecified kind).

Function application $V\,W$ and type application $V\,T$ are standard.
A computation $(\Ret V)^E$ returns the value $V$.
An operation invocation $(\Do\ell\,V)^E$ invokes the operation $\ell$
with parameter $V$.
They are both annotated with their effect types for deterministic
typing.
Sequencing $\Lety^Y\,x\revto M\In N$ evaluates $M$ and binds its
result to $x$ in $N$.
The linearity $Y$ basically indicates
the control-flow linearity of $N$.
Handling $\Handle M \With H$ handles computation $M$ with handler $H$.
Handlers are given by a return clause $\Ret x\mapsto M$, which binds
the returned value as $x$ in $M$, and a list of operation clauses
$\ell\,p\,r\mapsto M$, which bind the operation parameter to $p$ and
continuation to $r$ in $M$.

We have six kinds $K$, one for each syntactic category of types.
Kinds are parameterised by linearity $Y$.
The kinds of value types $\kind\Type Y$ denote value linearity, and
the kinds of presence types $\kind\Presence Y$ and row types
$\kind{\Row_\mL} Y$ denote control-flow linearity.
The label set $\mL$ tracks the labels that should not appear in a row,
which is used to avoid duplicated labels in rows.
The kinds of effect, computation, and handler types are not annotated
with any linearity information.
Type contexts $\Gamma$ associate value variables with types, and kind
contexts $\Delta$ associate type variables with kinds.

\Cref{fig:kinding} gives the kinding rules.
Linearity-relevant parts are highlighted.
The kinding relation $\Delta \vdash T : K$ states that type $T$ has
kind $K$ in context $\Delta$.
The subkinding relation $\vdash K\le K'$ states that $K$ is a subkind
of $K'$.
We sometimes write simply $\Delta\vdash T: Y$ for
value, row and presence types when the underlying kind is clear.
The kinding rules for effect, computation, and handler types are
standard~\citep{HillerstromLA20} and irrelevant to linearity
(\klab{Effect}, \klab{Comp}, and \klab{Handler}).

\begin{figure}[htb]
  \raggedright
  \boxed{\vdash Y \leq Y'}\ \boxed{\vdash K \leq K'}\hfill
  \begin{mathpar}
      \inferrule*[Lab=\sklab{Lin}]
      { }
      {\hlbox{\vdash \Unl \leq \Lin}}

      \inferrule*[Lab=\sklab{Type}]
      {\hlbox{\vdash Y \leq Y'} }
      {\vdash \kind \Type {\hl{Y}} \leq \kind \Type {\hl{Y'}}}

      \inferrule*[Lab=\sklab{Pres}]
      {\hlbox{\vdash Y' \leq Y} }
      {\vdash \kind \Presence {\hl{Y}} \leq \kind \Presence {\hl{Y'}}}

      \inferrule*[Lab=\sklab{Row}]
      {\hlbox{\vdash Y' \leq Y} }
      {\vdash \kind {\Row_\mathcal{L}} {\hl{Y}} \leq \kind {\Row_\mathcal{L}} {\hl{Y'}}}
  \end{mathpar}

  \boxed{\Delta \vdash T : K}\hfill
  \begin{mathpar}
    \inferrule*[Lab=\klab{TyVar}]
      { }
      {\Delta, \alpha : K \vdash \alpha : K}

    \inferrule*[Lab=\klab{Forall}]
      { \Delta, \alpha : K \vdash C : \Comp}
      {\Delta \vdash \forall^{\hl Y} \alpha^K . \, C : \kind \Type {\hl Y}}

    \inferrule*[Lab=\klab{Fun}]
      { \Delta \vdash A : \kind \Type {Y'} \\\\
        \Delta \vdash C : \Comp  \\
      }
      {\Delta \vdash A \to^{\hl Y} C : \kind \Type {\hl{Y}}}

  \inferrule*[Lab=\klab{Comp}]
  { \Delta \vdash A : \kind \Type Y \\\\
    \Delta \vdash E : \Effect \\
  }
  {\Delta \vdash A \eff E : \Comp}

    \inferrule*[Lab=\klab{Effect}]
      { \Delta \vdash R : {\Row_\emptyset} }
      {\Delta \vdash \{R\} : \Effect}

    \inferrule*[Lab=\klab{Present}]
      { }
      {\Delta \vdash A \sto^{\hl Y} B : \kind \Presence {\hl Y} }

    \inferrule*[Lab=\klab{Absent}]
      { }
      {\Delta \vdash \Abs : \kind \Presence {\hl Y} }

    \inferrule*[Lab=\klab{EmptyRow}]
      { }
      {\Delta \vdash \cdot : \kind {\Row_\mathcal{L}} {\hl Y} }

    \inferrule*[Lab=\klab{ExtendRow}]
      { \Delta \vdash P : \kind \Presence {\hl Y} \\\\
        \Delta \vdash R : \kind {\Row_{\mathcal{L} \uplus \{\ell\}}} {\hl Y}
      }
      {\Delta \vdash \ell : P;R : \kind {\Row_\mathcal{L}} {\hl Y}}

    \inferrule*[Lab=\klab{Handler}]
      { \Delta \vdash C : \Comp \\\\
        \Delta \vdash D : \Comp
      }
      {\Delta \vdash C \hto D : \Handler} %

  \inferrule*[Lab=\klab{Upcast}]
  {\Delta \vdash T : K \\\\
   \hlbox{\vdash K \leq K'} }
  {\Delta \vdash T : K'}
  \end{mathpar}

  \caption{Kinding and Subkinding Rules for \Calc}
  \label{fig:kinding}
\end{figure}

The kind context maintains kinds for variables (\klab{TyVar}).
The value linearity of function and polymorphic types comes
from their annotations (\klab{Forall} and \klab{Fun}).
Base types have their own value linearity, e.g., $\vdash\File:\Lin$
and $\vdash\Int:\Unl$.
The value linearity of (omitted) algebraic datatypes like pair types
$(A,B)$ is lifted from their components; $\vdash (A,B):\Lin$ if either
$\vdash A:\Lin$ or $\vdash B:\Lin$.

As shown in \Cref{sec:value-linearity}, for value linearity, we have a
subkinding relation $\vdash\kind\Type\Unl\le\kind\Type\Lin$ given by
subkinding rules \sklab{Lin} and \sklab{Type}.
This allows us to use unlimited value types as linear value types since
it is always safe to use unlimited values linearly (e.g., the function
$\dec{id}$ in \Cref{sec:value-linearity}).

We track control-flow linearity at the granularity of operations, and
lift it to the kinds of presence types and row types.
Absent labels and empty rows can be given any control-flow linearity
(\klab{Absent} and \klab{EmptyRow}).
The control-flow linearity of present labels comes directly from
operation signatures (\klab{Present}).
The control-flow linearity of row extensions are given by the labels
and remaining rows (\klab{ExtendRow}).

As shown in \Cref{sec:control-flow-linearity}, control-flow linearity
is dual to value linearity in some sense: we have
$\vdash\kind{\Row_\mL}\Lin\le\kind{\Row_\mL}\Unl$ and
$\vdash\kind\Presence\Lin\le\kind\Presence\Unl$ given by subkinding
rules \sklab{Lin}, \sklab{Pres}, and \sklab{Row}.
This allows linear effect rows to be used as unlimited effect rows as
it is always safe to use control-flow-linear operations in unlimited
contexts (e.g., the function $\dec{tossCoin}$ in
\Cref{sec:control-flow-linearity}).

\subsection{Typing Rules}
\label{sec:feffpop-typing}

We define two auxiliary relations in \Cref{fig:lin-split} for typing
rules.
The judgement $\Delta\vdash\Gamma:Y$ states that under kind context
$\Delta$ all types in $\Gamma$ have linearity $Y$.
As the subkinding relation for value linearity holds that
$\kind\Type\Unl\le\kind\Type\Lin$, the relation $\Delta\vdash
\Gamma:\Unl$ guarantees that all variables in $\Gamma$ are unlimited
and the relation $\Delta\vdash\Gamma:\Lin$ is a tautology.
Dually, as the subkinding relation for control-flow linearity holds
that $\kind\Row\Lin\le\kind\Row\Unl$, the relation $\Delta\vdash R :
\Lin$ guarantees that all operations in $R$ are control-flow linear
and the relation $\Delta\vdash R:\Unl$ is a tautology.
The context splitting judgement $\Delta\vdash\Gamma=\Gamma_1+\Gamma_2$
states that under kind context $\Delta$ the type context $\Gamma$ is well
formed and can be split into two contexts $\Gamma_1$ and $\Gamma_2$
such that each linear variable only appears in one of them.
We write $\Delta \vdash \Gamma_1 + \Gamma_2$ when we only care about
splitting results, and write $\Gamma_1 + \Gamma_2$ in typing rules
when the kind context $\Delta$ is clear.

\begin{figure}[htb]
  \raggedright
  \boxed{\Delta \vdash \Gamma : Y}\hfill
  \vspace{-\the\baselineskip}
  \begin{mathpar}
  \inferrule[L-Empty]
  { }
  {\Delta \vdash \cdot : \hl{Y}}

  \inferrule[L-Extend]
  {\Delta \vdash \Gamma : \hl{Y} \\
   \Delta \vdash A : \kind \Type {\hl Y}}
  {\Delta \vdash (\Gamma, x:A) : \hl{Y}}
  \end{mathpar}

  \boxed{\Delta \vdash \Gamma = \Gamma_1 + \Gamma_2}\hfill
  \begin{mathpar}
  \inferrule[C-Empty]
  { }
  {\Delta \vdash \cdot = \cdot + \cdot}

  \inferrule[C-Unl]
  {\Delta \vdash A : \kind \Type {\hl\Unl} \\
   \Delta \vdash \Gamma = \Gamma_1 + \Gamma_2}
  {\Delta \vdash \Gamma, x:A = (\Gamma_1, x:A) + (\Gamma_2, x:A)}
\\
  \inferrule[C-LinLeft]
  {\Delta \vdash A : \kind \Type {\hl\Lin} \\
   \Delta \vdash \Gamma = \Gamma_1 + \Gamma_2}
  {\Delta \vdash \Gamma,x:A = (\Gamma_1, x:A) + \Gamma_2}

  \inferrule[C-LinRight]
  {\Delta \vdash A : \kind \Type {\hl\Lin} \\
   \Delta \vdash \Gamma = \Gamma_1 + \Gamma_2}
  {\Delta \vdash \Gamma,x:A = \Gamma_1 + (\Gamma_2, x:A)}
  \end{mathpar}

  \caption{Linearity of Contexts and Context Splitting}
  \label{fig:lin-split}
\end{figure}

\begin{figure}[htb]

\raggedright
\boxed{\typ{\Delta; \Gamma}{V : A}}
\boxed{\typ{\Delta; \Gamma}{M : C}}
\boxed{\typ{\Delta; \Gamma}{H : C \hto D}}
\hfill

\begin{mathpar}

\inferrule*[Lab=\tylab{Var}]
{\typ{\Delta}{\Gamma : \Unl}}
{\typ{\Delta;\Gamma, x : A}{x : A}}

\inferrule*[Lab=\tylab{Abs}]
{
  \hlbox{\typ{\Delta}{\Gamma : Y}} \\
  \typ{\Delta}{A : \kind\Type {Y'}} \\\\
  \typ{\Delta;\Gamma, x : A}{M : C}
  }
{\typ{\Delta;\Gamma}{\lambda^{\hl{Y}} x^A .\, M : A \to^{\hl{Y}} C}}

  \inferrule*[Lab=\tylab{TAbs}]
  {
    \hlbox{\typ{\Delta}{\Gamma : Y}} \\
    \alpha \notin \ftv{\Gamma} \\\\
    \typ{\Delta,\alpha : K;\Gamma}{M : C}
  }
  {\typ{\Delta;\Gamma}{\Lambda^{\hl{Y}} \alpha^K .\, M : \forall^{\hl{Y}} \alpha^K . \,C}}

  \inferrule*[Lab=\tylab{App}]
    {\typ{\Delta;\Gamma_1}{V : A \to^{\hl{Y}} C} \\\\
     \typ{\Delta;\Gamma_2}{W : A}
    }
    {\typ{\Delta;\Gamma_1 + \Gamma_2}{V\,W : C}}

  \inferrule*[Lab=\tylab{TApp}]
    {\typ{\Delta;\Gamma}{V : \forall^{\hl{Y}} \alpha^K . \, C} \\\\
     \Delta \vdash T : K
    }
    {\typ{\Delta;\Gamma}{V\,T : C[T/\alpha]}}

  \inferrule*[Lab=\tylab{Return}]
    {
      \typ{\Delta;\Gamma}{V : A} \\
      \typ{\Delta}{E : \Effect}
    }
    {\typc{\Delta;\Gamma}{(\Ret V)^E : A}{E}}

  \inferrule*[Lab=\tylab{Do}]
    {
      E = \{\ell : A \sto^{\hl{Y}} B; R\} \\\\
      \typ{\Delta;\Gamma}{V : A} \\
      \typ{\Delta}{E : \Effect}
    }
    {\typc{\Delta;\Gamma}{(\Do \ell \; V)^E : B}{E}}

  \inferrule*[Lab=\tylab{Seq}]
    {\typc{\Delta;\Gamma_1}{M : A}{\{R\}} \\
      \typc{\Delta;\Gamma_2, x : A}{N : B}{\{R\}} \\\\
      \hlbox{\Delta \vdash \Gamma_2 : Y} \\
      \hlbox{\Delta \vdash R : Y} \\
    }
    {\typc{\Delta;\Gamma_1 + \Gamma_2}{\Lety^{\hl Y} x \revto M \In N : B}{{\{R\}}}}

\inferrule*[Lab=\tylab{Handle}]
{
  \typ{\Delta;\Gamma_1}{H : C \hto D} \\
  \typ{\Delta;\Gamma_2}{M : C} \\
}
{\Delta;\Gamma_1 + \Gamma_2 \vdash \Handle M \With H : D}

  \inferrule*[Lab=\tylab{Handler}]
  {
      H = \{\Ret x \mapsto M\} \uplus \{ \ell_i\;p_i\;r_i \mapsto N_i \}_i \\
      C = A \eff \{(\ell_i : A_i \sto^{\hl{Y_i}} B_i)_i; R\} \\
      D = B \eff \{(\ell_i : P)_i; R\}\\\\
      \hlbox{\Delta\vdash \Gamma:\Unl}\\
      \typ{\Delta;\Gamma, x : A}{M : D}\\\\
      [\typ{\Delta;\Gamma, p_i : A_i, r_i : B_i \to^{\hl{Y_i}} D}{N_i : D}]_i
  }
  {\typ{\Delta;\Gamma}{H : C \hto D}}

\end{mathpar}

\caption{Typing Rules for \Calc}
\label{fig:typing}
\end{figure}

The typing rules for values, computations, and handlers are given in
\Cref{fig:typing}. Linearity-relevant parts are highlighted.
The relations $\typ{\Delta;\Gamma}{V:A}$, $\typ{\Delta;\Gamma}{M:C}$,
and $\typ{\Delta;\Gamma}{H:C\hto D}$, state respectively that: value
$V$ has type $A$, computation $M$ has type $C$ and handler $H$ has
type $C\hto D$ in contexts $\Delta$ and $\Gamma$.
As usual, the type contexts and types are well formed under the kind
contexts.

The \tylab{Var} rule requires the remaining context to be unlimited.
The \tylab{Abs} and \tylab{TAbs} rules check the value linearity
of functions and polymorphic computations against that of the context
via the premise $\Delta\vdash\Gamma:Y$.
The typing rules for function application and type application are
standard (\tylab{App} and \tylab{TApp}).
Note that we need to split the context in the \tylab{App} rule to
avoid duplicating linear variables.
The \tylab{Return} rule does not constrain the effects. The
\tylab{Do} rule ensures that the operation $\ell$ and its parameter
$V$ agree with the effect signature $E$.
The \tylab{Handle} rule uses a handler of type $C\hto D$ to handle a
computation of type $C$.

The \tylab{Handler} rule checks that (deep) handlers must not use any
linear variables via the premise $\Delta\vdash\Gamma:\Unl$ because
they are recursively applied during evaluation. 
More importantly, it connects the control-flow linearity of operations
with the value linearity of resumption functions.
In the typing judgement of each operation clause $\ell_i : A_i
\sto^{Y_i} B_i$, the continuation $r_i$ is given the value linearity
$Y_i$, which is exactly the control-flow linearity of $\ell_i$ that
restricts the use of $\ell_i$'s continuation.
Concretely, when $Y_i = \Lin$, the continuation of $\ell_i$ may use
some linear resources.
Making $r_i$ linear guarantees that they are used exactly once.
When $Y_i = \Unl$, the continuation of $\ell_i$ must not use any
linear resources and $r_i$ is unlimited.
Note that the subkinding relation $\kind\Row\Lin \le \kind\Row\Unl$
does not influence the handling behaviour, because the \tylab{Handler}
rule uses the linearity annotations on operation signatures.

The \tylab{Seq} rule for sequencing is the most important rule for
tracking control-flow linearity, because this is the primary source of
sequential control flow in a fine-grain call-by-value calculus.
Though handling is another source of sequential control flow, deep
handlers are unlimited and cannot influence control-flow linearity.
We will discuss the extension of shallow handlers which may capture
linear resources and influence control-flow linearity in
\Cref{sec:extensions}.

Remember that for $\Lety^Y x \revto M \In N$, the linearity annotation
$Y$ indicates the control-flow linearity of $N$ which determines
how many times the control can enter $N$.
Concretely, when $Y = \Lin$, $N$ may use some linear variables bound
outside ($\Delta\vdash\Gamma_2:\Lin$), and all operations in $M$
should be control-flow linear ($\Gamma\vdash R:\Lin$); when $Y =
\Unl$, $N$ cannot use any linear variables from the context
($\Delta\vdash\Gamma_2:\Unl$), and operations in $M$ have no
restriction on their control-flow linearity ($\Delta\vdash R:\Unl$).
The $\dec{dubiousWrite}_\tmark$ in \Cref{sec:control-flow-linearity}
is an example.
Note that technically, the third sequencing $\Lety^\Lin f'\revto
\WriteFile\,(s,f)\In\CloseFile\,f'$ can be changed to $\Lety^\Unl$
because no linear variable bound outside is used by the context
$\Lety\, f'\revto \_\In\CloseFile\,f'$.

As we observed by the function $\dec{verboseClose}$ in
\Cref{sec:qualified-effect-types}, the fact that the \tylab{Seq} rule
requires the $M$ and $N$ to have the same effect type is too
restrictive for tracking control-flow linearity.
We can improve it by using a trivial subtyping relation between effect
types as follows.
\begin{mathpar}
  \inferrule*[Lab=\tylab{SeqSub}]
    {\typc{\Delta;\Gamma_1}{M : A}{\{R_1\}} \\
     \typc{\Delta;\Gamma_2, x : A}{N : B}{\{R_2\}} \\\\
     \hlbox{\typ{\Delta}{\Gamma_2 : Y}} \\
     \hlbox{\Delta \vdash R_1 : Y} \\
     \hlbox{\Delta \vdash R_1 \subtype R : K} \\
     \hlbox{\Delta \vdash R_2 \subtype R : K}
    }
    {\typc{\Delta;\Gamma_1 + \Gamma_2}{\Lety^Y x \revto M \In N : B}{{\{R\}}}}
\end{mathpar}
The trivial subtyping relation on effect row types are shown in
\Cref{fig:row-subtyping}.
\begin{figure}[tbp]
  \raggedright
  {\boxed{\Delta \vdash R \subtype R' : K}}\hfill
  \begin{mathpar}
      \inferrule
      {\Delta \vdash R : K}
      {\Delta\vdash R \subtype R : K}

      \inferrule
      {\Delta\vdash R_1 \subtype R_2:K \\ \Delta\vdash R_2 \subtype R_3:K}
      {\Delta\vdash R_1 \subtype R_3:K}

      \inferrule
      {\Delta\vdash \mu : K}
      {\Delta\vdash \cdot \subtype \mu : K}

      \inferrule
      { \Delta \vdash P : \kind \Presence {Y} \\\\
        \Delta\vdash R_1 \subtype R_2 : \kind {\Row_{\mL \uplus \{\ell\}}} {Y}}
      {\Delta\vdash \ell:\Abs ; R_1 \subtype \ell:P ; R_2 : \kind{\Row_\mL}{Y}}

      \inferrule
      { \Delta \vdash P : \kind \Presence {Y} \\\\
        \Delta\vdash R_1 \subtype R_2 : \kind {\Row_{\mL \uplus \{\ell\}}} {Y}}
      {\Delta\vdash \ell:P ; R_1 \subtype \ell:P ; R_2 : \kind{\Row_\mL}{Y}}

  \end{mathpar}
  \caption{Trivial Subtyping for Effect Row Types}
  \label{fig:row-subtyping}
\end{figure}
The judgement $\Delta \vdash R\subtype R' : K$ makes it explicit that
$R$ and $R'$ are well kinded and can be given kind $K$ under kind
context $\Delta$. It simply requires that all operation labels with
their signatures and row variable in $R$ must also appear in $R'$.
This subtyping relation does not allow non-trivial subtyping between
row variables.
We consider a more expressive alternative using qualified types in
\Cref{sec:qeffpopsub}.

\subsection{Operational Semantics}
\label{sec:feffpop-semantics}

\begin{figure}[htb]
  \flushleft
  \begin{reductions}
  \semlab{App}   & (\lambda^Y x^A.M)\,V &\reducesto& M[V/x] \\
  \semlab{TApp} & (\Lambda^Y \alpha^K.M)\,T &\reducesto& M[T/\alpha] \\
  \semlab{Seq} &
    \Lety^Y x \revto (\Ret V)^E \In N &\reducesto& N[V/x] \\
  \semlab{Ret} &
    \Handle (\Ret V)^E \With H &\reducesto& N[V/x], \hfill \text{where }
      (\Ret x \mapsto N) \in H \\
  \semlab{Op} &
    \Handle \EC[(\Do \ell \; V)^E] \With H
      &\reducesto& N[V/p, (\lambda^Y y^B.\Handle \EC[(\Ret y)^E] \With H)/r],\\
  \multicolumn{4}{@{}r@{}}{
        \text{ where } \ell \notin \BL{\EC},
        (\ell \; p \; r \mapsto N) \in H, \text{ and }
        (\ell:A\to^Y B) \in E
  } \\
  \semlab{Lift} &
    \EC[M] &\reducesto& \EC[N],  \hfill\text{if } M \reducesto N \\
  \end{reductions}
  \begin{syntax}
  \slab{Evaluation contexts} &  \EC &::= & [~] \mid \Lety^Y x \revto \EC \In N \mid
                                           \Handle \EC \With H \\ %
  \end{syntax}
\[\ba{@{}c@{}}
\BL{[~]}                        = \emptyset           \qquad
\BL{\Lety^Y x \revto \EC \In N} = \BL{\EC}               \qquad
\BL{\Handle \EC \With H}     = \BL{\EC} \cup \dom{H}
\ea\]
  \caption{Small-step Operational Semantics of \Calc}
  \label{fig:feffpop-small-step}
\end{figure}

\Cref{fig:feffpop-small-step} gives a standard small-step operational
semantics for \Calc~\citep{HillerstromLA20}.
It is clear from the definition of evaluation contexts that
let-binding and handling are indeed the only two constructs that
influence the control flow. The function $\BL{-}$ computes the set of
\emph{bound operation labels} in an evaluation context $\EC$, i.e. the
operation labels for which a suitable handler has been installed. The
purpose of this function is to ensure that any operation invocation
($\Do~\ell~V$) is always handled by the innermost suitable handler.

\subsection{Metatheory}
\label{sec:metatheory}

We now prove a type soundness result for \Calc. First we define normal
forms of computations.

\begin{definition}[Computation Normal Forms]
  \label{def:normal-form}
  We say a computation $M$ is in a normal form with respect to $E$, if
  it is either of the form $M = (\Ret V)^{E'}$ or $M = \EC[(\Do \ell\;
  V)^{E'}]$ for $\ell\in E$ and $\ell \notin \BL{\EC}$.
\end{definition}

Syntactic type soundness of \Calc relies on progress and subject
reduction.
The proofs can be found in
\Cref{app:proof-progress,app:proof-subject-reduction}.

\begin{restatable}[Progress]{theorem}{progress}
  \label{thm:progress}
  If $\typc{}{M : A}{E}$, then either there exists $N$ such that $M
  \reducesto N$ or $M$ is in a normal form with respect to $E$.
\end{restatable}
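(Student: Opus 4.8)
The plan is to proceed by induction on the derivation of $\typc{}{M : A}{E}$, using a standard canonical forms lemma for closed values as the only auxiliary ingredient. The canonical forms lemma states: if $\vdash V : A \to^{Y} C$ then $V = \lambda^{Y} x^{A} . M'$ for some $M'$, and if $\vdash V : \forall^{Y} \alpha^{K} . C$ then $V = \Lambda^{Y} \alpha^{K} . M'$ for some $M'$. Both follow by inspecting the value typing rules: only \tylab{Abs} (resp.\ \tylab{TAbs}) concludes with an arrow (resp.\ $\forall$) type, and \tylab{Var} cannot apply in the empty context. Since $M$ is a computation, the last rule in its derivation is one of \tylab{App}, \tylab{TApp}, \tylab{Return}, \tylab{Do}, \tylab{Seq} (or its refinement \tylab{SeqSub}), or \tylab{Handle}; no substitution or weakening lemma is needed because progress never goes under a binder.

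The base cases are immediate. For \tylab{Return}, $M = (\Ret V)^{E}$ is already a normal form. For \tylab{Do}, $M = (\Do \ell\, V)^{E}$ with $\ell$ present in $E$, which is $\EC[(\Do \ell\, V)^{E}]$ for $\EC = [~]$ with $\ell \notin \BL{[~]} = \emptyset$, again a normal form. For \tylab{App} and \tylab{TApp} the operator is a closed value, so canonical forms forces it into the shape demanded by \semlab{App} (resp.\ \semlab{TApp}) and $M$ reduces.

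The interesting cases are \tylab{Seq} (and \tylab{SeqSub}) and \tylab{Handle}, where I would apply the induction hypothesis to the subcomputation under evaluation. For $M = \Lety^{Y} x \revto M_1 \In N$ with $M_1 : A \eff \{R\}$: if $M_1 \reducesto M_1'$ then $M \reducesto \Lety^{Y} x \revto M_1' \In N$ by \semlab{Lift} with $\EC = \Lety^{Y} x \revto [~] \In N$; if $M_1 = (\Ret V)^{E'}$ then $M$ reduces by \semlab{Seq}; and if $M_1 = \EC'[(\Do \ell\, V)^{E'}]$ with $\ell$ present in $\{R\}$ and $\ell \notin \BL{\EC'}$, then $M = \EC[(\Do \ell\, V)^{E'}]$ for $\EC = \Lety^{Y} x \revto \EC' \In N$, and since $\BL{\EC} = \BL{\EC'}$ and $\{R\} = E$, $M$ is a normal form with respect to $E$. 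For \tylab{SeqSub} the argument is the same except $M_1$ has effect $\{R_1\}$ with $R_1 \subtype R$; reading off \Cref{fig:row-subtyping}, every label present in $R_1$ is present in $R$, so $\ell \in \{R\}$ still holds. For $M = \Handle M_1 \With H$ with $M_1 : A \eff \{(\ell_i : A_i \sto^{Y_i} B_i)_i; R\}$ and $M : B \eff \{(\ell_i : P)_i; R\}$: if $M_1 \reducesto M_1'$ use \semlab{Lift}; if $M_1 = (\Ret V)^{E'}$ use \semlab{Ret}, whose return clause exists because $H$ is typed by \tylab{Handler}; and if $M_1 = \EC'[(\Do \ell\, V)^{E'}]$ with $\ell$ present in $M_1$'s effect row and $\ell \notin \BL{\EC'}$, then either $\ell = \ell_i$ for some $i$ --- in which case \tylab{Handler} supplies the operation clause $\ell_i\,p_i\,r_i \mapsto N_i$ and \tylab{Do} supplies a signature for $\ell$ inside $E'$, so \semlab{Op} applies --- or $\ell$ is present in $R$ and differs from every $\ell_i$, in which case $M = \EC[(\Do \ell\, V)^{E'}]$ for $\EC = \Handle \EC' \With H$ with $\BL{\EC} = \BL{\EC'} \cup \{\ell_i\}_i$, so $\ell \notin \BL{\EC}$ and $\ell$ is present in $E$, making $M$ a normal form with respect to $E$.

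I expect the main obstacle to be the bookkeeping in the \tylab{Handle} case: one must line up the side conditions of \semlab{Op} (existence of the matching handler clause, and of the operation's signature inside the annotation $E'$) with exactly what the \tylab{Handler} and \tylab{Do} rules provide, and compute $\BL{-}$ carefully so that an operation not handled by $H$ genuinely escapes as a normal form with respect to the ambient effect rather than being misidentified as handled. The remaining ingredients --- canonical forms, the redex cases, and the \tylab{SeqSub} refinement --- are routine.
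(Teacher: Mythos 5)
Your proposal is correct and follows essentially the same route as the paper's proof: induction on the typing derivation, a canonical forms lemma for the application cases, and a case analysis on the subcomputation in the \tylab{Seq} and \tylab{Handle} cases that dispatches between \semlab{Lift}, the return reductions, and propagation of unhandled operations as normal forms. Your extra attention to \tylab{SeqSub} and the $\BL{-}$ bookkeeping is a slightly more careful treatment of details the paper's proof leaves implicit, but it is not a different argument.
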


\begin{restatable}[Subject reduction]{theorem}{preservation}
  \label{thm:preservation}
  If $\typ{\Delta; \Gamma}{M:C}$ and $M \reducesto N$, then
  $\typ{\Delta;\Gamma}{N:C}$.
\end{restatable}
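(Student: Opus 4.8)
The plan is to induct on the derivation of $M\reducesto N$ with a case split on the last rule, after setting up the usual infrastructure. Term typing in \Calc is effectively syntax-directed — each value/computation form matches a unique rule, the only exception being that a $\Lety$ may be typed by \tylab{Seq} or by \tylab{SeqSub} — so a generation (inversion) lemma for each form is immediate. I would then establish a value substitution lemma (if $\typ{\Delta;\Gamma_1,x:A}{M:C}$, $\typ{\Delta;\Gamma_2}{V:A}$ and $\Gamma_1+\Gamma_2$ is defined, then $\typ{\Delta;\Gamma_1+\Gamma_2}{M[V/x]:C}$), a type substitution lemma (if $\typ{\Delta,\alpha:K;\Gamma}{M:C}$ and $\Delta\vdash T:K$ then $\typ{\Delta;\Gamma[T/\alpha]}{M[T/\alpha]:C[T/\alpha]}$), and two structural facts: weakening by unlimited variables, and $\Gamma+\Gamma=\Gamma$ whenever $\Delta\vdash\Gamma:\Unl$. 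The substitution lemmas are where the linearity bookkeeping of \Cref{fig:lin-split} must be threaded carefully — a linear $x$ occurs exactly once, so $V$'s context is consumed once; an unlimited $x$ forces that context to be unlimited and it may then be duplicated — but this part is routine.

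Given this, the \semlab{App}, \semlab{TApp}, \semlab{Seq} and \semlab{Ret} cases all follow the same recipe: invert the typing of the redex to recover the typing of the body and of the value(s) being substituted, observe that the contexts split precisely as the relevant substitution lemma requires, and conclude. For \semlab{Ret} one additionally inverts \tylab{Handle} and \tylab{Handler} to obtain $\typ{\Delta;\Gamma_1,x:A}{N:D}$ with $\Delta\vdash\Gamma_1:\Unl$. The \semlab{Lift} case is an induction on the structure of $\EC$ that uses a decomposition lemma: from a derivation of $\typ{\Delta;\Gamma}{\EC[M]:C}$ one extracts a split $\Gamma=\Gamma_{\mathrm{h}}+\Gamma_{\mathrm{c}}$ and a computation type $C_0$ with $\typ{\Delta;\Gamma_{\mathrm{h}}}{M:C_0}$, such that every $M'$ and $\Gamma'$ with $\typ{\Delta;\Gamma'}{M':C_0}$ satisfy $\typ{\Delta;\Gamma'+\Gamma_{\mathrm{c}}}{\EC[M']:C}$; applying the induction hypothesis to $M\reducesto N$ at $C_0$ and re-plugging closes the case.

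The substantive case is \semlab{Op}, and it is exactly here that control-flow linearity does its work. We have $\Handle\EC[(\Do\ell\,V)^E]\With H\reducesto N[V/p,\kappa/r]$ with $\kappa=\lambda^{Y}y^B.\,\Handle\EC[(\Ret y)^E]\With H$, where $\ell\notin\BL{\EC}$, $(\ell\;p\;r\mapsto N)\in H$ and $\ell:A\sto^{Y}B\in E$; let $C$ be the type of the whole term. Inverting \tylab{Handle} gives $\Gamma=\Gamma_1+\Gamma_2$, a computation type $C_H$, $\typ{\Delta;\Gamma_1}{H:C_H\hto C}$ and $\typ{\Delta;\Gamma_2}{\EC[(\Do\ell\,V)^E]:C_H}$; inverting \tylab{Handler} gives $\Delta\vdash\Gamma_1:\Unl$ and, for the clause handling $\ell$, $\typ{\Delta;\Gamma_1,p:A,r:B\to^{Y}C}{N:C}$. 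Applying the decomposition lemma to $\EC[(\Do\ell\,V)^E]$ and inverting \tylab{Do} yields $\Gamma_2=\Gamma_V+\Gamma_{\mathrm{c}}$, $\typ{\Delta;\Gamma_V}{V:A}$ and hole type $C_0=B\eff E$. The key point — which I would fold into the decomposition lemma as an extra conclusion — is that because $\ell\notin\BL{\EC}$, the label $\ell$ together with its control-flow linearity $Y$ is threaded unchanged through every $\Lety$ and $\Handle$ node of $\EC$ (the \tylab{Seq}/\tylab{SeqSub} rules force matching effect rows between a $\Lety$'s two subcomputations, and the nested handlers inside $\EC$ do not mention $\ell$); reading off the side-condition $\Delta\vdash R:Y'$ of each such \tylab{Seq} node, where $R$ contains $\ell$ at control-flow linearity $Y$ and hence $Y'\le Y$ by kinding, one obtains $\Delta\vdash\Gamma_{\mathrm{c}}:Y$. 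Concretely, if $Y=\Unl$ then every enclosing $\Lety$ must be a $\Lety^\Unl$, forcing $\Gamma_{\mathrm{c}}$ unlimited, which is exactly what is needed to form $\lambda^\Unl$; if $Y=\Lin$ there is nothing to check. Plugging $(\Ret y)^E$ (of type $C_0$) back into $\EC$ and re-wrapping with $H$ then gives $\typ{\Delta;\Gamma_{\mathrm{c}}+\Gamma_1,y:B}{\Handle\EC[(\Ret y)^E]\With H:C}$, whence, using $\Delta\vdash\Gamma_{\mathrm{c}}:Y$ and $\Delta\vdash\Gamma_1:\Unl$, \tylab{Abs} gives $\typ{\Delta;\Gamma_{\mathrm{c}}+\Gamma_1}{\kappa:B\to^{Y}C}$. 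Finally, substituting $V$ for $p$ and $\kappa$ for $r$ in $N$ — the duplication of the unlimited $\Gamma_1$ being absorbed by $\Gamma_1+\Gamma_1=\Gamma_1$ when $r$ is used more than once — yields $\typ{\Delta;\Gamma_1+\Gamma_V+\Gamma_{\mathrm{c}}}{N[V/p,\kappa/r]:C}$, and $\Gamma_1+\Gamma_V+\Gamma_{\mathrm{c}}=\Gamma$, as required.

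I expect the main obstacle to be stating the decomposition lemma strongly enough that one induction over $\EC$ yields both the re-plugging property and the linearity-of-captured-context property; the latter is the formal content of the slogan that a context capturable by a multi-shot handler through a control-flow-unlimited operation holds no linear resources. The remaining work is standard linear-calculus bookkeeping, modulo keeping track of how contexts are split and recombined.
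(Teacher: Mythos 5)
Your proposal is correct and follows essentially the same route as the paper: inversion plus linearity-indexed substitution lemmas for the $\beta$-like cases, and, for \semlab{Op}, establishing that the context captured through an operation of control-flow linearity $Y$ satisfies $\Delta\vdash\Gamma_{\mathrm{c}}:Y$ (forced by the $\Lety^{Y'}$ annotations via the kinding of the effect row containing $\ell$), so that the reified continuation can be typed at $B\to^{Y}D$ and substituted for $r$. The only difference is packaging: the paper inducts on the typing derivation and factors the captured-context property out as Theorem~\ref{thm:unl-is-unl}(2), whereas you fold it into an explicit evaluation-context decomposition lemma — same content either way.
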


We now show that our tracking of value linearity and control-flow
linearity in the type system is sound, by proving that linear
variables never appear in terms that are claimed to be unlimited.
In \Calc, a term is claimed to be unlimited if it appears in an
unlimited value, a control-flow-unlimited context, or a deep handler.
The following theorem covers all three of these cases.

\begin{restatable}[Unlimited is unlimited]{theorem}{unlisunl}
  \label{thm:unl-is-unl}~
  \begin{enumerate}[label=\arabic*.]
    \item Unlimited values are unlimited: if
    $\typ{\Delta;\Gamma}{V:A}$ and $\Delta\vdash A:\Unl$, then
    $\Delta\vdash\Gamma : \Unl$.
    \label{item:unl-is-unl-a}
    \item Unlimited continuations are unlimited: if
    $\typ{\Delta;\Gamma}{\EC[(\Do \ell\; V)^E] : C}$ for $E =
    {\{\ell:A\sto^\Unl B\lsep R\}}$ and $\ell \notin \BL{\EC}$, then
    there exists $\Delta \vdash \Gamma = \Gamma_1 + \Gamma_2$ such
    that $\Delta \vdash \Gamma_1 : \Unl$ and
    $\typ{\Delta;\Gamma_1,y:B}{\EC[(\Ret y)^E] : C}$.
    \item Deep handlers are unlimited: if $\typ{\Delta;\Gamma}{H : C
    \hto D}$, then $\Delta\vdash\Gamma:\Unl$.
  \end{enumerate}
\end{restatable}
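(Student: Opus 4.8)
The plan is to prove all three parts simultaneously by a single mutual induction on typing derivations, since the three statements are entangled: part~1 refers to arbitrary values (including $\lambda$-abstractions whose bodies are computations), part~2 refers to computations built with evaluation contexts, and part~3 refers to handlers whose clauses are typed computations. Concretely I would set up the induction on the structure of the derivation of the typing judgement appearing in the hypothesis of whichever part we are proving, appealing to the other parts (at strictly smaller derivations) as needed.

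For part~1 the cases are driven by the last typing rule. For \tylab{Var}, the hypothesis $\Delta\vdash\Gamma:\Unl$ is immediate from the rule's premise together with $\Delta\vdash A:\Unl$. For \tylab{Abs} and \tylab{TAbs}, the function/polymorphic type $A\to^Y C$ or $\forall^Y\alpha^K.C$ has kind $\kind\Type\Unl$ only if $Y=\Unl$ (inverting \klab{Fun}/\klab{Forall} and \klab{Upcast}, using that there is no way to lower a kind), and the rule's highlighted premise $\Delta\vdash\Gamma:Y=\Unl$ gives the result directly. The key structural observation I will need is a short \emph{kind inversion lemma}: if $\Delta\vdash A\to^Y C:\kind\Type\Unl$ then $Y=\Unl$, and similarly for $\forall$; this follows because the subkinding relation on $\kind\Type{-}$ only goes $\Unl\le\Lin$, never the reverse.

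For part~2 I would induct on the structure of the evaluation context $\EC$ (equivalently, on the derivation of $\typ{\Delta;\Gamma}{\EC[(\Do\ell\,V)^E]:C}$). In the base case $\EC=[~]$, the term is $(\Do\ell\,V)^E$ typed by \tylab{Do}, so $\Gamma$ types only $V:A$; taking $\Gamma_1=\Gamma$ and $\Gamma_2=\cdot$ works since plugging $(\Ret y)^E$ for $y:B$ re-typechecks by \tylab{Return}. For $\EC=\Lety^{Y'}x\revto\EC'\In N$: the outer term is typed by \tylab{Seq} (or \tylab{SeqSub}), splitting $\Gamma=\Gamma_a+\Gamma_b$ with $\EC'[\cdots]$ in $\Gamma_a$ and $N$ in $\Gamma_b$; crucially, the operation $\ell$ occurring unhandled inside $\EC'$ forces $R$ to contain $\ell:A\sto^\Unl B$, so $\Delta\vdash R:\Unl$ fails to be vacuous — it pins $R$ to be control-flow-unlimited, and then the \tylab{Seq} premise $\Delta\vdash R:Y'$ forces $Y'=\Unl$, which in turn forces the other premise $\Delta\vdash\Gamma_b:\Unl$. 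Applying the induction hypothesis to $\EC'$ yields $\Gamma_a=\Gamma_{a1}+\Gamma_{a2}$ with $\Gamma_{a1}$ unlimited; I then take $\Gamma_1=\Gamma_{a1}+\Gamma_b$ (unlimited, since both summands are) and reassemble the typing of $\EC'[(\Ret y)^E]$ under $\Gamma_{a1},y:B$, rewrap with \tylab{Seq}. For $\EC=\Handle\,\EC'\With H$: $\ell\notin\BL{\EC}$ means $\ell$ is not handled by $H$, so $\ell$ remains in the row $R$ of the handled computation's type $C$; hence the induction hypothesis applies to $\EC'$ with the same $\Unl$-annotated presence, and for the handler part we invoke part~3 to get $\Delta\vdash\Gamma_H:\Unl$, combining as before.

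Part~3 is typed by \tylab{Handler}, whose very first highlighted premise is $\Delta\vdash\Gamma:\Unl$, so it is immediate — the content of part~3 is really just exposing that premise so parts~1 and~2 can cite it. The main obstacle I anticipate is the bookkeeping in the \tylab{Seq}/\tylab{SeqSub} case of part~2: one must argue carefully that an \emph{unhandled} occurrence of $\ell$ inside $\EC'$ genuinely propagates $\ell:A\sto^\Unl B$ into the row assigned to $\EC'[(\Do\ell\,V)^E]$ by the ambient derivation, and — in the \tylab{SeqSub} variant — that the trivial row subtyping of \Cref{fig:row-subtyping} preserves the presence (and the $\Unl$ annotation) of a present label, so the forcing of $Y'=\Unl$ still goes through. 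Everything else is routine propagation of the $\Delta\vdash\Gamma:\Unl$ and context-splitting judgements through the syntax-directed rules, together with the small kind-inversion lemma noted above.
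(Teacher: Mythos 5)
Your proposal follows essentially the same route as the paper: part~1 by case analysis on the last value-typing rule using kind inversion to force $Y=\Unl$ on the abstraction; part~2 by induction on the typing of $\EC[(\Do\,\ell\;V)^E]$, using the auxiliary fact that the unhandled $\ell:A\sto^\Unl B$ persists in the ambient row so that the \tylab{Seq} premise $\Delta\vdash R:Y$ forces $Y=\Unl$ and hence $\Delta\vdash\Gamma_2:\Unl$; and part~3 read off directly from the \tylab{Handler} premise. The one concrete slip is in your base case of part~2: you take $\Gamma_1=\Gamma$ and $\Gamma_2=\cdot$, but the roles must be swapped. The context $\Gamma$ there types the operation argument $V$, which may well be linear (e.g.\ a file handle passed to $\ell$), so $\Delta\vdash\Gamma_1:\Unl$ would fail, and $\typ{\Delta;\Gamma,y:B}{(\Ret y)^E:C}$ would also fail at \tylab{Var}, which demands the remainder of the context be unlimited. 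The correct choice, as in the paper, is $\Gamma_1=\cdot$ and $\Gamma_2=\Gamma$: the continuation $(\Ret y)^E$ needs only $y:B$, and the possibly linear resources consumed by $V$ are exactly what $\Gamma_2$ is there to absorb. This is a local fix and does not affect the inductive cases, where your reassembly ($\Gamma_1=\Gamma_{a1}+\Gamma_b$, rewrapping with \tylab{Seq}) matches the paper's.
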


The proof can be found in \Cref{app:proof-unlimited-is-unlimited}.

However, \Cref{thm:unl-is-unl} only cares about the static tracking of
linear variables.
It says nothing about the use of linear values during evaluation directly.
In the next section, we prove that in \Calc no linear value is ever
discarded or duplicated during evaluation, by defining a
linearity-aware semantics inspired by \citet{walker2005substructural},
\citet{fpop}, and \citet{quill}.

\subsection{Linearity Safety of Evaluation}
\label{sec:feffpop-linear-semantics}

In this section, we design a linearity-aware semantics of \Calc,
extending the small-step operational semantics to track the
introduction and elimination of linear values, and prove that all
linear values are used exactly once during evaluation.

We first extend the syntax of values with values marked with linear
tags $\lintag{V}$ to indicate linear values during evaluation.
The typing rules simply ignore the linear tags.
\begin{syntax}
  \slab{Values}         &V  &::=  & \dots \mid \lintag{V} \\
\end{syntax}
We restrict attention to closed computations and define two auxiliary
functions $\islin(V)$ and $\taglin(V)$ for closed values as follows.
\[\ba{rcl}
\islin(V) &=& \left\{
  \ba{l@{\quad}l}
  \True  & \text{if } \typ{\cdot;\cdot}{V:A} \text{ and } \cdot\not\vdash A : \Unl \\
  \False & \text{otherwise}
  \ea \right. \\[2ex]
\taglin(V) &=& \left\{
  \ba{l@{\quad}l}
  (\lintag{V}, \{\lintag{V}\}) & \text{if } \islin(V) \text{ and } V \neq \lintag{W} \text{ for any } W\\
  (V, \emptyset)               & \text{otherwise}
  \ea \right.
\ea\]
The predicate $\islin(V)$ holds when $V$ is a genuine linear value as
opposed to an unlimited value that has been upcast to be linear by
subkinding.
The operation $\taglin(V)$ tags a value as linear if it is and has not
been tagged, and yields a pair of the possibly tagged $V$ and a
multiset containing the value if it is newly tagged and nothing
otherwise.

\begin{figure}
  \flushleft
  \begin{bigreductions}
  \lsemlab{App}   & (\lambda^Y x^A.M)\,V &\lreducesto{\mS}{\emptyset}& M[V'/x],
    \text{ where } (V', \mS) = \taglin(V) \\[1.2ex]
  \lsemlab{TApp} & (\Lambda^Y \alpha^K.M)\,T &\lreducesto{\emptyset}{\emptyset}& M[T/\alpha] \\[1.2ex]
  \lsemlab{Seq} &
    \Lety^Y x \revto \Ret V \In N &\lreducesto{\mS}{\emptyset}& N[V'/x],
    \text{ where } (V', \mS) = \taglin(V) \\[1.2ex]
  \lsemlab{Ret} &
    \Handle (\Ret V)^E \With H &\lreducesto{\mS}{\emptyset}& N[V'/x],\\
    \multicolumn{4}{@{}r@{}}{
        \text{where } (\Ret x \mapsto N) \in H, (V', \mS) = \taglin(V)
    } \\[1.2ex]
  \lsemlab{Op} &
    \Handle \EC[(\Do \ell \; V)^E] \With H
      &\lreducesto{\mS}{\emptyset}& N[V'/p, W'/r],\\
    \multicolumn{4}{@{}r@{}}{
          \text{where } \ell \notin \BL{\EC},\
          (\ell \ p \ r \mapsto N) \in H,\
          (\ell:A \sto^{Y} B) \in E,
    }\\
    \multicolumn{4}{@{}r@{}}{
          W = \lambda^{Y} y^{B}.\Handle \EC[(\Ret y)^E] \With H,
    } \\
    \multicolumn{4}{@{}r@{}}{
          (V',\mS_1) = \taglin(V),
          (W',\mS_2) = \taglin(W), \mS = \mS_1 \umul \mS_2
    } \\[1.2ex]
  \lsemlab{Remove} & \FC[\lintag{V}] &\lreducesto{\emptyset}{\{\lintag{V}\}} &\FC[V] \\[1.2ex]
  \lsemlab{Lift} &
    \EC[M] &\lreducesto{\mS}{\mT}& \EC[N],  \hfill\text{ if } M \lreducesto{\mS}{\mT} N \\
  \end{bigreductions}
  
  \begin{syntax}
  \slab{Evaluation contexts} &  \EC &::= & [~] \mid \Lety^Y x \revto \EC \In N \mid
                                           \Handle \EC \With H \\
  \slab{Tag-removing contexts} &  \FC &::= & [~]\; V \mid [~]\; T
  \end{syntax}
  \caption{Linearity-aware Small-step Operational Semantics of \Calc}
  \label{fig:feffpop-linear-small-step}
\end{figure}

The linearity-aware semantics is given in
\Cref{fig:feffpop-linear-small-step}.
We augment the previous reduction relation $M \reducesto N$ with two
multi-sets $M \lreducesto{\mS}{\mT} N$, where $\mS$ contains the
linear values introduced by this reduction step, and $\mT$ contains
the linear values eliminated by this reduction step.
Note that in \Calc, we cannot duplicate or discard a value before we
bind it.
We introduce linear values at the first time they are bound to
variables (\lsemlab{App}, \lsemlab{Seq}, \lsemlab{Ret} and
\lsemlab{Op}).
Take \lsemlab{App} for example.
When $V$ is a non-tagged real linear value (the first case of
$\taglin(V)$), we tag it and add it to the multiset of introduced
linear values.
Otherwise, $V$ is either not really linear or has been tagged already
(which implies that we have already introduced it). We do not need to
update the multisets.
We eliminate linear values when they are destructed (\lsemlab{Remove}).
As we only have term abstraction and type abstraction as value
constructors, the tag-removing contexts $\FC$ capture the elimination
of these two cases.
It is easy to extend the linearity-aware semantics with other value
constructors.
The relationship between the two semantics is straightforward: erasing
the linear tags from the linearity-aware semantics yields the original
semantics.

We write $\TL{M}$, $\TL{V}$, $\TL{\EC}$ and $\TL{\FC}$ for the
multisets of tagged linear values within $M$, $V$, $\EC$, and $\FC$,
respectively. They are given by the homomorphic extension of the
following equation.
\[
\TL{\lintag{V}} = \{\lintag{V}\}\umul\TL{V}
\]

We define the notion of linear safety similarly to
\Cref{thm:unl-is-unl}.
A term is linear safe if there are no tagged linear values in terms
that are claimed to be unlimited.

\begin{definition}[Linear Safety]
  \label{def:linear-safety}
  A well-typed computation $M$ or value $V$ is \emph{linear safe} if
  and only if:
  \begin{enumerate}
    \item For every value subterm $W$ of the form $\lambda^\Unl x^A.N$ or
      $\Lambda^\Unl \alpha^K.N$, $\TL{W} = \emptyset$.\label{def:linear-safety-1}
    \item For every computation subterm $N$ of the form $\EC[(\Do \ell\;
    V)^{\{\ell : A\sto^\Unl B\lsep R\}}]$ where $\ell \notin \BL{\EC}$,
    $\TL{\EC} = \emptyset$.
    \item For every handler subterm $H$, $\TL{H} = \emptyset$.
  \end{enumerate}
  (An alternative way to read Item~\ref{def:linear-safety-1} is as
  ``for every value subterm $W$ with an unlimited type''.)
\end{definition}

Finally, the following theorem states that linear safety is preserved
by evaluation, and tagged linear values are not duplicated or
discarded during evaluation.

\begin{restatable}[Reduction Safety]{theorem}{reductionSafety}
  \label{thm:reduction-safety}
  For any closed, well-typed and linear safe computation $M$ in \Calc,
  if $M \lreducesto{\mS}{\mT} N$, then $N$ is linear safe and
  $\TL{M}\umul\mS = \TL{N}\umul\mT$.
\end{restatable}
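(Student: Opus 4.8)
The plan is to induct on the derivation of $M \lreducesto{\mS}{\mT} N$, which amounts to a case analysis on the last reduction rule, with \lsemlab{Lift} as the congruence case. Since ``linear safe'' is only defined on well-typed terms, each step first appeals to \Cref{thm:preservation} to keep $N$ well-typed (closedness of the term is evidently preserved, so $\taglin$ remains applicable to the values it is used on). Before the case analysis I would establish three auxiliary facts. First, a substitution lemma for $\TL{-}$: $\TL{M[V/x]} = \TL{M} \umul n\cdot\TL{V}$ where $n$ is the number of free occurrences of $x$ in $M$, and $\TL{M[T/\alpha]} = \TL{M}$ --- a routine structural induction. Second, a placement lemma: in any well-typed term a genuinely linear variable ($x:A$ with $\cdot\not\vdash A:\Unl$) occurs exactly once, and that occurrence lies neither inside a value subterm $\lambda^\Unl x^A.N$ or $\Lambda^\Unl\alpha^K.N$, nor inside any handler subterm, nor inside the spine context $\EC$ of any subterm $\EC[(\Do \ell\,V)^{\{\ell : A'\sto^\Unl B'; R\}}]$ with $\ell\notin\BL{\EC}$; the exactly-once part is the standard linear-use property of context splitting, and the three exclusions follow from \tylab{Abs}, \tylab{Handler}, and \tylab{Seq} together with \Cref{thm:unl-is-unl} (an unlimited operation on the spine of a computation forces every enclosing $\Lety$ to carry annotation $\Unl$, hence an unlimited continuation context). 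Third, as a corollary, substituting a closed linear-safe value $V$ for a variable $x$ in a linear-safe term yields a linear-safe term: the only new tagged subvalues appear at the occurrences of $x$, and if $\TL{V}\neq\emptyset$ then $V$ has linear type, so $x$ is genuinely linear and by the placement lemma its occurrence avoids all three forbidden positions.

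For the substitution rules \lsemlab{App}, \lsemlab{TApp}, \lsemlab{Seq}, and \lsemlab{Ret}, the reduct is $M_0[V'/x]$ (resp.\ $M_0[T/\alpha]$) with $(V',\mS)=\taglin(V)$ and $\mT=\emptyset$; linear safety of the reduct is the corollary, and conservation follows by expanding $\TL{M_0[V'/x]}$ with the substitution lemma and the analysis of $\taglin$: if $V$ is genuinely linear and untagged then $V'=\lintag{V}$, $\mS=\{\lintag{V}\}$, the bound variable is linear so $n=1$, and the single new $\lintag{V}$ is exactly $\mS$; otherwise $\mS=\emptyset$, and when $V$ is not genuinely linear item~1 of linear safety gives $\TL{V}=\emptyset$ so duplicated copies contribute nothing. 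Rule \lsemlab{Remove}, $\FC[\lintag{V}] \lreducesto{\emptyset}{\{\lintag{V}\}} \FC[V]$, is immediate: $\TL{\FC[\lintag{V}]} = \TL{\FC}\umul\{\lintag{V}\}\umul\TL{V} = \TL{\FC[V]}\umul\mT$, and exposing the destructed value creates no forbidden configuration since it has linear type. For \lsemlab{Lift}, $\EC$ is untouched so conservation follows from the induction hypothesis for $M_0 \lreducesto{\mS}{\mT} N_0$ via $\TL{\EC[M_0]} = \TL{\EC}\umul\TL{M_0}$; for linear safety of $\EC[N_0]$ the subterms lying wholly within $\EC$ or within $N_0$ are covered by linear safety of $\EC[M_0]$ and of $N_0$, and the only new ones are the do-headed subterms straddling the hole, for which $N_0$ must itself be do-headed --- one uses linear safety of $N_0$ for the inner part of the spine context and the placement lemma (via \Cref{thm:unl-is-unl}, noting $M_0$ and $N_0$ share a computation type by \Cref{thm:preservation}, so the operation exposed is unlimited and the $\EC$-part is an unlimited, hence tag-free, continuation context) for the part coming from $\EC$.

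The crucial case is \lsemlab{Op}: $\Handle\,\EC[(\Do \ell\,V)^E]\With H \lreducesto{\mS}{\emptyset} N_0[V'/p, W'/r]$ with $W = \lambda^{Y} y^{B}.\Handle\,\EC[(\Ret y)^E]\With H$, $(\ell : A\sto^{Y} B)\in E$, $\ell\notin\BL{\EC}$, $(V',\mS_1)=\taglin(V)$, $(W',\mS_2)=\taglin(W)$, $\mS=\mS_1\umul\mS_2$. The key observation is that $\EC[(\Do \ell\,V)^E]$ is a do-headed subterm of the (linear-safe) redex with $\ell\notin\BL{\EC}$, so when $Y=\Unl$ item~2 of linear safety yields $\TL{\EC}=\emptyset$, whence with $\TL{H}=\emptyset$ (item~3) we get $\TL{W}=\emptyset$; thus $\taglin(W)=(W,\emptyset)$ and substituting $W$ for the unlimited resumption $r$ --- however many times $r$ occurs --- neither gains nor loses a tag, and the duplicated copies of $\EC$ and $H$ inside $W$ remain tag-free. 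When $Y=\Lin$, $\taglin(W)=(\lintag{W},\{\lintag{W}\})$ and $r$ is linear hence used exactly once, so $\mS_2=\{\lintag{W}\}$ is matched by the unique substituted copy of $\lintag{W}$. The contribution of $V$ and $\mS_1$ is counted exactly as in the substitution cases, using that $p$ is genuinely linear whenever $V$ is, and $\TL{N_0}=\emptyset$ by item~3; linear safety of $N_0[V'/p,W'/r]$ follows from the corollary applied twice, using additionally that $W'$ is tag-free when $r$ is unlimited so no tag migrates into an $\Unl$-abstraction, handler, or unlimited spine context within the clause body. I expect this step to be the main obstacle: the whole conservation bookkeeping in the multi-shot case --- the case the paper is built to handle --- consists in reconciling the tags emitted by $\taglin$ on the operation argument and on the reified continuation $W$ against the tags gained or lost by substituting for $p$ and $r$, and the fact that nothing is duplicated or discarded for an unlimited resumption rests entirely on $\TL{\EC}=\emptyset$, which is precisely the semantic dividend of tracking control-flow linearity.
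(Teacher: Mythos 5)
Your proposal is correct and follows essentially the same route as the paper's proof: induction on the linearity-aware reduction rules, with the same two key lemmas (a genuinely linear variable occurs exactly once, and substituting a tagged value for a variable preserves linear safety via \Cref{thm:unl-is-unl}), and the same pivotal observation in the \lsemlab{Op} case that $Y=\Unl$ forces $\TL{\EC}=\TL{H}=\emptyset$ so the reified continuation is tag-free however often it is duplicated. Your explicit multiplicity-counting substitution lemma for $\TL{-}$ and the careful treatment of straddling do-headed subterms in the \lsemlab{Lift} case are slightly more systematic than the paper's direct case-by-case bookkeeping, but they are not a different argument.
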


The proof can be found in \Cref{app:proof-reduction-safety}.
Note that tracking linear values explicitly during evaluation is
important for showing that they are indeed used safely.
Otherwise, it is even unclear how to state what reduction safety means
in the original semantics.

\section{Control-Flow Linearity in Links}
\label{sec:links}

In this section, we describe our implementation of control-flow
linearity tracking in \Links.
The implementation fixes a long-standing type soundness bug in \Links
arising from the interaction between session types and effect
handlers, as we described in the introduction.

\Links is an ML-style language with type inference, linearly typed
session types (based on $\Fpop$~\citep{lindley2017lfst}), and a
row-based effect type system~\citep{linksrow}.
In \Links we write \lstinline$Unl$ for $\Unl$ and \lstinline$Any$ for
$\Lin$.
The latter is \lstinline$Any$ as \emph{any} value can be soundly used
once.
The subkinding relation $\vdash\kind\Type\Unl\le\kind\Type\Lin$
(\lstinline$Unl$ $\le$ \lstinline$Any$) allows type variables of kind
\lstinline$Any$ to be unified with types of either kind.
This allows us to write functions that may accept both linear and
nonlinear values, e.g. the identity function
\lstinline$fun id(x){x} : (a::Any) -> (a::Any)$.
Here, we can instantiate the type variable \lstinline$a$ to a linear
type, such as \lstinline$!Int.End$, or an unlimited type, such as
\lstinline$Int$.

To make type inference deterministic, \Links makes use of two
different keywords for defining unlimited functions and linear
functions, which are \lstinline$fun$ and \lstinline$linfun$
respectively.
For instance, we can define a channel version of the function
$\dec{faithfulWrite}$ in \Cref{sec:value-linearity} as follows.
\begin{lstlisting}[numbers=none]
fun faithfulSend(c) { linfun (s) { var c = send(s, c); close(c) } }
\end{lstlisting}
The inferred type is \lstinline$(!(a::Any).End) -> (a::Any) ~@ ()$.
The \lstinline{faithfulSend} function takes a polymorphic channel
\lstinline{c} and returns a linear function (indicated by
\lstinline{~@} instead of the usual arrow \lstinline{~>}) that sends a
polymorphic value $s$ over the channel \lstinline{c}. If we wanted to
we could restrict the inferred type of the channel \lstinline$c$ and
the input $s$ by supplying a type annotation to either.

To track control-flow linearity we repurpose the existing effect
system and add two new control flow kinds \lstinline$Any$ (for $\Unl$)
and \lstinline$Lin$ (for $\Lin$) to signify whether a given context
allows control flow to be unlimited or linear. We further add a new
effectful operation space for control-flow-linear operations, which is
syntactically denoted by the arrow \lstinline$=@$, in addition to the
existing operation space denoted by \lstinline$=>$.
The subkinding relation $\vdash\kind\Row\Lin\le\kind\Row\Unl$
(\lstinline$Lin$ $\le$ \lstinline$Any$) is implemented by allowing row
variables of kind \lstinline{Any} to be unified with both
control-flow-linear and unlimited operations and other row variables
of arbitrary kinds.
In contrast, row variables of kind \lstinline{Lin} can only be unified
with control-flow-linear operations and row variables of kind
\lstinline{Lin}.
The change from \lstinline{Unl} to \lstinline{Lin}
is consistent with the duality between value linearity and
control-flow linearity.

Since \Links is a practical programming language, sequencing is often
implicit.
Instead of writing linearity annotations on all sequencing, we assume
that control-flow linearity is unlimited by default, and introduce the
keyword \lstinline$xlin$ to switch the control-flow linearity to
linear.
We also add the construct \lstinline$lindo$ to invoke
control-flow-linear operations in addition to the existing
\lstinline$do$ for control-flow-unlimited operations.
To illustrate the use of these extensions, let us consider a channel
version of the function $\dec{dubiousWrite}_\tmark$ from
\Cref{sec:control-flow-linearity}.
\begin{lstlisting}[numbers=none]
sig dubiousSend : (!String.End) {Choose:() =@ Bool|_::Lin}~> ()
fun dubiousSend(c) {xlin; var c = send(if (lindo Choose) "A" else "B", c); close(c)}
\end{lstlisting}
The \lstinline{dubiousSend} takes a channel \lstinline{c},
non-deterministically sends \lstinline{"A"} or \lstinline{"B"} through
it depending on the result of the operation \lstinline{Choose}, and
closes the remaining channel.
We use \lstinline{xlin} to switch the control-flow linearity to linear
so that we can use the linear channel \lstinline{c} and must use the
control-flow-linear operation \lstinline{Choose:() =@ Bool} with the
keyword \lstinline{lindo}.
If we replace \lstinline{lindo} with \lstinline{do} then \Links
correctly rejects the code as the continuation captures the linear
endpoint \lstinline{c}.
The example from the introduction will be rejected for the same
reason.
For linear effect handlers, we use the linear arrow syntax
\lstinline$=@$ to bind linear continuations of control-flow-linear
operations.
\begin{lstlisting}[numbers=none]
fun(c) {handle ({xlin; dubiousSend(c)}) {case <Choose =@ r> -> xlin; r(true)} }
\end{lstlisting}
Here, we interpret the operation \lstinline$Choose$ as
\lstinline$true$.
The use of \lstinline$xlin$ in the \lstinline$Choose$-clause is
necessary because the reified continuation $r$ is linear.
As the continuation is used linearly, \Links correctly accepts this
program.

Our implementation works well with previous programs using the effect
handler feature in \Links and fixes the type soundness bug.
However, being based on \Fpop, \Links suffers from the limitations
outlined in Section~\ref{sec:overview}.
In the next section, we present a considerably more expressive
calculus, \CalcS, which uses qualified types for both linearity and
effects, enabling a much more fine-grained analysis of control-flow
linearity, and avoiding the need to distinguish between linear and
non-linear variants of term syntax.
We leave the implementation of \CalcS to future work.

\section{An Implicit Calculus with Qualified Types}
\label{sec:qeffpopsub}

In this section, we propose \CalcS, an ML-style calculus which
enhances \Calc (and its implementation in \Links) in two directions:
minimising syntactic overheads and improving accuracy of control-flow
linearity tracking.
The core idea is to use qualified types for both linear types and
effect types.
The qualified linear type system is inspired by \Quill~\citep{quill},
which eliminates the linearity annotations on terms and supports
principal types.
The qualified effect system is inspired by the row containment
predicate of \Rose~\citep{rose} and the subtyping-based effect system
of \Eff~\citep{pretnar14, KarachaliasPSVS20}, which allows non-trivial
subtyping constraints between row variables.

\subsection{Syntax}

\Cref{fig:qeffpopsub-syntax-abbrev} shows the syntax of qualified
types of \CalcS.
We name some syntactic categories for defining meta functions.
The remaining syntax is given in full in \Cref{app:qfp-syntax}, which
is mostly identical to that of \Calc, except that we introduce
generalising let-bindings $\Let x=V\In M$ to replace explicit type
abstraction and implicit instantiation in place of type application
and remove all type annotations and linearity annotations.

\begin{figure}[htb]
  \flushleft
  \begin{minipage}[t]{0.5\textwidth}
  \begin{syntax}
  \slab{Linearity}   &Y    &::= & \phi \mid \Unl \mid \Lin \\
  \slab{Types}          &\tau &::= & A \mid R \mid Y \\
  \slab{Predicates}         &\Belong{\pi}{\meta{Pred}} &::= & \tau_1 \linle \tau_2
    \mid R_1 \rle R_2 \\
    & &\mid& R \rlack \mL \\
  \end{syntax}
  \end{minipage}
  \begin{minipage}[t]{0.45\textwidth}
  \begin{syntax}
  \slab{Qualified types} &\rho &::=& A \mid {\pi \Rightarrow \rho} \\
  \slab{Type schemes}   &\Belong{\sigma}{\meta{TySch}} &::=& \rho \mid \forall \alpha . \sigma \\
  \slab{Type contexts}  &\Belong{\Gamma}{\meta{Env}}   &::=& \cdot \mid \Gamma, x:\sigma \\
  \slab{Predicate sets}     &\Belong{P}{\meta{PSet}} &::= & \cdot \mid P, \pi \\
  \end{syntax}
  \end{minipage}

  \caption{Syntax of Qualified Types of \CalcS}
  \label{fig:qeffpopsub-syntax-abbrev}
\end{figure}

\paragraph{Linearity}
In addition to concrete linearities $\Lin$ and $\Unl$, \CalcS has linearity
variables $\phi$. This is essential to have principal types and more expressive
constraints.
For example, the identity function $\lambda x.\Ret x$ can be given
the principal type
$\forall\alpha\,\mu\,\phi.\,\alpha\to^\phi\alpha\eff\{\mu\}$, which
can be instantiated to either a linear function (by instantiating
$\phi$ to $\Lin$) or an unlimited function (by instantiating $\phi$ to
$\Unl$).

\paragraph{Qualified types}
The syntactic category $\tau$ includes value types, row types, and
linearity types.
Qualified types $\rho$ restrict value types by predicates.
The linearity predicate $\tau_1\linle\tau_2$ means the linearity of
$\tau_1$ is less than $\tau_2$ (e.g., $\Unl\linle\Lin$).
Note that we allow directly using value types and row types in the
linearity predicates, since every value type has its value linearity,
and every effect row type has its control-flow linearity.
The row predicates $R_1\rle R_2$ means $R_1$ is a sub-row of $R_2$,
and $R\rlack\mL$ means $R$ does not contain labels in $\mL$.

\paragraph{Kinding}

For conciseness we omit kinds and infer the kind of a type variable
from its name.
As usual, we let $\alpha$ range over value types, $\mu$ range over row
types, and $\phi$ range over linearity types.
We also let $\alpha$ range over all of them in the definition of type
schemes $\forall\alpha.\sigma$.
All rows are assumed to be well-formed (no duplicated
labels).
To simplify type inference, the predicate $\mu \rlack \mL$ will be
used in place of kinds $\Row_\mL$ to track labels that may not occur
in rows. This is just a convenience, though, as the corresponding
kinds of row type variables can be computed from the inferred types.

\subsection{Typing}

\begin{figure}[htb]

  \raggedright
  \boxed{\typ{P\mid\Gamma}{V : A}}
  \boxed{\typ{P\mid\Gamma}{M : C}}
  \boxed{\typ{P\mid\Gamma}{H : C \hto D}}
  \hfill

  \begin{mathpar}

  \inferrule*[Lab=\qtylab{Let}]
  { \typ{Q\mid\Gamma_1, \Gamma}{V : A} \\
    \sigma = \gen((\Gamma_1, \Gamma), Q\qto A) \\\\
    \typ{P\mid\Gamma_2, \Gamma, x : \sigma}{M : C} \\
    P \vdash \Gamma\un
  }
  {\typ{P\mid\Gamma_1, \Gamma_2, \Gamma}{\Let x = V \In M : C}}

    \inferrule*[Lab=\qtylab{Do}]
      {
        \typ{P\mid\Gamma}{V : A_\ell} \\\\
        P \vdash \{\ell\lsig A_\ell\sto^Y B_\ell\} \rle R
      }
      {\typc{P\mid\Gamma}{\Do \ell \; V : B_\ell}{\{R\}}}

  \inferrule*[Lab=\qtylab{Seq}]
  { \typc{P\mid\Gamma_1, \Gamma}{M : A}{\{R_1\}} \\\\
    \typc{P\mid\Gamma_2, \Gamma, x : A}{N : B}{\{R_2\}} \\\\
    P \vdash R_1 \rle R \\ P \vdash R_2 \rle R \\\\
    P \vdash \Gamma_2 \linle R_1 \\
    P \vdash \Gamma \un
  }
  {\typc{P\mid\Gamma_1,\Gamma_2,\Gamma}{\Let x \revto M \In N : B}{\{R\}}}

  \inferrule*[Lab=\qtylab{Handler}]
  {
    H = \{\Ret x \mapsto M\} \uplus \{ \ell_i\;p_i\;r_i \mapsto N_i \}_i\\\\
    C = A \eff \{(\ell_i : A_i \sto^{Y_i} B_i)_i; R_1\} \\
    D = B \eff \{R_2\} \\\\
    \typ{P\mid\Gamma, x : A}{M : D}\\\\
    [\typ{P\mid\Gamma, p_i : A_i, r_i : B_i \to^{Y_i} D}{N_i : D}]_i \\\\
    P \vdash \Gamma \un \\
    P \vdash R_1 \rle R_2 \\
    P \vdash R_1 \rlack \{\ell_i\}_i
  }
  {\typ{P\mid\Gamma}{H : C \hto D}}

  \end{mathpar}

  \[
  \text{where} \, \gen(\Gamma, \rho) = \forall(\ftv\rho\backslash \ftv\Gamma) . \rho.
  \]

  \caption{Selected Syntax-directed Typing Rules for \CalcQ}
  \label{fig:qe-typing-abbrev}
\end{figure}

\Cref{fig:qe-typing-abbrev} gives representative syntax-directed
typing rules for \CalcS; the remaining rules are given in full in
\Cref{app:qfp-typing}. The judgement $P \mid \Gamma \vdash M : C$
states that, under predicate assumptions $P$ and typing assumptions
$\Gamma$, the term $M$ has type $C$, and similarly for the judgements
for values and handlers. As usual for qualified type systems, the
typing rules depend on an entailment relation $P \vdash \pi$ (and an
auxiliary relation $P\vdash\Gamma\linle\tau$), discussed in the
following section.

Rule \qtylab{Let} demonstrates the treatment of linearity in \CalcS.
We divide the context in three: $\Gamma_1$ is used exclusive in the
bound term $V$, $\Gamma_2$ is used exclusively in the body $M$, and
$\Gamma$ is used in both (and so its types must be unlimited).

Rule \qtylab{Do} demonstrates the use of constraints in \CalcS to
generalise subtyping between effect rows. It states that if $V$ is a
value of type $A_\ell$, then $\Do \ell \; V$ has result type $B_\ell$
and effect row $R$.
We assume that the parameter and result types of operations are given by
an implicit global context $\Pi = \{\ell_1:A_{\ell_1}\sto B_{\ell_1},
\cdots\}$.
$R$ must license effect $\ell$.  We again rely on entailment: the
constraints $P$ must be sufficient to show that the singleton row
$\{\ell\lsig A_\ell\sto^Y B_\ell\}$ is contained within $R$.

Rule \qtylab{Seq} demonstrates the remaining novelty of qualified types in
\CalcS.  Several of its uses of entailment follow the previous patterns.  The
bindings in $\Gamma$ are available in both $M$ and $N$, so $P \vdash \Gamma
\linle \Unl$ requires that their types be unlimited.  We want flexibility in
combining the effects in $M$ and $N$, so the conditions $P \vdash R_i \rle R$
assure that the effects of each are included in the effects of the entire
computation.
This allows us to avoid having to unify row types in examples like
$\dec{sandwichClose}$ (\Cref{sec:qualified-effect-types}) which
causes inaccuracy for tracking control-flow linearity.
Finally, $N$ is in the continuation of all operations in $M$, so the
value linearity of types in $\Gamma_2$ must be less than the
control-flow linearity of operations in $R_1$.
Note that the two kinding judgements in \tylab{Seq} in
\Cref{fig:typing} are now combined into one entailment judgement
$P\vdash\Gamma_2\linle R_1$.
The duality we have identified between value linearity and
control-flow linearity is reflected by the fact that value types
appear on the left of $\linle$ and effect row types appear on the
right.

Rule \qtylab{Handler} uses the lacking predicate $P\vdash
R_1\rlack\{\ell_i\}_i$ to ensure that the handled operations are not
in the remaining part of the input effect row $R_1$, and requires
$R_1$ to be a sub-row of the output effect row $R_2$.
This is used to allow the handled operations $\ell_i$ to appear in
$R_2$.

\subsection{Entailment}

\begin{figure}[htb]
  \raggedright
  \boxed{P \vdash \pi}
  \boxed{P \vdash Q}
  \boxed{P \vdash \sigma \linle \tau}
  \boxed{P \vdash \Gamma \linle \tau}
  \hfill

  \begin{mathpar}
  \inferrule*[Lab=\plab{Subsume}]
  {\pi \in P}
  {P \vdash \pi}

  \inferrule*[Lab=\plab{Refl}]
  { }
  {P \vdash \tau \linle \tau}

  \inferrule*[Lab=\plab{Lin}]
  { }
  {P \vdash \tau \linle \Lin}

  \inferrule*[Lab=\plab{Unl}]
  { }
  {P \vdash \Unl \linle \tau}

  {\mprset{fraction={===}}
  \inferrule*[Lab=\plab{Fun}]
  {P \vdash Y \linle \tau}
  {P \vdash (A\to^Y C) \linle \tau }
  }

  {\mprset{fraction={===}}
  \inferrule*[Lab=\plab{Row}]
  {[P \vdash \tau \linle Y]_{(l\lsig A\sto^Y B) \in R} \\\\
   {P}\vdash{\tau \linle \mu} \text{ when } \mu\in R
  }
  {P \vdash \tau \linle R }
  }

    \inferrule*[Lab=\plab{Sub}]
  {
    \rowset{R_1} \subseteq \rowset{R_2}
  }
  {P \vdash R_1 \rle R_2}

  \inferrule*[Lab=\plab{Lack}]
  {\dom{R} \cap \mL = \emptyset}
  {P \vdash R \rlack \mL}

  \inferrule*[Lab=\plab{PredSet}]
  {[P \vdash \pi]_{\pi \in Q}}
  {P \vdash Q}

  \inferrule*[Lab=\plab{Quantifier}]
  {P \vdash [\tau'/\alpha] \sigma \linle \tau \text{ for some } \tau'}
  {P \vdash (\forall\alpha . \sigma) \linle \tau}

  \inferrule*[Lab=\plab{Qualifier}]
  {
  P \vdash \pi \\ P \vdash \rho \linle \tau
  }
  {P \vdash (\pi\qto\rho) \linle \tau}

  \inferrule*[Lab=\plab{Context}]
  {[P \vdash \sigma \linle \tau]_{(x:\sigma)\in\Gamma}}
  {P \vdash \Gamma \linle \tau}
\end{mathpar}

\caption{Entailment Relations for Predicates and other Judgement Relations}
\label{fig:qe-predicates-entailment}
\end{figure}

\Cref{fig:qe-predicates-entailment} defines the entailment relations
between predicates $P \vdash Q$.
It also defines an auxiliary entailment relation $P \vdash \Gamma
\linle \tau$ which compares the linearity of all variables in $\Gamma$
and $\tau$.
The algorithmic version of these relations will be given in
\Cref{sec:factorising}.

These two entailment relations are both defined as the conjunction of
sub-relations as indicated by \plab{PredSet} and \plab{Context}.
For $P\vdash Q$, we only need to use entailment relations of the form
${P}\vdash{\pi}$.
The \plab{Subsume} is standard.
The linearity predicate $\linle$ is reflexive (\plab{Refl}), with
$\Lin$ as top (\plab{Lin}) and $\Unl$ as bottom (\plab{Unl}) elements.
The two-way rules \plab{Fun} and \plab{Row} define the linearity of
functions and rows.
We make use of the fact that in the linearity predicates generated by
typing rules, functions only appear on the left, and rows only appear
on the right.
Here we do not include entailment rules for base types, but in
practice we would have axioms like $P\vdash\Int\linle\Unl$ and
$P\vdash\Lin\linle\File$.
For row predicates, we write $\rowset{R}$ for the set of all elements
(comprising operation labels with their signatures and row variables)
of $R$, and $\dom{R}$ for the set of all labels of $R$.
We define the row predicates directly by set operations (\plab{Sub}
and \plab{Lack}).

The entailment relation $P\vdash\Gamma\linle\tau$ is defined using
${P}\vdash{\sigma\linle\tau}$ which compares the linearity of a type
scheme $\sigma$ and a type $\tau$.
Our treatment of the linearity of type schemes is novel, and addresses a soundness bug in \Quill.
The rule \plab{Quantifier} which characterises the linearity of
polymorphic types may be surprising.
It states that the linearity of a polymorphic type
$\forall\alpha.\sigma$ is less than $\tau$ if there exists an
instantiation of it whose linearity is less than $\tau$.
This is because the linearity of a polymorphic
type should capture the linearity of values that inhabit that type.
A value of a polymorphic type can be understood as the intersection of
values of all possible instantiations of the type.
If one of these instantiation gives a type that is less linear
than $\tau$, then the value itself must be less linear than $\tau$ no
matter what other instantiations are.
For example, consider the identity function $\dec{id} = \lambda x.\Ret
x$ which is obviously unlimited.
We give $\dec{id}$ a polymorphic type
$\forall\phi\,\alpha\,\mu.\,\alpha\to^\phi\alpha\eff\{\mu\}$ to make
it possible to use it as both a linear function (by instantiating
$\phi$ to $\Lin$) and an unlimited function (by instantiating $\phi$
to $\Unl$).
Thus, we have expressive principal types for $\dec{id}$ without adding subtyping
between linearity types to the type system.

The rule \plab{Qualifier} may also be surprising.
To compare the linearity of a qualified type $\pi\qto\rho$ with
$\tau$, we require the predicate $\pi$ to hold and then compare
the linearity of the remaining part $\rho$ with $\tau$.
At first glance, the condition $P\vdash\pi$ may seem unnecessary: if $\pi$ must
hold in instantiations of this type, surely we can assume it in checking the
type's linearity.
However, particularly in local definitions, predicates may mention type
variables \emph{not} quantified in those schemes.  We do not want to assume
anything about the instantiation of those variables.
Consider the following function.
\[\bl
  \lambda x . \Let f=\lambda\Unit.x \In \Ret (f, f) \\
\el\]
The polymorphic function $f$ can be given the principal type $\sigma =
\forall\phi\,\mu.(\alpha\linle\phi)\qto\TUnit\to^\phi\alpha\eff\{\mu\}$
where $\alpha$ is the type of $x$.
Note that the constraint mentions $\alpha$, which is bound outside this type
scheme.
Then, since $f$ is duplicated in $\Ret (f,f)$, the typing of it
collects the constraint $\sigma\linle\Unl$.
Obviously, we want to know from $\sigma\linle\Unl$ that $\alpha$
should be unlimited since $x$ is also duplicated.
One possible derivation of $P\vdash\sigma\linle\Unl$ is shown as
follows.
\begin{mathpar}
  \inferrule*[right=\plab{Quantifier}]
  {
    \inferrule*[right=\plab{Qualifier}]
    {
      P\vdash \alpha\linle\phi' \\
      \inferrule*[Right=\plab{Function}]
      { P\vdash \phi'\linle\Unl }
      { P\vdash \TUnit\to^{\phi'}\alpha\eff\{\mu'\} \linle \Unl }
    }
    { P\vdash (\alpha\linle\phi')\qto\TUnit\to^{\phi'}\alpha\eff\{\mu'\} \linle \Unl }
  }
  {P\vdash (\forall\phi\,\mu.(\alpha\linle\phi)\qto\TUnit\to^\phi\alpha\eff\{\mu\}) \linle \Unl}
\end{mathpar}
In \plab{Quantifier} we instantiate $\phi$ and $\mu$ with variables
$\phi'$ and $\mu'$.
In order to prove $\sigma\linle\Unl$ from $P$, we must then prove
$\alpha\linle\phi'$ and $\phi'\linle\Unl$.
Note that $\phi'$ and $\mu'$ are not fresh, but should instead appear
in $P$, e.g., we might have $P =
\{\alpha\linle\phi',\phi'\linle\Unl\}$.
If we instead assumed $\alpha\linle\phi$, or removed the condition
entirely from \plab{Qualifier}, then $P$ would not need to restrict
$\alpha$ at all.  We could later instantiate $\alpha$ with a linear
type, say $\File$, and use this term to unsoundly copy file handles.

Readers may worry that the \plab{Qualifier} rule is as general as it
could be, because it always requires $P\vdash\pi$.
For example, consider $\Let f = V \In M$ where $f:\sigma$ does not
appear freely in $M$. We collect the constraint $\sigma\linle\Unl$.
Constraints of $V$ that are captured in $\sigma$ do not
necessarily need to be satisfied, because $f$ is not used.
However, we believe that binding unsatisfiable values has little
benefits and can hide potential bugs in practice.

Note that these entailment rules are intentionally made as simple as
possible. For example, we do not include any transitivity rules.
The entailment rules also do not check potentially conflicted
predicates in predicate sets since the rule \plab{Subsume} allows
collecting any predicates.
We say that predicate set $P$ is satisfiable if there exists a
substitution $\theta$ such that $\cdot \vdash \theta P$, and define
the solutions of it as $\satsubst{P} = \{\theta \mid \cdot \vdash
\theta P\}$.
Transitivity of $\linle$ is admissible when considering the solutions
of predicates, e.g., $\satsubst{\phi_1\linle\phi_2, \phi_2\linle\Unl}
= \satsubst{\phi_1\linle\phi_2, \phi_2\linle\Unl, \phi_1\linle\Unl} =
\{[\Unl/\phi_1,\Unl/\phi_2]\}$.
In \Cref{sec:constraint-solving}, we will give an algorithm to check
the satisfiability of constraint sets.

\subsection{Type Inference}
\label{sec:type-inference}

\Cref{fig:qeffpopsub-type-inference-abbrev} shows representative type inference
rules for \CalcS; the remainder are given in full in \Cref{app:qfp-inference}.
Our type inference algorithm is based on Algorithm W \citep{DamasMilner}
extended for qualified types \citep{jones94}.  In
$\infer{\Gamma}{V}{A}{\theta}{P}{\Sigma}$, the input includes the
current context $\Gamma$ and value $V$, and the output includes the
inferred type $A$, substitution $\theta$, predicate set $P$, and
variable set $\Sigma$ of used term variables.
Note that the predicates $P$ are an output of inference, not an input; rather
than checking entailment, as the syntax-directed type rules do, we will emit a
constraint set sufficient to guarantee typing.  In the next section, we discuss
our algorithm to guarantee that inferred constraint sets are not unsatisfiable.
As usual, the
substitution $\theta$ has been already applied to $A$ and $P$.

\begin{figure}[htb]
  \raggedright
  \boxed{\infer{\Gamma}{V}{A}{\theta}{P}{\Sigma}}
  \boxed{\infer{\Gamma}{M}{C}{\theta}{P}{\Sigma}}
  \boxed{\infer{\Gamma}{H}{C \hto D}{\theta}{P}{\Sigma}}

  \begin{mathpar}
  \inferrule*[Lab=\tilab{Let}]
  {
    \infer{\Gamma}{V}{A}{\theta_1}{P_1}{\Sigma_1} \\
    \sigma = \gen(\theta_1 \Gamma, P_1 \qto A) \\\\
    \infer{\theta_1\Gamma, x:\sigma}{M}{C}{\theta_2}{P_2}{\Sigma_2} \\\\
    Q = \makeunl{\theta_2\theta_1\Gamma |_{\Sigma_1 \cap \Sigma_2}} \cup
        \makeunl{\theta_2(x:\sigma) |_{\setcomplement{\Sigma_2}}}
  }
  {\infer{\Gamma}{\Let x = V \In M}{C}
    {\theta_2\theta_1}{P_2\cup Q}{\Sigma_1 \cup (\Sigma_2 \backslash x)}}

\inferrule*[Lab=\tilab{Do}]
{
  \infer{\Gamma}{V}{A}{\theta_1}{P}{\Sigma} \\
  \unify{A}{A_\ell}{\theta_2} \\\\
  \mu, \phi \fresh \\
  Q = \makesub{(\ell: A_\ell \sto^\phi B_\ell), \mu} \\
}
{\infer{\Gamma}{\Do \ell\; V}
  {B_\ell \eff \{\mu\}}{\theta_2\theta_1}{\theta_2 P\cup Q}{\Sigma}}

  \inferrule*[Lab=\tilab{Seq}]
  {
    \infer{\Gamma}{M}{A\eff\{R_1\}}{\theta_1}{P_1}{\Sigma_1} \\
    \infer{\theta_1\Gamma,x:A}{N}{B\eff\{R_2\}}{\theta_2}{P_2}{\Sigma_2} \\
    \mu \fresh \\\\
    Q = \makeunl{\theta_2\theta_1\Gamma |_{\Sigma_1 \cap \Sigma_2}} \cup
        \makeunl{\theta_2(x:A) |_{\setcomplement{\Sigma_2}}} \cup
        \makeleq{\theta_2\theta_1\Gamma |_{\Sigma_2}, \theta_2 R_1} \cup
        \makesub{\theta_2 R_1, \mu} \cup
        \makesub{R_2, \mu} \\
  }
  {\infer{\Gamma}{\Let x \revto M \In N}{B\eff \mu}
    {\theta_2\theta_1}{\theta_2 P_1 \cup P_2 \cup Q}{\Sigma_1 \cup (\Sigma_2 \backslash x)}}
  \end{mathpar}

  \[
    \makeleq{\Gamma, \tau} = \factor(\Gamma \linle \tau) \qquad
    \makeunl{\Gamma} = \makeleq{\Gamma, \Unl} \qquad
    \makesub{R_1, R_2} = \factor(R_1 \rle R_2)
  \]

  \caption{Selected Type Inference Rules for \CalcQ}
  \label{fig:qeffpopsub-type-inference-abbrev}
\end{figure}

Rule \tilab{Let} demonstrates the treatment of linearity.
We write $\Gamma |_\Sigma$ for the type context generated by
restricting $\Gamma$ to variables in $\Sigma$.
We begin by inferring types for $V$ and $M$.  Variable sets $\Sigma_1$
and $\Sigma_2$ capture those variables used in each; any variable in
$\Sigma_1 \cup \Sigma_2$ must be unlimited.  We also account for the
possibility that the variable $x$ may not be used in $M$---that is to
say, that it may appear in $\setcomplement{\Sigma_2}$, the complement
of the used variables $\Sigma_2$.  We generate the corresponding
unlimitedness constraints using the auxiliary function
$\mathsf{factorise}$, discussed next.
Rule \tilab{Do} emits the constraint that the singleton effect row be included
in the output row.  Rule \tilab{Seq} combines these techniques.

We prove soundness and completeness of type inference with respect to
the syntax-directed type system.
We write $\theta |_\Gamma$ for the substitution generated by
restricting the domain of $\theta$ to the free variables in $\Gamma$
and $(\theta = \theta')|_\Gamma$ for $\theta |_\Gamma = \theta'
|_\Gamma$.

\begin{restatable}[Soundness]{theorem}{soundness}
  If $\infer{\Gamma}{V}{A}{\theta}{P}{\Sigma}$, then
  $\typ{P \mid \theta \Gamma |_\Sigma}{V : A}$. The
  same applies to computation and handler typing.
\end{restatable}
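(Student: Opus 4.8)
The plan is to prove the statement by induction on the inference derivation; since the inference rules are syntax-directed this is just structural induction on the term $V$ (resp.\ $M$, resp.\ $H$). In each case we apply the induction hypothesis to the immediate subterms, reconcile the resulting declarative derivations so that they share a common substitution and a common predicate set, and then close the case with the matching syntax-directed typing rule. The routine machinery this needs is: (i) a substitution lemma for the declarative judgements --- $\typ{P \mid \Gamma}{V : A}$ implies $\typ{\theta P \mid \theta\Gamma}{V : \theta A}$, and likewise for computations, handlers, and entailment --- which we use to push a later substitution $\theta_2$ back over the declarative typing obtained for an earlier subterm; (ii) monotonicity of entailment and of typing in the predicate set, so that a judgement derivable from $P$ is derivable from any $P' \supseteq P$, which lets us fuse the predicate sets emitted by sibling sub-derivations; and (iii) correctness of unification, $\unify{A}{B}{\theta}$ implies $\theta A = \theta B$, as used in rule \tilab{Do}. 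Each of these is a straightforward induction (with a little care for row substitution in rule \plab{Row}).

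The real content is the linearity and effect bookkeeping, concentrated in \tilab{Let}, \tilab{Seq}, and \tilab{Handler}. The linchpin is the specification of $\factor$: for any goal predicate $g$ of the form $\Gamma \linle \tau$ or $R_1 \rle R_2$, we have $\factor(g) \vdash g$. Since $\makeunl{\Gamma} = \factor(\Gamma \linle \Unl)$, $\makeleq{\Gamma,\tau} = \factor(\Gamma \linle \tau)$, and $\makesub{R_1,R_2} = \factor(R_1 \rle R_2)$, the auxiliary set $Q$ emitted by these rules entails exactly the entailment side-conditions --- $P \vdash \Gamma \un$, $P \vdash \Gamma_2 \linle R_1$, $P \vdash R_i \rle R$, and $P \vdash R_1 \rlack \{\ell_i\}_i$ --- demanded by the corresponding declarative rule, once $Q$ has been absorbed into the overall output predicate set and monotonicity applied. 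The second ingredient is translating the algorithm's used-variable discipline into the declarative system's context splitting: we must show that $\theta\Gamma|_{\Sigma_1 \cup \Sigma_2}$ decomposes into $\theta\Gamma|_{\Sigma_1 \setminus \Sigma_2}$, $\theta\Gamma|_{\Sigma_2 \setminus \Sigma_1}$, and $\theta\Gamma|_{\Sigma_1 \cap \Sigma_2}$; that the shared block $\theta\Gamma|_{\Sigma_1 \cap \Sigma_2}$ is forced unlimited by the $\makeunl{\cdot}$ constraints in $Q$, so that it may legitimately be copied into both halves of the split via the unlimited-splitting rule; and that a bound variable that goes unused (e.g.\ $x \in \setcomplement{\Sigma_2}$ in \tilab{Let} and \tilab{Seq}, or an unused $p_i$ or $r_i$ in \tilab{Handler}) lands in the complement set and is likewise constrained unlimited, discharging the declarative rule's ``provably-unlimited leftover'' premise.

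The two places where real care is needed --- and where I expect the main difficulty to lie --- are the generalisation step in \tilab{Let} and getting the simulation of declarative context splitting by the $\Sigma$-sets exactly right. For the generalisation step, the algorithm builds $\sigma = \gen(\theta_1\Gamma, P_1 \qto A)$ from the \emph{full} ambient context, whereas \qtylab{Let} generalises over the context actually used to type $V$, which by the induction hypothesis is the \emph{restricted} context $\theta_1\Gamma|_{\Sigma_1}$; the latter has no more free type variables, so the declarative scheme is at least as general as the algorithm's. Reconciling these needs a monotonicity lemma --- a more general scheme in a hypothesis position only enlarges the set of derivable judgements, i.e.\ $\typ{P \mid \Gamma, x:\sigma}{M:C}$ and $\sigma' \sqgen \sigma$ imply $\typ{P \mid \Gamma, x:\sigma'}{M:C}$ --- together with the standard observation that inference introduces only fresh variables, so no emitted constraint mentions the variables that the declarative $\gen$ newly quantifies. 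The remaining, purely Hindley--Milner cases (the value and computation forms that do not touch linearity or effects, and implicit instantiation in place of explicit type application) then follow the classical Algorithm~W soundness proof with only cosmetic changes.
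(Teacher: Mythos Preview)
Your proposal is correct and follows essentially the same approach as the paper: mutual induction on the inference derivation, using closure of typing under substitution, weakening of the predicate set, correctness of $\factor$, and an ``extra is unlimited'' lemma to reinsert unused bound variables. Your treatment of the \tilab{Let} generalisation mismatch via a more-general-schemes lemma is slightly more explicit than the paper's, which instead argues directly that $\theta_2$ cannot touch the variables bound in $\sigma$ and hence $\theta_2\sigma$ coincides with the scheme recomputed over $\theta_2\theta_1\Gamma$; both routes close the case.
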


\begin{restatable}[Completeness]{theorem}{completeness}
  If $\typ{P\mid \theta\Gamma}{V : A}$, then
  $\infer{\Gamma}{V}{A'}{\theta'}{Q}{\Sigma}$ and there exists
  $\theta''$ such that $A = \theta'' A'$, $P \vdash \theta'' Q$, and
  $(\theta = \theta'' \theta')|_\Gamma$. The same applies to
  computation and handler typing.
\end{restatable}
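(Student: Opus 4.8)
The plan is to prove completeness by induction on the syntax-directed typing derivation of $\typ{P\mid\theta\Gamma}{V:A}$, simultaneously with the analogous statements for computations and handlers, following the standard shape of Algorithm~W completeness for qualified types~\citep{jones94} instrumented with the extra bookkeeping that linearity and effect predicates require. The real content lives in a few auxiliary lemmas proved first. \emph{(i)~Completeness of unification}: whenever $\psi_0 A = \psi_0 B$ for some $\psi_0$, the call $\unify{A}{B}{\psi}$ succeeds with a most general unifier $\psi$, and $\psi_0 = \psi_0'\psi$ on the free variables of $A$ and $B$. \emph{(ii)~Correctness of $\factor$ and its derived generators} $\makeleq{-}$, $\makeunl{-}$, $\makesub{-}$, $\makelack{-}$: the predicate set $Q$ produced from a predicate $\pi$ is equivalent to $\pi$ modulo fresh variables, so in particular $P\vdash\theta''\pi$ implies $P\vdash\theta'' Q$, and $Q$ over-commits to nothing. \emph{(iii)~Stability of typing and entailment under substitution}, plus a \emph{generalisation lemma}: if a scheme $\sigma$ is at least as general as $\sigma_0$ in the qualified-types ordering and $\typ{P'\mid\Gamma,x:\sigma_0}{M:C}$ holds, then so does $\typ{P\mid\Gamma,x:\sigma}{M:C}$ for a suitable $P$ obtained by instantiating the extra generalised variables. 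Given these, the main induction is mostly assembly: invert the syntax-directed rule; apply the induction hypotheses to the premises, using that inference emits linearity constraints only at binders and so never rejects a well-typed subterm --- which lets the declarative derivation's (possibly split) sub-contexts lift to the full contexts on which the algorithm recurses; compose the resulting substitutions left-to-right exactly as the inference rule does; discharge unification side conditions with~(i) and entailment side conditions with~(ii); and assemble the witnessing $\theta''$ from the per-premise witnesses and the most general unifiers.

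The value cases together with the simpler computation cases --- variables, abstractions, applications, returns, and handling --- follow this recipe directly; the only linearity-specific ingredient is that restricting each context to its used variables $\Sigma$ is justified because the syntax-directed rules likewise charge only used variables. In \tilab{Do}, the side condition $P\vdash\{\ell:A_\ell\sto^Y B_\ell\}\rle R$ of \qtylab{Do} is matched by the freshly emitted $\makesub{(\ell:A_\ell\sto^\phi B_\ell),\mu}$ once $\theta''$ is extended to send $\phi\mapsto Y$ and $\mu$ to the tail of $R$; lemma~(ii) then yields the required entailment.

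The hard part will be \tilab{Let}, owing to the interaction between generalisation and qualified types, compounded by our non-standard treatment of the linearity of type schemes (rules \plab{Quantifier} and \plab{Qualifier} of \Cref{fig:qe-predicates-entailment}). Given $\typ{P\mid\theta\Gamma}{\Let x = V \In M : C}$ by \qtylab{Let} with bound scheme $\sigma_0 = \gen((\Gamma_1,\Gamma),Q'\qto A_0)$, I apply the IH to the premise for $V$ to obtain an inference derivation with type $A_1$, substitution $\theta_1$, constraints $P_1$ and used set $\Sigma_1$, and then show that the algorithmically generalised scheme $\sigma = \gen(\theta_1\Gamma, P_1\qto A_1)$ is at least as general as $\sigma_0$ (modulo the witnessing substitution on $\Gamma$); the generalisation lemma then feeds $\sigma$ into the IH for $M$. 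The delicate point is the treatment of $x$ when it is discarded or duplicated: the algorithm emits $\makeunl{\theta_2(x:\sigma)|_{\setcomplement{\Sigma_2}}}$ and $\makeunl{\theta_2\theta_1\Gamma|_{\Sigma_1\cap\Sigma_2}}$, and completeness demands that these be entailed by $P$ under $\theta''$. Because \plab{Quantifier} licenses $\sigma\linle\Unl$ by exhibiting a \emph{single} instantiation while \plab{Qualifier} forces that instantiation's qualifier to hold, $\factor$ applied to $\sigma\linle\Unl$ must re-emit precisely the qualifier predicates that mention variables bound \emph{outside} the scheme (the $\lambda x.\Let f=\lambda\Unit.x\In\Ret(f,f)$ phenomenon discussed after \plab{Qualifier}) --- this is exactly the behaviour that repairs the \Quill soundness bug, and establishing the instance of lemma~(ii) for predicates $\sigma\linle\tau$, faithfully mirroring the existential instantiation in \plab{Quantifier}, is the crux of the proof. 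Once that instance is available, the residual bookkeeping for \tilab{Let} --- splitting $\Sigma_1\cup\Sigma_2$, restricting contexts, composing $\theta_2\theta_1$ --- follows the familiar pattern.

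The \tilab{Seq} case is a stripped-down \tilab{Let} without generalisation; its only new ingredients are matching $P\vdash\Gamma_2\linle R_1$ against $\makeleq{\theta_2\theta_1\Gamma|_{\Sigma_2},\theta_2 R_1}$ and $P\vdash R_1\rle R$, $P\vdash R_2\rle R$ against the two $\makesub{-}$ constraints, all discharged by~(ii) with the fresh row variable instantiated to $R$. The \tilab{Handler} case additionally emits a lacking constraint $\makelack{-}$ mirroring $P\vdash R_1\rlack\{\ell_i\}_i$ and otherwise reuses the same machinery, and the remaining inference rules are routine. Overall I expect essentially all the difficulty to be concentrated in lemma~(ii) specialised to $\sigma\linle\tau$ and in the generalisation lemma --- i.e.\ in getting the generalise/instantiate interface for qualified linearity predicates exactly right --- with everything else closely mirroring classical Algorithm~W completeness for qualified types.
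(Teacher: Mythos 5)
Your proposal follows essentially the same route as the paper's proof: mutual induction on the syntax-directed derivations, with principal unifiers for the unification side conditions, correctness of $\factor$ (in particular its second part for $\sigma\linle\tau$, where the fresh variable mirrors the existential instantiation in \plab{Quantifier}) for the emitted constraints, closure of typing and entailment under substitution, and a more-general-contexts lemma to thread the algorithmic scheme through the IH in the \tilab{Let} case. The only ingredient you leave implicit is the ``zero is unlimited'' lemma (if $x$ does not occur in $M$ then $P\vdash\sigma\linle\Unl$ follows from the declarative derivation), which the paper states separately and needs before the factorisation lemma can discharge $\makeunl{\theta_2(x:\sigma)|_{\setcomplement{\Sigma_2}}}$; at the level of a plan this is a minor omission.
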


\noindent
The proofs can be found in \Cref{app:proof-type-inference} and depend
on the correctness of $\factor$, discussed next.
Note that we do not need to incorporate the subtyping relation into
the statement of the completeness theorem because we only have
subtyping between row types and do not allow implicit subsumption
(unlike traditional subtyping systems).

\subsection{Factorising Predicates}
\label{sec:factorising}

\begin{figure}[htb]
  \begin{minipage}[t]{0.48\textwidth}
  \[
  \ba{@{}l@{}}
  \factor : \meta{Pred} \to \meta{PSet} \\
  \factor(\tau \linle \tau) = \emptyset \\
  \factor(\tau \linle \Lin) = \emptyset \\
  \factor(\Unl \linle \tau) = \emptyset \\
  \factor(A\to^Y C \linle \tau) = \factor(Y \linle \tau) \\
  \factor(\tau \linle \cR\lsep\mu) = \\
  \quad \factor(\tau\linle \cR)
    \cup \factor(\tau\linle \mu) \\
  \factor(\tau \linle \cR) = \\
  \quad \bigcup_{(\ell:A\sto^Y B)\in \cR} {\factor(\tau\linle Y)} \\
  \factor(R_1 \rle R_2) = \emptyset, \text{ when } \rowset{R_1}\subseteq\rowset{R_2} \\
  \factor(R \rlack \mL) = \emptyset, \text{ when } \dom{R}\cap\mL = \emptyset \\
  \factor(\pi) = \pi
  \ea
  \]
  \end{minipage}
  \begin{minipage}[t]{0.48\textwidth}
  \[
  \ba{@{}l@{}}
  \factor : (\meta{TySch} \linle \meta{Type}) \to \meta{PSet} \\
  \factor((\forall\alpha.\sigma) \linle \tau) =\\
  \quad \factor([\beta/\alpha] \sigma \linle \tau)
    \text{ for some fresh }\beta \\
  \factor((\pi\qto\sigma) \linle \tau) = \\
  \quad \factor(\pi) \cup \factor(\sigma\linle\tau)
  \bigskip \\

  \factor : (\meta{Env} \linle \meta{Type}) \to \meta{PSet} \\
  \factor(\Gamma \linle \tau) =
  \bigcup_{(x:\sigma)\in \Gamma} {\factor(\sigma\linle\tau)}
  \bigskip \\

  \factor : \meta{PSet} \to \meta{PSet} \\
  \factor(P) = \bigcup_{\pi\in P} {\factor(\pi)}
  \ea
  \]
  \end{minipage}

\caption{Factorisation of Constraints}
\label{fig:constraints-factorisation}
\end{figure}

\noindent
The $\factor$ function is defined in \Cref{fig:constraints-factorisation}; it factors
constraints into simpler predicates following the entailment rules in
\Cref{fig:qe-predicates-entailment}.
We use $K$ to represent rows consisting of only operation labels.

The only surprising case is for $(\forall\alpha.\sigma)\linle\tau$.
Rule \plab{Quantifier} requires that we find some instance such that
$\sigma[\tau'/\alpha] \linle \tau$.  Rather than search for such an instance, we
simply pick a fresh type variable $\beta$.
As a result, our type inference algorithm is likely to produce \emph{ambiguous}
type schemes, in which quantified type variables appear \emph{only} in
predicates. Such type schemes are typically rejected \citep{jones94}, as the
meaning of ambiguously typed terms is undefined.  However, as our linearity
predicates do not have any intrinsic semantics, but only constrain the use of
terms, we do not believe these constraints lead to semantic ambiguity.
One interesting property of $\factor$ is that the linearity predicates
in its results are only between value type variables $\alpha$, row
type variables $\mu$, and linearity types $Y$.

We prove the correctness of $\factor$ with respect to the entailment
rules in \Cref{fig:qe-predicates-entailment}.
\begin{restatable}[Correctness of factorisation]{theorem}{constraintFactorisation}
\label{thm:correctness-of-factorisation}
If $\factor(P) = Q$, then $Q \vdash P$ and $P\vdash Q$.
If $\factor(\Gamma\linle\tau) = Q$, then $Q\vdash\Gamma\linle\tau$ and
for any $P\vdash\Gamma\linle\tau$,
there exists $\theta$ such that $P\vdash \theta Q$.
\end{restatable}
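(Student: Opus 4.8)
The statement bundles three claims: that $\factor(P)=Q$ implies $Q\vdash P$; that $\factor(P)=Q$ implies $P\vdash Q$; and the soundness and principality statements for $\factor(\Gamma\linle\tau)$. All three go by structural induction following the recursive definition of $\factor$ in \Cref{fig:constraints-factorisation}. Before starting I would record two routine structural facts about the entailment system of \Cref{fig:qe-predicates-entailment}. \emph{Weakening}: $P\subseteq P'$ and $P\vdash\pi$ imply $P'\vdash\pi$, since \plab{Subsume} is the only rule that inspects the antecedent set. \emph{Cut}: $P\vdash Q$ and $Q\vdash\pi$ imply $P\vdash\pi$; here one takes the derivation of $Q\vdash\pi$ and replaces each \plab{Subsume}-leaf that introduces some $q\in Q$ by the derivation of $P\vdash q$, which is legitimate because every rule other than \plab{Subsume} carries the antecedent through unchanged. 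Via \plab{PredSet} and \plab{Context} these reduce all three claims to per-predicate and per-type-scheme statements.

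For the two claims about $\factor(P)=Q$ I would induct on the predicate $\pi$, proving $\factor(\pi)\vdash\pi$ and $\{\pi\}\vdash\factor(\pi)$ together. The organising observation is that each clause of $\factor$ mirrors an entailment rule: the clauses returning $\emptyset$ match the rules that discharge their goal outright (\plab{Refl}, \plab{Lin}, \plab{Unl}, \plab{Sub}, \plab{Lack}); the clauses that recurse into a subterm match the bidirectional rules (\plab{Fun} on the arrow, \plab{Row} on rows --- note that $\factor$ splits a row into its operation components together with its tail variable exactly as \plab{Row} prescribes); and the catch-all $\factor(\pi)=\pi$ lands precisely on the irreducible predicates between type variables, which \plab{Subsume} handles. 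The direction $\factor(\pi)\vdash\pi$ then follows immediately from the matching rule and the induction hypothesis; the direction $\{\pi\}\vdash\factor(\pi)$ additionally uses cut to compose the one-step bidirectional entailment with the induction hypothesis.

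For $\factor(\Gamma\linle\tau)$ I would induct on type schemes, \plab{Context} having reduced the problem to each component $\sigma\linle\tau$. The soundness half, $Q\vdash\Gamma\linle\tau$, comes from replaying the recursion of $\factor$ as an entailment derivation: \plab{Qualifier} for $\factor((\pi\qto\rho)\linle\tau)=\factor(\pi)\cup\factor(\rho\linle\tau)$, and \plab{Quantifier} for $\factor((\forall\alpha.\sigma)\linle\tau)=\factor([\beta/\alpha]\sigma\linle\tau)$, where we simply supply the fresh $\beta$ itself as the \plab{Quantifier} witness. The principality half --- for every $P\vdash\Gamma\linle\tau$ there is a $\theta$ with $P\vdash\theta Q$ --- I would prove by induction on the derivation of $P\vdash\sigma\linle\tau$. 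The point that makes this work is that when the left-hand side is a \emph{proper} scheme $\forall\alpha.\sigma$ or qualified type $\pi\qto\rho$ (as opposed to a bare value type), the only applicable rule is \plab{Quantifier}, respectively \plab{Qualifier}: the grammar of $\tau$ excludes these forms and they are not predicates, so none of the base rules (\plab{Refl}, \plab{Lin}, \plab{Unl}, \plab{Row}, \plab{Subsume}) applies and \plab{Fun} requires a function type. Hence the given derivation decomposes $\sigma$ in lockstep with $\factor$; I read off, at each \plab{Quantifier} step, the witness it uses and let $\theta$ send the corresponding $\factor$-introduced fresh variable to that witness (and be the identity elsewhere; the fresh variables of the various components and quantifiers are pairwise distinct, an invariant I would carry along). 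The residual \plab{Qualifier} obligations $P\vdash\pi$ and the base obligations $P\vdash A\linle\tau$ then line up, through the first two claims and cut, with what $\theta Q$ asks for --- noting that $\factor(\pi)$ contains no fresh variables, so $\theta$ leaves it untouched.

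The main obstacle is the substitution bookkeeping in the \plab{Quantifier} case: $\factor$ does not literally commute with substitution --- substituting, say, an open row for a tail variable in $\tau\linle\mu$ triggers the further factorisation that $\factor$ performs on $\factor(\theta(\tau\linle\mu))$ but had left implicit inside $\theta\factor(\tau\linle\mu)$ --- so I cannot equate $\theta\factor(\xi)$ with $\factor(\theta\xi)$. I would instead prove the weaker lemma that $\theta\factor(\xi)$ and $\factor(\theta\xi)$ are $\vdash$-interprovable for any substitution $\theta$ (once more leaning on the bidirectionality of \plab{Fun} and \plab{Row}) and thread the \plab{Quantifier} case through it using cut. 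Beyond that, everything is routine case analysis over the clauses of \Cref{fig:constraints-factorisation}.
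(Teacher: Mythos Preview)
Your treatment of the first claim and of the soundness half of the second claim is the same as the paper's: both proceed by induction on the clauses of $\factor$, pairing each clause with the corresponding entailment rule (the bidirectionality of \plab{Fun} and \plab{Row} doing the work in the recursive cases).

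For the principality half you take a genuinely different route. The paper inducts on the definition of $\factor$ and uses a separately proved \emph{inverse closure} lemma: if $P\vdash\theta(\sigma\linle\tau)$ then there exists $P'$ with $P'\vdash\sigma\linle\tau$ and $P\vdash\theta P'$. In the quantifier case the paper inverts \plab{Quantifier} to obtain $P\vdash[\tau'/\alpha]\sigma'\linle\tau$, rewrites this as $P\vdash[\tau'/\beta]([\beta/\alpha]\sigma'\linle\tau)$, and applies inverse closure to peel off $[\tau'/\beta]$ and land on some $P'\vdash[\beta/\alpha]\sigma'\linle\tau$ --- exactly the hypothesis needed for the recursive $\factor$ call. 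You instead induct on the entailment derivation, so your induction hypothesis speaks about $\factor([\tau'/\alpha]\sigma'\linle\tau)$, and you close the gap on the $\factor$ side with the interprovability lemma $\theta\,\factor(\xi)\dashv\vdash\factor(\theta\xi)$. Both auxiliary lemmas bridge the same mismatch between the derivation's witness $\tau'$ and $\factor$'s fresh $\beta$, just from opposite directions: the paper pulls the substitution back through entailment, you push it forward through $\factor$. Your lemma is itself a corollary of the first claim together with closure under substitution and cut, so it reuses what is already in hand; the paper's lemma is standalone and arguably more reusable. The paper's route also avoids carrying the invariant that $\dom\theta$ consists only of $\factor$-fresh variables (needed so that $\theta$ leaves the $\factor(\pi)$ components untouched at \plab{Qualifier} steps), which you correctly identify but which does add bookkeeping.
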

The proof can be found in \Cref{app:factorisation}.

\subsection{Constraint Solving}
\label{sec:constraint-solving}

Finally, we must check that inferred constraint sets are satisfiable; we do not
want to conclude that a program is well-typed, but only under the assumption
that a linear type is unlimited.

We define a constraint solving algorithm $\solve(P)$ for checking the
satisfiability of the predicate set $P$, inspired by solving
algorithms for general subtyping
constraints~\citep{Pottier01,pottier1998,pretnar14}.
The tricky part compared to solving usual subtyping constraints is
that we need to carefully deal with the interaction between row
subtyping constraints and linearity constraints.
For instance, $R_1\rle R_2$ and $\tau\linle R_2$ actually implies
$\tau\linle R_1$.
To resolve the interaction, the algorithm proceeds by first
transforming row subtyping constraints to those of the forms
$\mu\rle R$, so that we can always simply instantiate $\mu$ on the
left to the empty row $\cdot$ for which $\tau\linle\cdot$ always
holds.
Then, the algorithm computes the transitive closure of linearity
constraints and rejects $\Lin \linle \Unl$.
The full algorithm is given in \Cref{app:qfp-constraint-solving}.
We have the following theorem on the correctness of the constraint
solving algorithm, in which we write $\satsubst{P}\theta$ for the substitution set
$\{\theta'\theta\mid \theta'\in\satsubst{P}\}$.
\begin{restatable}[Correctness of constraint solving]{theorem}{constraintSolving}
  For any constraint set $P$ generated by the type inference of
  \CalcS, $\solve(P)$ always terminates.
\begin{itemize}
  \item If it fails, then $P$ is not satisfiable.
  \item If it returns $(\theta, Q)$, then $P$ is satisfiable and
  $\satsubst{P} = \satsubst{Q}\theta$.
\end{itemize}
\end{restatable}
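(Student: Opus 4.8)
\emph{Setup.} The plan is to regard $\solve$ as a small-step rewriting system on configurations $(\theta_i, P_i)$, starting from $(\mathrm{id}, P)$, where the $i$-th step applies some substitution $\rho_i$ and rewrites the working constraint set, setting $\theta_{i+1} = \rho_i\theta_i$. Following the two phases sketched in \Cref{sec:constraint-solving} — first bring the row subtyping constraints into the form $\mu \rle R$, then saturate the linearity constraints — I would split the proof into termination, soundness of failure, and soundness-plus-principality of success, and obtain all three from a single loop invariant. Throughout I exploit the hypothesis that $P$ is a constraint set \emph{generated by type inference}, so that (by \Cref{thm:correctness-of-factorisation}) its linearity predicates only ever relate value-type variables, row-type variables, and $\Lin/\Unl$, and its row predicates have the restricted shape produced by $\factor$.

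\emph{Termination.} I would use a lexicographic measure. For the row phase, take (roughly) the multiset of sizes of the left-hand sides of row constraints not yet of the form $\mu \rle R$; decomposing a compound constraint $\ell:P;R_1 \rle R_2$, or committing a forced substitution on a row variable, strictly decreases it, and the fresh row variables introduced to open a row variable occurring on the right are bounded in number by the labels already present, so the order is well founded. For the linearity phase, the set of atoms occurring in linearity constraints is fixed, hence there are only finitely many possible atomic constraints $\tau_1 \linle \tau_2$, so transitive saturation reaches a fixpoint in finitely many steps. Composing the two measures lexicographically gives termination.

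\emph{Invariant and conclusion.} The invariant is $\satsubst{P} = \satsubst{P_i}\,\theta_i$, where freshly introduced variables are treated as unconstrained and may be assigned arbitrarily. It suffices to establish the local statement $\satsubst{P_i} = \satsubst{P_{i+1}}\,\rho_i$ for each rewrite rule: for rules that replace a compound constraint by its $\factor$-image this is exactly \Cref{thm:correctness-of-factorisation} together with the admissibility, at the level of solutions, of transitivity of $\linle$ and of ``$R_1 \rle R_2$ and $\tau \linle R_2$ imply $\tau \linle R_1$'' (both noted in the text); for rules that commit a substitution on a variable one checks directly that $\rho_i$ is a most general substitution making the triggering constraint hold, so every solution of $P_i$ factors through it. Given the invariant, termination forces a normal configuration. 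If it is a \emph{failure} configuration — the paper's explicit one being a surviving $\Lin \linle \Unl$ among the closed linearity constraints, together with any analogous unsatisfiable residual row constraint — then $\satsubst{P_i} = \emptyset$ since $\cdot \not\vdash \Lin \linle \Unl$, whence $\satsubst{P} = \emptyset$ by the invariant, i.e.\ $P$ is unsatisfiable. If it is a \emph{success} configuration with residual set $Q$ and accumulated substitution $\theta$, then every row constraint has the form $\mu \rle R$, so mapping each such left-hand $\mu$ to $\cdot$ discharges all row constraints and makes every $\tau \linle \mu$ vacuously true by \plab{Row}, while the closed linearity constraints — now free of $\Lin \linle \Unl$ — are discharged by sending each remaining linearity variable to its forced lower bound ($\Lin$ if pinned below, else $\Unl$) and each remaining value-type variable to a base type of the forced linearity; hence $\satsubst{Q} \neq \emptyset$, so $P$ is satisfiable, and $\satsubst{P} = \satsubst{Q}\,\theta$ is just the invariant at the last step.

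\emph{Main obstacle.} The hard part is the interaction between row subtyping and linearity constraints flagged in \Cref{sec:constraint-solving}. The delicate local preservation claims are precisely the bridging steps — eliminating or substituting a row variable must neither strengthen nor weaken the induced linearity obligations — which is where the admissibility facts above must be used carefully, and where the bookkeeping of freshly introduced row variables inside the $\satsubst{\cdot}$ sets is fiddly; ensuring termination of the row phase despite those fresh variables is a close second.
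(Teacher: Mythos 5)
Your proposal is correct and follows essentially the same route as the paper's proof: a step-by-step solution-preservation invariant $\satsubst{P} = \satsubst{P_i}\theta_i$ (the paper states this as per-function lemmas for $\solverow$, $\factor$, and $\solvelin$), termination of the row phase via finiteness of the label set despite re-solving after row-variable substitution, admissibility of transitive closure for $\linle$ using the fact that inference-generated linearity constraints only put rows on the right, and a concrete witness substitution (empty rows for left-hand row variables) in the success case. The only cosmetic difference is your witness for the residual linearity constraints (forced lower bounds) versus the paper's uniform choice of $\Unl$ and $\TUnit$; both discharge the same residual set.
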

The proof can be found in \Cref{app:proof-constraint-solving}, whose
main idea is to show that every step of the algorithm preserves
solutions, and the output predicate set has one solution.

We leave the design of constraint simplification algorithms as
practical concerns.
Some existing algorithms on simplifying general subtyping constraints
 are promising~\cite{pottier1998, Pottier01}. %

\section{Shallow Handlers}
\label{sec:extensions}

Up to now we have concentrated on \emph{deep} effect handlers, which
wrap the original handler around the body of captured continuations.
Given this automatic reuse of the handler, the handler itself cannot
capture any linear resources.
In contrast, shallow handlers~\citep{KammarLO13,HillerstromL18} do not
wrap the original handler around the body of captured continuations,
which means shallow handlers \emph{can} capture linear resources and
thus influence control-flow linearity.
In this section, we discuss the extensions of \Calc and \CalcS with
shallow handlers and their challenges.

Let us first consider shallow handlers in \Calc.
We write $H^\dagger$ for a shallow handler.
The only difference in the operational semantics is the new
\semlab{Op$^\dagger$} rule for handling with shallow handlers.
\begin{reductions}
  \semlab{Op$^\dagger$} &
    \Handle \EC[(\Do \ell \; V)^E] \With H^\dagger
      &\reducesto& N[V/p, (\lambda^Y y^B.\EC[(\Ret y)^E])/r], \\
  \multicolumn{4}{@{}r@{}}{
        \text{where } \ell \notin \BL{\EC},
        (\ell \; p \; r \mapsto N) \in H^\dagger \text{ and }
        (\ell:A\to^Y B) \in E
  }
\end{reductions}
Unlike in \semlab{Op}, the body of the continuation is not handled by
$H^\dagger$.
Whereas deep handlers perform a fold over a computation trees shallow
handlers perform a case-split.
As such, we know that exactly one operation clause or the return
clause will be invoked, and providing all allowed operations are
linear each clause may capture the same linear resources.
The typing rule is as follows.
\begin{mathpar}
  \inferrule*[Lab=\tylab{ShallowHandler}]
    {
      H = \{\Ret x \mapsto M\} \uplus \{ \ell_i\;p_i\;r_i \mapsto N_i \}_i
      \\\\
      C = A \eff \{(\ell_i : A_i \sto^{Y_i} B_i)_i; {R}\} \\
      D = B \eff \{(\ell_i : P)_i; {R}\}\\\\
      \hlbox{\Delta\vdash \Gamma:Y}\\
      \hlbox{\Delta\vdash R:Y}\\
      \typ{\Delta;\Gamma, x : A}{M : D}\\\\
      [\typ{\Delta;\Gamma, p_i : A_i, r_i : B_i \to^{Y_i} {C}}{N_i : D}]_i
    }
    {\typ{\Delta;\Gamma}{H^\dagger : C \hto D}}
\end{mathpar}
Instead of requiring value linearity of $\Gamma$ to be unlimited as in
the deep handler rule \tylab{Handler}, we require the value linearity
of $\Gamma$ to coincide with the control-flow linearity of $R$, the
effect row of the unhandled operations.
This is because the shallow handler may be captured as part of the
continuations of these unhandled operations in outer handlers.
Concretely, when $Y=\Lin$, the shallow handler may use linear
variables from the context, and unhandled operations are control-flow
linear; when $Y=\Unl$, the shallow handler cannot use any linear
variables from the context, and we have no restriction on the
control-flow linearity of unhandled operations.

We can also easily extend \CalcS with shallow handlers.
\begin{mathpar}
  \inferrule*[Lab=\qtylab{ShallowHandler}]
  {
    H = \{\Ret x \mapsto M\} \uplus \{ \ell_i\;p_i\;r_i \mapsto N_i \}_i\\\\
    C = A \eff \{(\ell_i : A_i \sto^{Y_i} B_i)_i; R_1\} \\
    D = B \eff \{R_2\} \\\\
    \typ{P\mid\Gamma, x : A}{M : D}\\
    [\typ{P\mid\Gamma, p_i : A_i, r_i : B_i \to^{Y_i} {C}}{N_i : D}]_i \\\\
    \hlbox{P \vdash \Gamma \linle R_1} \\
    P \vdash R_1 \rle R_2 \\
    P \vdash R_1 \rlack \{\ell_i\}_i
  }
  {\typ{P\mid\Gamma}{H : C \hto D}}
\end{mathpar}
In place of $P\vdash\Gamma\linle\Unl$ in \qtylab{Handler}, we have
$P\vdash\Gamma\linle R_1$, which restricts the value linearity of the
type context to be less than the control-flow linearity of unhandled
operations in $R_1$.

Shallow handlers are typically used together with recursive functions
to implement more general recursive behaviours than the structural
recursion of deep handlers. It is straightforward to extend \Calc and
\CalcS with recursive functions \citep{fpop,HillerstromLA20}.
Obviously recursive functions are themselves unlimited so cannot
capture linear resources, but that does not preclude explicitly
threading a linear resource through a recursive function that installs
a shallow handler.
We use the syntax $\Rec f\,x.M$ to define a recursive function $f$
with parameter $x$ and function body $M$. The typing rules and
semantics rule for it in \Calc and \CalcS are as follows.
\begin{mathpar}
  \inferrule*[Lab=\tylab{Rec}]
  {
    \typ{\Delta;\Gamma, f:A\to^\Unl C, x:A}{M:C} \\
    \Delta\vdash \Gamma : \Unl
  }
  { \typ{\Delta;\Gamma}{\Rec f^{A\to^\Unl C}\, x.M : A\to^\Unl C} }

  \inferrule*[Lab=\qtylab{Rec}]
  {
    \typ{\Delta;\Gamma, f:A\to^\Unl C, x:A}{M:C} \\
    P\vdash \Gamma \linle \Unl
  }
  { \typ{P\mid\Gamma}{\Rec f\,x.M : A\to^\Unl C} }

\end{mathpar}
\begin{reductions}
  \semlab{Rec}   & (\Rec f\,x.M)\,V &\reducesto & M[(\Rec f\,x.M)/f, V/x]
\end{reductions}

As an example, we can write the following recursive function
$\dec{withFile}\,f$ which takes a file handle $f$ and interprets all
$\Print$ operations in $M$ as writing to file $f$.
\[\ba{l@{\;}c@{\;}l@{\;}l}
  \dec{withFile}\,f &=&
  \Rec \dec{withFile}\,f .\Handle M \With \span \\
    & & \{
      \Ret x &\mapsto \Close\,f \Seq x
      \\
    & & \phantom{\{}
      \Print\,s\,r &\mapsto \Let f'\revto \WriteFile\,(s,f) \In \dec{withFile}\,f'\,r
    \} \\
\ea\]
Note that this example can also be implemented with a deep handler by
requiring the handler to return a function which takes the file handle
as a parameter.
Shallow handlers provide us with a more direct programming style.

Although our two new typing rules are straightforward and entirely
backward compatible with the current systems, shallow handlers can
actually introduce more challenges to track control-flow linearity.
This is essentially because shallow handlers are more flexible than
deep handlers and do not handle all invocations of the same operation
uniformly.
With only deep handlers, it is natural for all invocations of an
operation to have the same control-flow linearity as they are handled
by the same handler.
However, with shallow handlers, different invocations of the same
operation can be handled by different handlers, resulting in different
control-flow linearity.
For example, consider the following program $\dec{hesitantClose}$
which makes choices before and after closing the file $f$.
\[\ba{l@{\;}c@{\;}l}
  \dec{hesitantClose} &=& \lambda f . \Do \Choose\,\Unit; \CloseFile\,f; \Do \Choose\,\Unit
\ea\]
The continuation of the first $\Choose$ contains the linear file
handle $f$, whereas the second one does not.
Technically, the handler for the second $\Choose$ can resume any
number of times.
However, neither the effect system of \Calc nor that of \CalcS is able
to ascribe a different control-flow linearity to the two invocations
of $\Choose$, which means we must handle both invocations linearly.
One potential solution is to track the order and duplication of
effects in the effect system.
However, this kind of information is known to be too cumbersome for
effect systems.
A more lightweight solution is to exploit named
handlers~\citep{XieCIL22,BiernackiPPS20} to assign $\Choose$
operations in different positions to different shallow handlers.
We leave the design of an ergonomic and expressive effect system for
tracking control-flow linearity of shallow handlers to future work.

\section{Related Work}
\label{sec:related-work}

\newcommand{\csharp}{\text{C}\#}

\paragraph{Linear Resources and Control Effects}
Exception handlers with finally clauses are a common way of managing
linear resources. Exception handlers provide a form of unwind
protection, which enables the programmer to supply the logic to
release acquired resources in the finally clause, which gets executed
irrespective of whether a fault occurs.
Similarly, the \lstinline$defer$ statement in \Go{}~\citep{DonovanK15}
defers the execution of its operand until the defining function
returns either successfully or via a fault. Thus the programmer can
conveniently acquire a particular resource and include the deferred
logic for releasing it on the next line of code.
Another variation is automatic resource block management as in the \Cpp{}
RAII idiom~\citep{CombetteM18} and \Java{}'s
\lstinline$try-with-resource$~\citep{GoslingJSBBSB23}, both of which
offer a means for automatically acquiring and releasing resources in
the static scope.
In \Scheme{} the fundamental resource protection mechanism is the
procedure \lstinline$dynamic-wind$~\citep{FriedmanH85}.
It is a generalisation of unwind protection intended to be used in the
presence of first-class control, where control may enter and leave the
same computation multiple times.
It takes three functional arguments: the first is the resource
acquisition procedure, which gets applied when control enters
\lstinline$dynamic-wind$; the second is the main computation, which
may use the acquired resources; and the third is the resource release
procedure, which is applied when control is about to leave
\lstinline$dynamic-wind$.

\citet{BrachthauserL23} present a constraint system based on qualified
types for programming with multi-shot effect handlers and linear
resources in \Koka. They use these constraints to mark some effects as
linear. However, they do not include a linear type system and instead
rely on pre-declaring the linearity of operations (i.e., no inference
for control-flow linearity) and a syntactic check to ensure that
resumptions are not invoked more than once.
Compared to the qualified effect system of \CalcS, their system does
not support effect subtyping and abstraction over linearity.

\newcommand{\ural}{\ensuremath{\lambda^{\mathsf{URAL}}}\xspace}
\newcommand{\uralc}{\ensuremath{\lambda^{\mathsf{URAL}}(\mathscr{C})}\xspace}
\paragraph{Structural Types and Control Effects}

\citet{TovP11} propose a calculus \uralc which extends the
substructural $\lambda$-calculus \ural~\citep{AhmedFM05} with abstract
control effects $\mathscr{C}$ given by a set of effects, a pure
effect, and an effect-sequencing operator.
They show how to instantiate \uralc with concrete control effects
including exceptions and shift/reset~\citep{DanvyF90} separately.
Similar to \Calc and \CalcS, the \uralc calculus also uses
type-and-effect system to check that control effects do not violate
the substructural usage guarantees for values.
It includes a judgement on effect types to determine whether control
effects may discard or duplicate their continuations, which roughly
corresponds to our notion of control-flow linearity.
The main difference between our work and \uralc is that we consider
the tracking of control-flow linearity in the presence of algebraic
effects and effect handlers, which are more involved than exceptions
and shift/reset both statically and dynamically.
While it is theoretically possible to instantiate \uralc to effect
handlers, this task is itself highly non-trivial due to the richer
effect systems of effect handlers.
Conversely, we can also easily encode exceptions and shift/reset as
user-defined effects in \Calc and \CalcS using effect
handlers~\citep{ForsterKLP19, PirogPS19}.

\paragraph{Linear Type Systems}
Type inference with linear types is a well-studied area. \citet{fpop}
propose using kinds to track linearity, using subkinding to enable
polymorphism over linearities. \citet{alms} develop an expanded
approach to tracking structural restrictions in kinds; among other
differences they introduce subtyping for function types and require
fewer explicit linearity annotations than \citeauthor{fpop}.
\citet{GanTM14} use qualified types to characterise types that admit
structural rules in a substructural type system: for example, in a
linear type system, unlimited types are exactly types $\tau$ that
support operations $\mathsf{dup} : \tau \to (\tau, \tau)$ and
$\mathsf{drop} : \tau \to ()$.
\citet{quill} extends the approach of \citeauthor{alms} to generalise
the treatment of function types, introducing the linearity ordering
constraint $\tau \linle \upsilon$; he also generalises their
description of unlimited types to type schemes, but does so unsoundly.
In contrast, the current work does not interpret unlimited types via
operations like $\mathsf{dup}$ and $\mathsf{drop}$; we also avoid
Morris's unsoundness in the treatment of type schemes. An alternative
approach tracks linearity exclusively in function types, rather than
in kinds.  This approach is developed by \citet{GhicaS14},
\citet{McBride16}, and \citet{Atkey18}, and has been implemented in
Idris \citep{Brady21} and an extension to the GHC Haskell compiler
\citep{BernardyBNJS18}.

\paragraph{Row-based Effect Types}
Row types and row polymorphism are a popular way of implementing
effect systems in programming languages.
\Links \citep{linksrow} adopts R{\'e}my style row polymorphism
\citep{remy1994type}, where the row types are able to represent the
absence of labels and each label is restricted to appear at most once.
\Koka~\citep{koka} and \Frank~\citep{frank} use row polymorphism based
on scoped labels \citep{leijen2005extensible} which allows duplicated
labels.
We believe the idea of tracking control-flow linearity in \Calc should
work well with all kinds of different row-based effect systems.

\paragraph{Subtyping-based Effect Types}
Some versions of \Eff~\citep{bauer13, pretnar14} use an effect system
based on subtyping.
\citet{KarachaliasPSVS20} describe an explicit target calculus
\textsc{ExEff} with a subtyping-based effect system and a type
inference algorithm that elaborates \Eff source code into it.
\Eff uses a row-like representation of effect types and defines a
subtyping relation for effect types similar to the that of \CalcS.
One difference is that \Eff incorporates full subtyping relations
between all types and implicit subsumption, whereas we only introduce
subtyping between row types and allow explicit subsumption in
necessary positions (like \qtylab{Seq} and \qtylab{Handle}). In this
respect our qualified effect system is more lightweight.
Algebraic subtyping~\citep{Dolan16,DolanM17} combines subtyping and
parametric polymorphism with elegant principal types.
It would be interesting to explore the possibility of combining linear
types and effect types based on algebraic subtyping with control-flow
linearity.

\paragraph{One-shot control operators}
One-shot continuations were first introduced by \citet{FriedmanH85} in
the form of a linear variant of call/cc.
Similarly, \citet{Filinski92} considers a one-shot variant of the
$\mathcal{C}$ operator~\citep{FelleisenFKD87}.

\paragraph{One-shot Effect Handlers}
\OCaml{} 5~\citep{multicore}, the \Cpp{}-effects library~\citep{GhicaLBP22},
and the typed continuations proposal for adding effect handlers to
\Wasm{}~\citep{Phipps-CostinRGLHSPL23,wasm-typed-continuations}
all implement dynamically-checked one-shot effect handlers.
Continuations captured by such effect handlers can be thought of as
linear resources themselves, and thus play nicely with other linear
resources.
Any attempt to invoke a continuation more than once throws a runtime
error.
In contrast, our type systems can be used to statically ensure that
handlers are one-shot.
In fact, its considerably easier to build a system that ensures that
\emph{all} handlers are uniformly one-shot than a system like ours
that supports both one-shot and multi-shot handlers, as in the former
case there is no need to track the use of linear resources specially.
Another advantage of one-shot continuations is that they admit
efficient implementations which are compatible with linear resources,
as a one-shot continuation need not copy its underlying
stack~\citep{DBLP:conf/pldi/BruggemanWD96}.
\citet{HillerstromLL23} present a substructural type system for a
calculus with effect handlers based on dual intuitionistic linear
logic~\citep{Barber96} which restricts all effect handlers to be
one-shot (actually one- or zero-shot). They use it to show an
asymptotic performance gap between one-shot and multi-shot effect
handlers, but are not concerned with linear resources other than
continuations.

\paragraph{Multi-shot Effect Handlers}
\Eff~\citep{BauerP15}, \Effekt~\citep{effekt}, \Koka~\citep{koka}, and
\Helium~\citep{BiernackiPPS19} are research programming languages with
multi-shot handlers. In contrast to one-shot handlers, multi-shot
handlers can invoke the captured continuations an arbitrary number of
times. This enables a range of interesting applications. For instance,
asymptotic efficient backtracking search~\citep{HillerstromLL20},
nondeterminism~\citep{KammarLO13}, and UNIX fork-style
concurrency~\citep{Hillerstrom22} can all be given a direct semantics
in terms of multi-shot handlers.
However, one obstacle is that the aforementioned languages cannot
statically optimise uses of one-shot continuations, as they must
conservatively expect the ambient context to have nonlinear control
flow, thus requiring them to copy the continuation a
priori~\citep{HillerstromLS16,Hillerstrom16}.
Our type systems can enable static optimisation of one-shot
continuations through static identification of linear and nonlinear
contexts.

\section{Conclusion and Future Work}
\label{sec:conclusion}

We have explored the interplay between effect handlers and linear
types.
We have demonstrated that in order to soundly combine potentially
non-linear effect handlers with linear types, it is necessary to add a
mechanism for tracking control-flow linearity too. We incorporated
control-flow linearity into two quite different core languages as well
as realising control-flow linearity in \Links.

Directions for future work include:
  implementing a programming language based on \CalcS;
  developing more precise type systems for combining control-flow
  linearity with shallow handlers;
  combining control-flow linearity with other forms of effect type
  systems, such as those that support generative effects, duplicate
  effects, capabilities, and modal effect types;
  adapting the constraints of \CalcS to algebraic
  subtyping~\citep{DolanM17}; and
  adapting control-flow linearity for uniqueness types and for
  quantitive type theory~\citep{McBride16, Atkey18}.

\section*{Data Availability Statement}
The implementation of \Calc in \Links is available on
Zenodo~\citep{TangHLM23}.

\begin{acks}                            %
  This work was supported by the UKRI Future Leaders Fellowship
  ``Effect Handler Oriented Programming'' (reference number
  MR/T043830/1).
\end{acks}

\bibliography{reference}

\appendix
\section{Proofs of \Calc}

In this section, we prove the theorems in \Cref{sec:feffpop}.

\subsection{Unlimited is Unlimited}
\label{app:proof-unlimited-is-unlimited}

\unlisunl*

\begin{proof}~\\
\noindent \emph{1. Unlimited values are unlimited.} By induction on
the typing derivation $\typ{\Delta;\Gamma}{V:A}$.
\begin{description}
  \item[Case] \tylab{Var}. Trivial.
  \item[Case] \tylab{Abs}. $\Delta\vdash A\to^Y C:\Unl$ gives
  $Y=\Unl$, which then gives $\Delta\vdash\Gamma:\Unl$.
  \item[Case] \tylab{TAbs}. $\Delta\vdash \forall^Y \alpha^K . C:\Unl$
  gives $Y=\Unl$, which then gives $\Delta\vdash\Gamma:\Unl$.
\end{description}

\noindent \emph{2. Unlimited continuations are unlimited.}
By $\ell\notin\BL{\EC}$ and straightforward induction on typing
derivations, we have $C = \_\eff\{\ell : A\sto^\Unl B\lsep\_\}$.
By induction on $\typ{\Delta;\Gamma}{\EC[(\Do\ell\;V)^E]:C}$.
\begin{description}
  \item[Case]
  \begin{mathpar}
  \inferrule*[Lab=\tylab{Do}]
    {
      E = \{\ell : A \sto^{Y} B\lsep R\} \\\\
      \typ{\Delta;\Gamma}{V : A} \\
      \typ{\Delta}{E : \Effect}
    }
    {\typc{\Delta;\Gamma}{(\Do \ell \; V)^E : B}{E}}
  \end{mathpar}
  Immediately, we have $\typ{\Delta;y:B}{(\Ret y)^E} : B\eff E$ and
  $\Delta\vdash\cdot:\Unl$.

  \item[Case]
  \begin{mathpar}
  \inferrule*[Lab=\tylab{Seq}]
  { \refa{\typ{\Delta;\Gamma_1}{\EC'[(\Do\ell\;V)^E] : A'\eff E'}} \\
    \typ{\Delta;\Gamma_2, x : A'}{N : B'\eff E'} \\\\
    E' = \{\ell:A\sto^\Unl B\lsep R'\} \\
    \refb{\Delta \vdash \Gamma_2 : Y} \\
    \refc{\Delta \vdash (\ell:A\sto^\Unl B\lsep R') : Y} \\
  }
  {\typ{\Delta;\Gamma_1 + \Gamma_2}
  {\Lety^{Y} x \revto \EC'[(\Do\ell\;V)^E] \In N : B'\eff E'}}
  \end{mathpar}

  By \refc{}, we have $Y = \Unl$.
  Then, by \refb{}, we have $\Delta \vdash \Gamma_2 : \Unl$.
  By the IH on \refa{}, there exists $\Delta\vdash \Gamma_1 =
  \Gamma_{11} + \Gamma_{12}$ such that $\Delta\vdash\Gamma_{11}:\Unl$
  and $\typ{\Delta;\Gamma_{11}, y:B}{\EC'[(\Ret y)^E] : A'\eff E'}$.
  Applying \tylab{Seq} to it, we have
  $\typ{\Delta;\Gamma_3,y:B}{\Lety^Y x\revto \EC'[(\Ret y)^E] \In N :
  B'\eff E'}$, $\Delta\vdash\Gamma = \Gamma_{12} + \Gamma_3$ and
  $\Delta\vdash\Gamma_3:\Unl$ where $\Delta\vdash \Gamma_3 = \Gamma_2
  + \Gamma_{11}$.

  \item[Case]
  \begin{mathpar}
  \inferrule*[Lab=\tylab{Handle}]
  {
    \refa{\typ{\Delta;\Gamma_1}{\EC'[(\Do\ell\;V)^E] : A'\eff E'}} \\
    \refb{\typ{\Delta;\Gamma_2}{H : A'\eff E' \hto B'\eff F'}} \\
  }
  {\Delta;\Gamma_1 + \Gamma_2 \vdash \Handle \EC'[(\Do\ell\;V)^E] \With H : B'\eff F'}
  \end{mathpar}
  By \refb{}, we have $\Delta\vdash \Gamma_2 : \Unl$.
  By the IH on \refa{}, there exists $\Delta\vdash \Gamma_1 =
  \Gamma_{11} + \Gamma_{12}$ such that $\Delta\vdash\Gamma_{11}:\Unl$
  and $\typ{\Delta;\Gamma_{11}, y:B}{\EC'[(\Ret y)^E] : A'\eff E'}$.
  Applying \tylab{Handle} to it, we have
  $\typ{\Delta;\Gamma_3,y:B}{\Handle \EC'[(\Ret y)^E] \With H : B'\eff
  F'}$, $\Delta\vdash\Gamma = \Gamma_{12} + \Gamma_3$ and
  $\Delta\vdash\Gamma_3:\Unl$ where $\Delta\vdash \Gamma_3 = \Gamma_2
  + \Gamma_{11}$.

\end{description}

\noindent \emph{3. Deep handlers are unlimited.} Directly follows
from \tylab{Handler}.

\end{proof}

\subsection{Progress}
\label{app:proof-progress}

\begin{lemma}[Canonical forms]~
  \label{lemma:canonical-forms}
  \begin{enumerate}[label=\arabic*.]
    \item If $\typ{}{V:A\to^Y B}$, then $V$ is of shape $\lambda^Y x^A . M$.
    \item If $\typ{}{V:\forall^Y \alpha^K. C}$, then $V$ is of shape
    $\Lambda^Y \alpha^K . M$.
  \end{enumerate}
\end{lemma}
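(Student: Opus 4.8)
The plan is a routine case analysis on the structure of a closed value together with inversion on its typing derivation; no induction is needed. First I would recall from the grammar in \Cref{fig:feffpop-syntax} that a value $V$ is either a variable $x$, a $\lambda$-abstraction $\lambda^{Y'} x^{A'}.M$, or a type abstraction $\Lambda^{Y'}\alpha^{K'}.M$. Since the context is empty, $V$ cannot be a variable: the only rule typing a variable, \tylab{Var}, requires that variable to occur in the (here empty) type context. So $V$ is a $\lambda$-abstraction or a type abstraction.

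Next I would observe that \Calc has no implicit subsumption on value types: the only upcasting rule, \klab{Upcast}, acts on kinds, and the subtyping rule \tylab{SeqSub} concerns effect rows inside a \Lety{}, not value types. Hence the value typing rules are syntax-directed, and a derivation of $\typ{}{V : A}$ ends with exactly the rule fixed by the head constructor of $V$: a $\lambda$-abstraction is typed only by \tylab{Abs} (yielding a function type, with the arrow linearity and the domain annotation copied verbatim from the binder), and a type abstraction only by \tylab{TAbs} (yielding a polymorphic type, with the linearity and kind copied from the binder). For part~1, if $\typ{}{V : A \to^Y B}$ then $V$ cannot be a type abstraction, since \tylab{TAbs} would force its type to be $\forall^{Y'}\alpha^{K'}.C'$, and function and polymorphic types are syntactically distinct (the row equivalence we quotient by never identifies them). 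So $V = \lambda^{Y'} x^{A'}.M$, and inversion via \tylab{Abs} gives $A' = A$ and $Y' = Y$, i.e.\ $V = \lambda^Y x^A.M$ as claimed. Part~2 is symmetric, with the roles of \tylab{Abs} and \tylab{TAbs} exchanged and the kind $K$ pinned down instead of the domain type.

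There is essentially no obstacle here; the only point deserving a moment's care is confirming that \Calc genuinely lacks any value-level subtyping or subsumption that could let a $\lambda$-abstraction inhabit a $\forall$-type (or vice versa), and that the binder annotations on the abstraction are determined by the type rather than left free — both immediate from the syntax-directedness of the value typing rules noted above.
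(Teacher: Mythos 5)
Your proposal is correct and matches the paper's argument, which simply states that the lemma ``directly follows from the typing rules''; your write-up spells out the same reasoning (empty context rules out variables, syntax-directedness of the value typing rules pins down the constructor and its annotations) in more detail.
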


\begin{proof}
  Directly follows from the typing rules.
\end{proof}

\progress*

\begin{proof}
By induction on the typing derivation $\typ{}{M : A\eff E}$.
\begin{description}
  \item[Case]
  \begin{mathpar}
  \inferrule*[Lab=\tylab{App}]
    {\typ{}{V : A \to^{Y} C} \\
     \typ{}{W : A}
    }
    {\typ{}{V\,W : C}}
  \end{mathpar}
  By \Cref{lemma:canonical-forms}, we have $V = \lambda^Y x^A.M$.
  Reduced by \semlab{App}.

  \item[Case]
  \begin{mathpar}
  \inferrule*[Lab=\tylab{TApp}]
    {\typ{\Delta;\Gamma}{V : \forall^{\hl{Y}} \alpha^K . \, C} \\
     \Delta \vdash T : K
    }
    {\typ{\Delta;\Gamma}{V\,T : C[T/\alpha]}}
  \end{mathpar}
  By \Cref{lemma:canonical-forms}, we have $V = \Lambda^Y\alpha^K.M$.
  Reduced by \semlab{TApp}.

  \item[Case] \tylab{Return}. In a normal form with respect to $E$.

  \item[Case] \tylab{Do}. In a normal form with respect to $E$.

  \item[Case]
  \begin{mathpar}
  \inferrule*[Lab=\tylab{Seq}]
    {\typc{\Delta;\Gamma_1}{M : A}{E} \\
      \typc{\Delta;\Gamma_2, x : A}{N : B}{E} \\\\
      E = \{R\} \\
      {\Delta \vdash \Gamma_2 : Y} \\
      {\Delta \vdash R : Y} \\
    }
    {\typc{\Delta;\Gamma_1 + \Gamma_2}{\Lety^{Y} x \revto M \In N : B}{E}}
  \end{mathpar}
  By a case analysis on $M$.
  \begin{description}
    \item[Subcase] $M = (\Ret N)^E$. Reduced by \semlab{Seq}.
    \item[Subcase] Otherwise. By the IH, if $M \reducesto N$, then the
    original term is reduced by \semlab{Lift}. Otherwise, $M$ is in a
    normal form with respect to $E$, which implies the original term
    is also in a normal form with respect to $E$.
  \end{description}

  \item[Case]
  \begin{mathpar}
  \inferrule*[Lab=\tylab{Handle}]
  {
    \typ{\Delta;\Gamma_1}{H : C \hto D} \\
    \typ{\Delta;\Gamma_2}{M : C} \\
    C = A\eff E' \\ D = B\eff E
  }
  {\Delta;\Gamma_1 + \Gamma_2 \vdash \Handle M \With H : D}
  \end{mathpar}
  By a case analysis on $M$.
  \begin{description}
    \item[Subcase] $M = (\Ret N)^{E'}$. Reduced by \semlab{Ret}.
    \item[Subcase] $M = \EC[(\Do\ell\,V)^{E''}]$ with
    $\ell\notin\BL{\EC}$ and $(\ell\,p\,r\mapsto N)\in H$. The
    original term is reduced by \semlab{Op}.
    \item[Subcase] Otherwise. By the IH, if $M \reducesto N$, then the
    original term is reduced by \semlab{Lift}. Otherwise, $M$ is in a
    normal form with respect to $E'$. By \Cref{def:normal-form}, $M =
    \EC[(\Do\ell\,V)^{E''}]$ for $\ell\in E'$ and
    $\ell\notin\BL{\EC}$. By the last subcase, $\ell$ is also not handled
    by $H$. Thus, the original term is also in a normal form with
    respect to $E$.
  \end{description}
\end{description}
\end{proof}

\subsection{Subject Reduction}
\label{app:proof-subject-reduction}

\begin{lemma}[Substitution]~
  \label{lemma:substitution}
  \begin{enumerate}[label=\arabic*.]
    \item Preservation of kinds under type substitution:
    if $\Delta,\alpha : K'\vdash T:K$
    and $\Delta \vdash T' : K'$, then $\Delta \vdash T[T' / \alpha] : K$.
    \item Preservation of types under type substitution:
    if $\Delta \vdash T : K$,
    then $\Delta,\alpha : K;\Gamma\vdash M:C$
    implies $\Delta;\Gamma[T/ \alpha]\vdash M[T/ \alpha] : C[T/ \alpha]$,
    and $\Delta,\alpha : K;\Gamma\vdash V:A$
    implies $\Delta;\Gamma[T/ \alpha]\vdash V[T/ \alpha] : A[T/ \alpha]$,
    and $\Delta,\alpha : K;\Gamma\vdash H:C\hto D$
    implies $\Delta;\Gamma[T/ \alpha]\vdash H[T/\alpha]:(C\hto D)[T/\alpha]$.
    \item Preservation of types under value substitution:
    if $\Delta\vdash\Gamma_1:Y$, $\Delta;\Gamma_1\vdash V:A$ and
    $\Delta\vdash A:Y$, then $\Delta;\Gamma_2,x:A\vdash M:C$ implies
    $\Delta;\Gamma_1 + \Gamma_2\vdash M[V/x]:C$,
    and $\Delta;\Gamma_2,x:A\vdash W:B$ implies $\Delta;\Gamma_1 +
    \Gamma_2\vdash W[V/x]:B$,
    and $\Delta;\Gamma_2,x:A\vdash H:C\hto D$ implies
    $\Delta;\Gamma_1+\Gamma_2\vdash H[V/x]:C\hto D$.
  \end{enumerate}
\end{lemma}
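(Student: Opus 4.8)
I would prove the three parts in the order given, since Part~2 (type preservation under type substitution) needs Part~1, and Part~3 (type preservation under value substitution) needs \Cref{thm:unl-is-unl}, which is already available. Before starting the main inductions I would record two routine auxiliary facts. First, type substitution preserves subkinding $\vdash K \le K'$ verbatim (it never touches linearity annotations), and hence — applying Part~1 pointwise — it also preserves the context-linearity judgement $\Delta \vdash \Gamma : Y$ and the context-splitting judgement $\Delta \vdash \Gamma = \Gamma_1 + \Gamma_2$; here \klab{Upcast} is what makes the linearity side-conditions of the splitting and context-linearity rules stable under substitution. Second, a weakening lemma: if $\Delta;\Gamma \vdash M : C$ (resp.\ for values and handlers) and $\Delta \vdash \Gamma' : \Unl$ with $\Gamma,\Gamma'$ compatible, then $\Delta;\Gamma + \Gamma' \vdash M : C$, proved by a straightforward induction on derivations — the only interesting rule is \tylab{Var}, and the unlimited part $\Gamma'$ simply distributes across context splits.

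Part~1 is a direct structural induction on the kinding derivation $\Delta,\alpha:K' \vdash T : K$; the only cases that do any work are \klab{TyVar} (where the substitution fires) and \klab{Upcast} (where subkinding is untouched). Part~2 is a mutual structural induction on the three typing derivations: at each rule I apply the induction hypothesis to the subderivations, use Part~1 to re-kind the types and effect rows occurring in side-conditions, and use the first auxiliary fact to transport the context judgements. The binder rules require the usual $\alpha$-renaming hygiene — \tylab{TAbs} additionally needs kind-context weakening of the kinding relation (trivial), and \tylab{TApp} needs the substitution-composition identity on types — but nothing deeper.

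Part~3 is the substantive case: a mutual structural induction on $\Delta;\Gamma_2,x:A \vdash M:C$ (and the analogues for values and handlers), carrying the hypotheses $\Delta\vdash\Gamma_1:Y$, $\Delta;\Gamma_1\vdash V:A$, $\Delta\vdash A:Y$. For every compound term I first determine where $x$ occurs among the immediate subterms: since $\Delta\vdash\Gamma_1+\Gamma_2$ is well formed, every linear variable lies in exactly one side, so if $x$ occurs in more than one subterm (or in none at all) then $x$ must be unlimited, whence \Cref{thm:unl-is-unl}(1) applied to $\Delta;\Gamma_1\vdash V:A$ yields $\Delta\vdash\Gamma_1:\Unl$, and $\Gamma_1$ may be freely copied into (or discarded from) each subterm's context using the weakening lemma. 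The base case \tylab{Var} splits on $M=x$ (substitute, then absorb the unlimited $\Gamma_2$ by weakening) versus $M = y\neq x$ (then $x$ sits in the unlimited remainder of the \tylab{Var} context, so $A$ is unlimited, so $\Gamma_1$ is unlimited by \Cref{thm:unl-is-unl}(1), and \tylab{Var} re-applies to $\Gamma_1+\Gamma_2$). The remaining cases — \tylab{Abs}, \tylab{TAbs}, \tylab{App}, \tylab{Return}, \tylab{Do}, \tylab{Seq}, \tylab{Handle}, \tylab{Handler} — recurse into the subderivation containing $x$ after routing $\Gamma_1$ there, and then discharge the linearity premises of the reassembled rule ($\Gamma:Y$ for the $\lambda$/$\Lambda$ rules, $\Gamma_2:Y$ and $R:Y$ for \tylab{Seq}, $\Gamma:\Unl$ for \tylab{Handler}) using the $Y$-side-conditions already present together with \Cref{thm:unl-is-unl}.

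The main obstacle is precisely this context-splitting bookkeeping in Part~3: for each application, sequencing, and handling form one must show that the assumed split $\Gamma_1 + \Gamma_2$ can be refined to a split of each subterm's context that is consistent with the original derivation, and that the linearity side-conditions survive the reshuffling. This is exactly where \Cref{thm:unl-is-unl} is indispensable, since it is the only tool that lets us conclude that the substituted value's context $\Gamma_1$ is unlimited whenever the variable it replaces can legitimately be dropped or duplicated — without it, the ``$x$ occurs zero or $\geq 2$ times'' subcases could not be closed.
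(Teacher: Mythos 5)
Your proposal is correct and follows essentially the same route as the paper, which dispatches all three parts by (mutual) structural induction on the kinding/typing derivations and merely remarks that it ``applies various structural lemmas like weakening, permutation of contexts, and properties of context splitting''. Your elaboration of Part~3 --- routing $\Gamma_1$ to the subderivation containing $x$, and invoking \Cref{thm:unl-is-unl}(1) to show $\Gamma_1$ is unlimited whenever $x$ is dropped or duplicated --- is precisely the bookkeeping the paper leaves implicit, so there is nothing further to add.
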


\begin{proof} We apply various structural lemmas like weakening,
permutation of contexts, and properties of context splitting in the
following proofs.

\noindent \emph{1. Preservation of kinds under type substitution.}
Straightforward induction on the kinding derivations.

\noindent \emph{2. Preservation of types under type substitution.}
By \Cref{lemma:substitution}.1 and straightforward mutual induction on
the typing derivations.

\noindent \emph{3. Preservation of types under value substitution.}
By mutual induction on the typing derivations.
\end{proof}

\preservation*
\begin{proof}

By induction on the typing derivation $\typ{\Delta;\Gamma}{M:C}$.
\begin{description}
  \item[Case]
  \begin{mathpar}
    \inferrule*[Lab=\tylab{App}]
    {\refa{\typ{\Delta;\Gamma_1}{V : A \to^Y C}} \\
     \refb{\typ{\Delta;\Gamma_2}{W : A}}
    }
    {\typ{\Delta;\Gamma_1 + \Gamma_2}{V\,W : C}}
  \end{mathpar}
  The reduction can only be derived using \semlab{App}, which implies
  $V = \lambda^Y x^A . N$ and $(\lambda^Y x^A . N)\ W \reducesto
  N[W/x]$.
  Inversion on \refa{} gives \refc{$\typ{\Delta;\Gamma_1,x:A}{N:C}$}.
  Case analysis on the linearity of $A$:
  \begin{description}
    \item[Subcase] \refd{$\Delta\vdash A:\Unl$}. Applying
    \Cref{thm:unl-is-unl}.1 to \refb{} gives
    \refe{$\Delta\vdash\Gamma_2 : \Unl$}.
    Applying \Cref{lemma:substitution}.3 to \refb{}, \refc{}, \refd{}
    and \refe{} gives $\typ{\Delta;\Gamma_1 + \Gamma_2}{N[W/x] : C}$.

    \item[Subcase] \refd{$\Delta\vdash A:\Lin$}. We always have
    \refe{$\Delta\vdash\Gamma_2:\Lin$}.
    Applying \Cref{lemma:substitution}.3 to \refb{}, \refc{}, \refd{}
    and \refe{} gives $\typ{\Delta;\Gamma_1 + \Gamma_2}{N[W/x] : C}$.
  \end{description}

  \item[Case]
  \begin{mathpar}
    \inferrule*[Lab=\tylab{TApp}]
    {\refa{\typ{\Delta;\Gamma}{V : \forall^Y \alpha^K . \, C}} \\
     \refb{\Delta \vdash T : K}
    }
    {\typ{\Delta;\Gamma}{V\,T : C[T/\alpha]}}
  \end{mathpar}
  The reduction can only be derived using \semlab{TApp}, which implies
  $V = \Lambda^Y \alpha^K . N$ and\\ $(\Lambda^Y \alpha^K . N)\; T
  \reducesto N[T/\alpha]$.
  Inversion on \refa{} gives \refc{$\typ{\Delta,\alpha:K;
  \Gamma}{N:C}$}. By $\alpha\notin \ftv{\Gamma}$, applying
  \Cref{lemma:substitution}.2 to \refb{} and \refc{} gives
  $\typ{\Delta;\Gamma}{N[T/\alpha] : C[T/\alpha]}$.

  \item[Case] \tylab{Return}. No reduction. $M$ is in a normal form.

  \item[Case] \tylab{Do}. No reduction. $M$ is in a normal form.

  \item[Case]
  \begin{mathpar}
  \inferrule*[Lab=\tylab{Seq}]
    { \refa{\typc{\Delta;\Gamma_1}{M : A}{\{R\}}} \\
      \refb{\typc{\Delta;\Gamma_2, x : A}{N : B}{\{R\}}} \\\\
      {\Delta \vdash \Gamma_2 : Y} \\
      {\Delta \vdash R : Y} \\
    }
    {\typc{\Delta;\Gamma_1 + \Gamma_2}{\Lety^{Y} x \revto M \In N : B}{{\{R\}}}}
  \end{mathpar}
  By a case analysis on the next rule used by reduction:
  \begin{description}
    \item[Subcase] \semlab{Lift}. Suppose $M \reducesto M'$. The IH on
    \refa{} gives $\typc{\Delta;\Gamma_1}{M' : A}{\{R\}}$. Then, by
    \tylab{Seq} we have ${\typc{\Delta;\Gamma_1 + \Gamma_2}{\Lety^Y x
    \revto M' \In N : B}{\{R_2\}}}$.
    \item[Subcase] \semlab{Seq}. $M = (\Ret V)^{\{R\}}$. Inversion on
    \refa{} gives \refc{$\typ{\Delta;\Gamma_1}{V:A}$}.
    With \refb{} and \refc{}, our goal follows from a case analysis on
    the linearity of $A$ similar to the \tylab{App} case.
  \end{description}

  \item[Case]
  \begin{mathpar}
\inferrule*[Lab=\tylab{Handle}]
{
  \refa{\typ{\Delta;\Gamma_1}{M : C}} \\
  \refb{\typ{\Delta;\Gamma_2}{H : C \hto D}} \\
}
{\Delta;\Gamma_1 + \Gamma_2 \vdash \Handle M \With H : D}
  \end{mathpar}
  By a case analysis on the next rule used by reduction:
  \begin{description}
    \item[Subcase] \semlab{Lift}. Suppose $M \reducesto M'$. The IH on
    \refa{} gives $\typ{\Delta;\Gamma_1}{M' : C}$. Then, by
    \tylab{Handle} we have ${\typ{\Delta;\Gamma_1 + \Gamma_2}{\Handle
    M' \With H:D}}$.
    \item[Subcase] \semlab{Ret}. $M = (\Ret V)^E$ and $(\Ret x \mapsto
    N) \in H$. Suppose $C = A\eff E$.
    Inversion on \refa{} gives \refc{$\typ{\Delta;\Gamma_1}{V:A}$}.
    Inversion on \refb{} gives \refd{$\typ{\Delta;\Gamma_2,
    x:A}{N:D}$}.
    With \refc{} and \refd{}, our goal follows from a case analysis on
    the linearity of $A$ similar to the \tylab{App} case.

    \item[Subcase] \semlab{Op}. $M = \EC[(\Do \ell \; V)^E]$, $\ell
    \notin \BL{\EC}$ and $(\ell \ p \ r \mapsto N) \in H$.
    Suppose $(\ell : A \to^Y B) \in E$ and $W = \lambda^Y y^B.\Handle
    \EC[(\Ret y)^E] \With H$.
    The reduction is $\Handle M \With H \reducesto N[V/p,W/r]$.
    Inversion on \refb{} gives
    \refc{$\typ{\Delta;\Gamma_2,p:A,r:B\to^Y D}{N:D}$}.
    By a straightforward induction on \refa{} similar to the proof of
    \Cref{thm:unl-is-unl}.2, it is easy to show that there exists
    $\Delta\vdash\Gamma_1 = \Gamma_{11}+\Gamma_{12}$ such that
    \refd{$\typ{\Delta;\Gamma_{11}}{V : A}$} and
    \refe{$\typ{\Delta;\Gamma_{12}, y:B}{\EC[(\Ret y)^E]} : C$}.
    With \refc{} and \refd{}, by a case analysis on the linearity of
    $A$ similar to the \tylab{App} case, we have
    \reff{$\typ{\Delta;\Gamma_{11} + (\Gamma_2, r:B\to^Y D)}{N[V/p] :
    D}$}.
    Then by another case analysis on $Y$:
    \begin{description}
      \item[subcase] $Y=\Unl$. By \Cref{thm:unl-is-unl}.2 we have
      $\Delta\vdash\Gamma_{12}:\Unl$.
      Applying \tylab{Handle} and \tylab{Abs} to \refe{}, we have
      \refg{$\typ{\Delta;\Gamma_{12}+\Gamma_2}{W : B\to^Y D}$}.
      Applying \Cref{thm:unl-is-unl}.3 to \refb{} we have
      $\Delta\vdash\Gamma_2:\Unl$.
      Finally, applying \Cref{lemma:substitution}.3 to \reff{} and
      \refg{}, we have
      $\typ{\Delta;\Gamma_{11}+\Gamma_{12}+\Gamma_2}{N[V/p,W/r]:D}$.

      \item[subcase] $Y=\Lin$.
      Applying \tylab{Handle} and \tylab{Abs} to \refe{}, we have
      \refg{$\typ{\Delta;\Gamma_{12}+\Gamma_2}{W : B\to^Y D}$}.
      We always have $\Delta\vdash\Gamma_{12}+\Gamma_2 : \Lin$.
      Finally, applying \Cref{lemma:substitution}.3 to \reff{} and
      \refg{}, we have
      $\typ{\Delta;\Gamma_{11}+\Gamma_{12}+\Gamma_2}{N[V/p,W/r]:D}$.

    \end{description}
  \end{description}

\end{description}
\end{proof}

\subsection{Linearity Safety of Evaluation}
\label{app:proof-reduction-safety}

\begin{lemma}[Linear variables appear exactly once]
  \label{lemma:linear-var-exactly-once}
  If $\typ{\Delta;\Gamma,x:A}{V:B}$ and $\Delta\not\vdash A:\Unl$,
  then $x$ appears exactly once in $V$.
  If $\typ{\Delta;\Gamma,x:A}{M:C}$ and $\Delta\not\vdash A:\Unl$,
  then $x$ appears exactly once in $M$.
\end{lemma}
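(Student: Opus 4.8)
The plan is to prove the two statements simultaneously by mutual induction on the typing derivations $\typ{\Delta;\Gamma,x:A}{V:B}$ and $\typ{\Delta;\Gamma,x:A}{M:C}$, using two routine auxiliary facts. First, a free-variables lemma: every free term variable of a well-typed value, computation, or handler lies in the domain of its type context (a trivial induction; the only rule that extends the context by ``extra'' entries is \tylab{Var}, and those entries are unlimited anyway). Second, an observation about splitting: whenever $\Delta\not\vdash A:\Unl$ and $\Delta\vdash\Gamma,x:A=\Gamma_1+\Gamma_2$, the splitting step applied to the entry $x:A$ cannot be \textsc{C-Unl} (its premise $\Delta\vdash A:\Unl$ fails), so $x$ occurs in \emph{exactly one} of $\Gamma_1,\Gamma_2$; and no entry of $\Gamma,x:A$ is ever dropped by the splitting relation, so $x$ occurs in at least one of them. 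We work up to $\alpha$-renaming so that $x$ is distinct from every binder encountered.

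For the base case \tylab{Var}, the judgement is $\typ{\Delta;\Gamma',y:B}{y:B}$ with $\Delta\vdash\Gamma':\Unl$; if $x$ were in $\Gamma'$ we would get $\Delta\vdash A:\Unl$, contradicting the hypothesis, so $x=y$ and $x$ occurs exactly once. For \tylab{Abs} and \tylab{TAbs}, the body is typed in a context that still contains $x:A$, so the induction hypothesis applies directly and the enclosing binder changes no occurrence count. For \tylab{App}, \tylab{Seq}, and \tylab{Handle} we use the splitting observation: $x$ lands in exactly one of the two split contexts; the induction hypothesis then gives ``exactly once'' in the corresponding subterm, while the free-variables lemma shows $x$ does not occur in the other subterm, whose context lacks $x$ (for \tylab{Handle} in particular, $x\notin\Gamma_1$ because the handler context is unlimited by \tylab{Handler}, so $x$ must sit in the handled computation's context and does not occur in $H$). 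The remaining rules \tylab{TApp}, \tylab{Return}, and \tylab{Do} pass the full context down to a single subterm, so the induction hypothesis closes them immediately; in each case the overall count is exactly once because the chosen subterm contributes one occurrence and the others contribute none.

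I expect the only mildly delicate points to be purely bookkeeping: keeping the $\alpha$-renaming of bound term variables explicit so $x$ is genuinely distinct from the variable bound by $\lambda$, $\Lety$, or a handler clause, and invoking the free-variables lemma at the right granularity (we are counting occurrences of the \emph{free} variable $x$). Nothing here is conceptually hard; the structural lemmas about weakening and about the properties of the context-splitting operator $+$ are exactly those already used in the proof of \Cref{lemma:substitution}.
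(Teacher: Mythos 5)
Your proposal is correct and takes essentially the same route as the paper, whose entire proof reads ``by the definition of the context splitting relation and straightforward induction on typing derivations''; you have simply made explicit the two facts that argument relies on (a linear entry is routed to exactly one side of every split, and a term's free variables lie in the domain of its context). The case analysis, including the observation that \tylab{Var} forces the remaining context to be unlimited and that \tylab{Handler}'s $\Delta\vdash\Gamma:\Unl$ premise keeps linear variables out of handlers, is exactly what the paper intends.
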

\begin{proof}
  By the definition of the context splitting relation and
  straightforward induction on typing derivations.
\end{proof}

\begin{lemma}[Preservation of linear safety under substitution]
  \label{lemma:linear-safe-substitution}
  Given closed and linear safe $V$ and $M$, if $\typ{}{V:A}$ and
  $\typ{\cdot;x:A}{M:C}$, then $M[V'/x]$ is linear safe where $(V',\_) =
  \taglin(V)$.
\end{lemma}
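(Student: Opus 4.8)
The plan is to proceed by a case analysis on the linearity of $A$, i.e.\ on whether $\cdot\vdash A:\Unl$ or $\cdot\not\vdash A:\Unl$. In both cases the key structural observations are that $V'$ is substituted for the \emph{value} variable $x$, which only occurs in value positions, so every subterm of $M[V'/x]$ is either contained within one of the substituted copies of $V'$ or of the form $S[V'/x]$ for a subterm $S$ of $M$ of the same syntactic shape; that value substitution changes neither the effect annotations on $\Do$ nor the bound-label sets $\BL{-}$, so the three shapes of subterm named in \Cref{def:linear-safety} are preserved; and that $\TL{S[V'/x]} = \TL{S}\umul(n_S\cdot\TL{V'})$, where $n_S$ counts the free occurrences of $x$ in the $\TL$-relevant part of $S$. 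Finally, a top-level tag on $\lintag{W}$ does not affect the $\TL{-}$ of the subterms of $W$, so $\lintag{V}$ is linear safe whenever $V$ is.

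If $\cdot\vdash A:\Unl$, then $\islin(V)$ fails, so $\taglin(V) = (V,\emptyset)$ and $M[V'/x] = M[V/x]$. Since $V$ is a linear-safe value of unlimited type, $\TL{V} = \emptyset$ (item~1 of \Cref{def:linear-safety}, as explained in its parenthetical remark), hence $\TL{S[V/x]} = \TL{S}$ for every subterm $S$ of $M$. Every problematic subterm of $M[V/x]$ therefore has empty $\TL{-}$: those contained in a copy of $V$ by linear safety of $V$, and those of the form $S[V/x]$ because $\TL{S} = \emptyset$ by linear safety of $M$. Thus $M[V/x]$ is linear safe.

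If $\cdot\not\vdash A:\Unl$, then by \Cref{lemma:linear-var-exactly-once} the variable $x$ occurs exactly once in $M$, and $V'$ is either $V$ or $\lintag{V}$, in both cases linear safe. I claim the unique occurrence of $x$ lies in none of the $\TL$-relevant parts of a problematic subterm of $M$. Suppose otherwise: the occurrence lies inside a value subterm $W = \lambda^\Unl z^B.N$ or $W = \Lambda^\Unl\alpha^K.N$, or inside a handler subterm $H$, or inside the evaluation context $\EC$ of a subterm $\EC[(\Do\ell\;W_0)^{\{\ell:A'\sto^\Unl B'\lsep R\}}]$ with $\ell\notin\BL{\EC}$. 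In the typing derivation of $\typ{\cdot;x:A}{M:C}$ a linear variable is propagated, through context splitting, exactly along the path to its unique occurrence, so $x:A$ appears in the typing context of that subterm. Now \tylab{Abs} and \tylab{TAbs} with annotation $\Unl$ (equivalently part~1 of \Cref{thm:unl-is-unl}, as $W$ then has unlimited type) force that context to be unlimited; part~3 of \Cref{thm:unl-is-unl} does the same for a handler subterm; and part~2 of \Cref{thm:unl-is-unl} splits the context of the operation-invocation subterm as $\Gamma_1 + \Gamma_2$ with $\Gamma_1$ unlimited and $\EC$ still occurring in the judgement over $\Gamma_1$, whence $x:A\in\Gamma_1$. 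In every case an unlimited context containing $x:A$ forces $\cdot\vdash A:\Unl$, contradicting the case assumption. Hence $n_S = 0$ for every problematic subterm $S$ of $M$, so $\TL{S[V'/x]} = \TL{S} = \emptyset$, and together with linear safety of $V'$ this gives linear safety of $M[V'/x]$.

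The step I expect to be the main obstacle is the argument in the last paragraph: precisely relating ``$x$ occurs inside a problematic subterm'' to ``$x:A$ occurs in that subterm's typing context'' (a mild strengthening of the reasoning behind \Cref{lemma:linear-var-exactly-once}), and, in the operation-invocation case, being careful that only occurrences of $x$ \emph{within} $\EC$ --- not within the plugged call $(\Do\ell\;W_0)$ --- are relevant, which is exactly what part~2 of \Cref{thm:unl-is-unl} is tailored to handle. Everything else is routine structural reasoning about capture-avoiding value substitution together with the homomorphic definition of $\TL{-}$.
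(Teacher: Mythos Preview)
Your proposal is correct and follows essentially the same approach as the paper: a case analysis on whether $\cdot\vdash A:\Unl$, using $\TL{V}=\emptyset$ in the unlimited case and \Cref{thm:unl-is-unl} in the linear case to conclude that $x$ does not occur inside unlimited values, unlimited continuations, or handlers of $M$. Your write-up is considerably more detailed than the paper's (which dispatches each case in two sentences), and your explicit treatment of how value substitution interacts with the shapes of problematic subterms and with $\TL{-}$ is exactly the routine structural reasoning the paper leaves implicit.
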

\begin{proof}
  Case analysis on the linearity of $A$.
  \begin{description}
    \item[Case] $\vdash A:\Unl$. We have $V' = V$.
    By the linear safety of $V$, we have $\TL{V}=\emptyset$.
    The linear safety of $M[V'/x]$ follows from the linear safety of
    $M$.

    \item[Case] $\not\vdash A:\Unl$. By \Cref{thm:unl-is-unl}, $x$
    does not appear in unlimited values, continuations and handlers of
    $M$.
    Thus, $V'$ does not appear in unlimited values, continuations and
    handlers of $M[V'/x]$.
    The linear safety of $M[V'/x]$ then directly follows from the
    linear safety of $M$ and $V$.
  \end{description}
\end{proof}

\reductionSafety*

\begin{proof}

We proceed by induction on the linearity-aware reduction rules defined
in \Cref{fig:feffpop-linear-small-step}.
To avoid name conflicts, we consider $\hat{M} \lreducesto{\mS}{\mT}
\hat{N}$.

\begin{description}
  \item[Case]
  \begin{bigreductions}
    \lsemlab{App}   & (\lambda^Y x^A.M)\,V &\lreducesto{\mS}{\emptyset}& M[V'/x],
    \text{ where } (V', \mS) = \taglin(V)
  \end{bigreductions}
  The linear safety of $\hat{M}$ gives the linear safety of $M$ and
  $V$.
  The linear safety of $\hat{N}$ follows from
  \Cref{lemma:linear-safe-substitution}.
  By inversion on $\hat{M}$, $V$ has type $A$. Case analysis on the
  linearity of $A$:
  \begin{description}
    \item[Subcase] $\vdash A :\Unl$. We have $\islin(V) = \False$ and
    $\taglin(V) = \{V,\emptyset\}$.
    By the fact that $V$ is closed and linear safe, we have $\TL{V} =
    \emptyset$.
    Our goal follows from $\TL{\hat{M}} \umul \emptyset = \TL{M}
    = \TL{\hat{N}} \umul \emptyset$.
    \item[Subcase] $\not\vdash A:\Unl$.  We have $\islin(V) = \True$.
    By \Cref{lemma:linear-var-exactly-once}, $x$ appears in $M$
    exactly once.
    If $V = \lintag{W}$ for some $W$, then we have $\TL{\hat{M}}
    \umul \emptyset = \TL{M} \umul \TL{V} = \TL{M[V/x]} =
    \TL{\hat{N}} \umul \emptyset$.
    Otherwise, we have $\TL{\hat{M}} \umul \{\lintag{V}\} = \TL{M}
    \umul \TL{V}\umul \{\lintag{V}\} = \TL{M} \umul \TL{\lintag{V}}
    = \TL{M[\lintag{V} / x]} = \TL{\hat{N}} \umul \emptyset$.
  \end{description}

  \item[Case]
  \begin{bigreductions}
    \lsemlab{TApp} & (\Lambda^Y \alpha^K.M)\,T &\lreducesto{\emptyset}{\emptyset}& M[T/\alpha]
  \end{bigreductions}
  The linear safety of $\hat{N}$ directly follows from the linear
  safety of $\hat{M}$.
  We have $\TL{\hat{M}}\umul\emptyset = \TL{M} = \TL{\hat{N}}\umul\emptyset$.

  \item[Case]
  \begin{bigreductions}
  \lsemlab{Seq} &
    \Lety^Y x \revto \Ret V \In N &\lreducesto{\mS}{\emptyset}& N[V'/x],
    \text{ where } (V', \mS) = \taglin(V) \\
  \end{bigreductions}
  The linear safety of $\hat{M}$ gives the linear safety of $N$ and
  $V$.
  The linear safety of $\hat{N}$ follows from
  \Cref{lemma:linear-safe-substitution}.
  Suppose $\typ{}{V:A}$. Our goal follows from a case analysis on the
  linearity of $A$ similar to the \lsemlab{App} case.

  \item[Case]
  \begin{bigreductions}
  \lsemlab{Ret} &
    \Handle (\Ret V)^E \With H &\lreducesto{\mS}{\emptyset}& N[V'/x],\\
    \multicolumn{4}{@{}r@{}}{
        \text{where } (\Ret x \mapsto N) \in H, (V', \mS) = \taglin(V)
    }
  \end{bigreductions}
  The linear safety of $\hat{M}$ gives the linear safety of $V$, $H$ and
  $N$.
  The linear safety of $\hat{N}$ follows from
  \Cref{lemma:linear-safe-substitution}.
  Suppose $\typ{}{V:A}$. Our goal follows from a case analysis on the
  linearity of $A$ similar to the \lsemlab{App} case.

  \item[Case]
  \begin{bigreductions}
  \lsemlab{Op} &
    \Handle \EC[(\Do \ell \; V)^E] \With H
      &\lreducesto{\mS}{\emptyset}& N[V'/p, W'/r],\\
    \multicolumn{4}{@{}r@{}}{
          \text{where } \ell \notin \BL{\EC},\
          (\ell \ p \ r \mapsto N) \in H,\
          (\ell:A \sto^{Y} B) \in E,
    }\\
    \multicolumn{4}{@{}r@{}}{
          W = \lambda^{Y} y^{B}.\Handle \EC[(\Ret y)^E] \With H,
    } \\
    \multicolumn{4}{@{}r@{}}{
          (V',\mS_1) = \taglin(V),
          (W',\mS_2) = \taglin(W), \mS = \mS_1 \umul \mS_2
    }
  \end{bigreductions}
  The linear safety of $\hat{M}$ gives the linear safety of $V$, $H$,
  $N$ and $\EC$.
  We need to show the linear safety of $W$.
  If $Y = \Lin$, the linear safety of $W$ directly follows from the
  linear safety of $\EC$ and $H$.
  If $Y = \Unl$, by the linear safety of $\EC[(\Do\ell\;V)^E]$ we have
  $\TL{\EC} = \emptyset$.
  By the linear safety of $H$ we have $\TL{H} = \emptyset$.
  Thus, $\TL{W} = \emptyset$, which gives us the linear safety of $W$.
  The linear safety of $\hat{N}$ follows from
  \Cref{lemma:linear-safe-substitution}.
  Then, we prove the equation.
  By inversion on $(\Do\ell\;V)^E$, we have $\typ{}{V:A}$.
  Suppose $\typ{}{W:B\to^Y C}$.
  By the linear safety of $H$, we have $\TL{H} = \TL{N} = \emptyset$.
  By a case analysis on the linearity of $A$.
  \begin{description}
    \item[Subcase] $\vdash A :\Unl$. We have $\islin(V) = \False$ and
    $\taglin(V) = \{V,\emptyset\}$.
    By the fact that $V$ is closed and linear safe, we have $\TL{V} =
    \emptyset$.
    By a case analysis on the linearity of $B \to^Y C$.
    \begin{description}
      \item[subcase] $\vdash B\to^Y C:\Unl$. We have $\islin(W)=\False$ and
      $\taglin(W) = \{W,\emptyset\}$.
      By the fact that $W$ is closed and linear safe, we have $\TL{W}
      = \emptyset$.
      Our goal follows from $\TL{\hat{M}} \umul \emptyset = \emptyset
      = \TL{\hat{N}} \umul \emptyset$.

      \item[subcase] $\vdash B\to^Y C:\Lin$. We have $\islin(W)=\True$
      and $\taglin(W) = \{\lintag{W}, \{\lintag{W}\}\}$.
      By \Cref{lemma:linear-var-exactly-once}, $r$ appears in $N$
      exactly once.
      We have $\TL{\hat{M}}\umul\{\lintag{W}\} =
      \TL{\EC}\umul\{\lintag{W}\} = \TL{\lintag{W}} = \TL{\hat{N}}$.
    \end{description}
    \item[Subcase] $\not\vdash A:\Unl$.  We have $\islin(V) = \True$.
    By \Cref{lemma:linear-var-exactly-once}, $p$ appears in $N$
    exactly once.
    If $V = \lintag{V_1}$ for some $V_1$, we have $\lintag{V} =
    (V,\emptyset)$.
    By a case analysis on the linearity of $B\to^Y C$.
    \begin{description}
      \item[subcase] $\vdash B\to^Y C:\Unl$. We have $\islin(W)=\False$ and
      $\taglin(W) = \{W,\emptyset\}$.
      By the fact that $W$ is closed and linear safe, we have $\TL{W}
      = \emptyset$.
      Our goal follows from $\TL{\hat{M}} \umul \emptyset = \TL{V}
      = \TL{\hat{N}} \umul \emptyset$.

      \item[subcase] $\vdash B\to^Y C:\Lin$. We have $\islin(W)=\True$
      and $\taglin(W) = \{\lintag{W}, \{\lintag{W}\}\}$.
      By \Cref{lemma:linear-var-exactly-once}, $r$ appears in $N$
      exactly once.
      We have $\TL{\hat{M}}\umul\{\lintag{W}\} =
      \TL{V}\umul\TL{\EC}\umul\{\lintag{W}\} =
      \TL{V}\umul\TL{\lintag{W}} = \TL{\hat{N}}$.
    \end{description}
    Otherwise, we have $\lintag{V} = (\lintag{V}, \{\lintag{V}\})$.
    By a case analysis on the linearity of $B\to^Y C$.
    \begin{description}
      \item[subcase] $\vdash B\to^Y C:\Unl$. We have $\islin(W)=\False$ and
      $\taglin(W) = \{W,\emptyset\}$.
      By the fact that $W$ is closed and linear safe, we have $\TL{W}
      = \emptyset$.
      Our goal follows from $\TL{\hat{M}} \umul \emptyset =
      \TL{V}\umul\{\lintag{V}\} = \TL{\lintag{V}} = \TL{\hat{N}}
      \umul \emptyset$.

      \item[subcase] $\vdash B\to^Y C:\Lin$. We have $\islin(W)=\True$
      and $\taglin(W) = \{\lintag{W}, \{\lintag{W}\}\}$.
      By \Cref{lemma:linear-var-exactly-once}, $r$ appears in $N$
      exactly once.
      We have $\TL{\hat{M}}\umul\{\lintag{W},\lintag{V}\} =
      \TL{V}\umul\TL{\EC}\umul\{\lintag{W},\lintag{V}\} =
      \TL{\lintag{V}}\umul\TL{\lintag{W}} = \TL{\hat{N}}$.
    \end{description}
  \end{description}

  \item[Case]
  \begin{bigreductions}
  \lsemlab{Remove} & \FC[\lintag{V}] &\lreducesto{\emptyset}{\{\lintag{V}\}} &\FC[V]
  \end{bigreductions}
  The linear safety of $\hat{N}$ directly follows from the linear safety of $\hat{M}$.
  We have $\TL{\hat{M}}\umul\emptyset = \TL{\FC}\umul\TL{\lintag{V}}
  = \TL{\FC}\umul\TL{V}\umul\{\lintag{V}\} =
  \TL{\hat{N}}\umul\{\lintag{V}\}$.

  \item[Case]
  \begin{bigreductions}
  \lsemlab{Lift} &
    \EC[M] &\lreducesto{\mS}{\mT}& \EC[N],  \hfill\text{ if } M \lreducesto{\mS}{\mT} N
  \end{bigreductions}
  The linear safety of $\hat{M}$ gives the linear safety of $\EC$ and $M$.
  By IH, we have the linear safety of $N$.
  The linear safety of $\hat{N}$ follows from the linear safety of
  $\EC$ and $N$.
  By IH, we have $\TL{M}\umul\mS = \TL{N}\umul\mT$.
  Our goal follows from $\TL{\hat{M}}\umul\mS = \TL{\EC} \umul
  \TL{M}\umul\mS = \TL{\EC} \umul \TL{N}\umul\mT =
  \TL{\hat{N}}\umul\mT$.

\end{description}

\end{proof}

\section{Full Specification of \CalcS}

In this section, we give the full syntax, typing rules, type
inference, and constraint solving algorithm of \CalcS in
\Cref{sec:qeffpopsub}.

\subsection{Full Syntax}
\label{app:qfp-syntax}

The full syntax of $\CalcS$ is given in \Cref{fig:qeffpopsub-syntax}.
Note that we introduce the syntactic category of concrete rows to
simplify the presentation of the constraint solving algorithm.

\begin{figure}[htb]
  \begin{syntax}
  \slab{Value types}    &A,B  &::= & \alpha \mid A \to^{Y} C \\
  \slab{Computation types}
                        &C,D  &::= & A \eff E \\
  \slab{Handler types}  &F    &::= & C \hto D \\
  \slab{Effect types}   &E    &::= & \{R\}\\
  \slab{Concrete row types} &\Belong{\cR}{\meta{CRow}} &::= & \cdot \mid \ell \lsig A\sto^Y B \lsep \cR \\
  \slab{Row types}          &\Belong{R}{\meta{Row}}  &::= & \mu \mid \cR \mid \cR \lsep R \\
  \slab{Linearity types}&Y    &::= & \hl{\phi} \mid \Unl \mid \Lin \\
  \slab{Types}          &\tau &::= & A \mid R \mid Y \\
  \slab{Predicates}         &\Belong{\pi}{\meta{Pred}} &::= & \hlbox{\tau_1 \linle \tau_2
    \mid R_1 \rle R_2 \mid R \rlack \mL} \\
  \slab{Qualified types} &\rho &::=& A \mid {\pi \Rightarrow \rho} \\
  \slab{Type schemes}   &\Belong{\sigma}{\meta{TySch}} &::=& \rho \mid \forall \alpha . \sigma \\
  \slab{Label sets}     &\mathcal{L} &::=& \emptyset \mid \{\ell\} \uplus \mathcal{L}\\
  \slab{Type contexts}  &\Belong{\Gamma}{\meta{Env}}   &::=& \cdot \mid \Gamma, x:\sigma \\
  \slab{Predicate sets}     &\Belong{P}{\meta{PSet}} &::= & \cdot \mid P, \pi \\

  \slab{Values}        &V,W  &::=  & x \mid  \lambda x . M \\
  \slab{Computations}  &M,N  &::= & V\,W \mid \Ret V
                                  \mid \Do \ell\; V
                                  \mid \hlbox{\Let x = V \In M} \\
                       &     &\mid& \Let x \revto M \In N \mid \Handle M \With H\\
  \slab{Handlers}      &H    &::= & \{ \Ret x \mapsto M \}
                              \mid  \{ \ell \; p \; r \mapsto M \} \uplus H
  \end{syntax}
  \caption{The Syntax of \CalcS}
  \label{fig:qeffpopsub-syntax}
\end{figure}

\subsection{Full Typing Rules}
\label{app:qfp-typing}

The full syntax-directed typing rules for \CalcS is given in
\Cref{fig:qe-typing}.
Note that in the qualified effect system of \CalcS, we only have
subtyping between row types and use them in \qtylab{Do}, \qtylab{Seq},
\qtylab{Handle}, and \qtylab{Handler}.
This is different from other type systems with general subtyping,
where the subtyping relation is used everywhere.
For example, in the \qtylab{App} rule, we require the argument type to
be equal to the parameter type of the function, instead of requiring a
subtyping relation.
Having a full subtyping relation between any types does not help
improve the accuracy of tracking control-flow linearity; subtyping
between effect rows is enough.

\begin{figure}[htb]

  \raggedright
  \boxed{\typ{P\mid\Gamma}{V : A}}
  \boxed{\typ{P\mid\Gamma}{M : C}}
  \boxed{\typ{P\mid\Gamma}{H : C \hto D}}
  \hfill

  \begin{mathpar}

  \inferrule*[Lab=\qtylab{Var}]
  { P \vdash \Gamma\un \\\\
    (P \qto A) \sqgen \sigma}
  {\typ{P\mid\Gamma, x : \sigma}{x : A}}

  \inferrule*[Lab=\qtylab{Abs}]
  { \typ{P\mid\Gamma, x : A}{M : C}\\\\
    P \vdash \Gamma \linle Y }
  {\typ{P\mid\Gamma}{\lambda x . M : A \to^Y C}}

  \inferrule*[Lab=\qtylab{App}]
  { \typ{P\mid\Gamma_1, \Gamma}{V : A \to^Y C} \\\\
    \typ{P\mid\Gamma_2, \Gamma}{W : A} \\
    P \vdash \Gamma\un
  }
  {\typ{P\mid\Gamma_1, \Gamma_2, \Gamma}{V\,W : C}}

  \inferrule*[Lab=\qtylab{Let}]
  { \typ{Q\mid\Gamma_1, \Gamma}{V : A} \\
    \sigma = \gen((\Gamma_1, \Gamma), Q\qto A) \\\\
    \typ{P\mid\Gamma_2, \Gamma, x : \sigma}{M : C} \\
    P \vdash \Gamma\un
  }
  {\typ{P\mid\Gamma_1, \Gamma_2, \Gamma}{\Let x = V \In M : C}}
  
    \inferrule*[Lab=\qtylab{Return}]
      {\typ{P\mid\Gamma}{V : A}}
      {\typc{P\mid\Gamma}{\Ret V : A}{\{R\}}}

    \inferrule*[Lab=\qtylab{Do}]
      {
        \typ{P\mid\Gamma}{V : A_\ell} \\\\
        P \vdash \{\ell\lsig A_\ell\sto^Y B_\ell\} \rle R
      }
      {\typc{P\mid\Gamma}{\Do \ell \; V : B_\ell}{\{R\}}}

  \inferrule*[Lab=\qtylab{Seq}]
  { \typc{P\mid\Gamma_1, \Gamma}{M : A}{\{R_1\}} \\
    \typc{P\mid\Gamma_2, \Gamma, x : A}{N : B}{\{R_2\}} \\\\
    P \vdash R_1 \rle R \\ P \vdash R_2 \rle R \\
    P \vdash \Gamma_2 \linle R_1 \\
    P \vdash \Gamma \un
  }
  {\typc{P\mid\Gamma_1,\Gamma_2,\Gamma}{\Let x \revto M \In N : B}{\{R\}}}

  \inferrule*[Lab=\qtylab{Handle}]
  {
    \typ{P\mid\Gamma_1,\Gamma}{H : A\eff\{R_1\} \hto D} \\\\
    \typc{P\mid\Gamma_2, \Gamma}{M : A}{\{R\}} \\\\
    P \vdash \Gamma \un \\ P \vdash R \rle R_1
  }
  {P\mid\Gamma_1, \Gamma_2,\Gamma\vdash \Handle M \With H : D}

  \inferrule*[Lab=\qtylab{Handler}]
  {
    H = \{\Ret x \mapsto M\} \uplus \{ \ell_i\;p_i\;r_i \mapsto N_i \}_i\\\\
    C = A \eff \{(\ell_i : A_i \sto^{Y_i} B_i)_i; R_1\} \\
    D = B \eff \{R_2\} \\\\
    \typ{P\mid\Gamma, x : A}{M : D}\\\\
    [\typ{P\mid\Gamma, p_i : A_i, r_i : B_i \to^{Y_i} D}{N_i : D}]_i \\\\
    P \vdash \Gamma \un \\
    P \vdash R_1 \rle R_2 \\
    P \vdash R_1 \rlack \{\ell_i\}_i
  }
  {\typ{P\mid\Gamma}{H : C \hto D}}

  \end{mathpar}

  \caption{Syntax-directed Typing Rules for \CalcQ}
  \label{fig:qe-typing}
\end{figure}

\subsection{Type Inference Algorithm}
\label{app:qfp-inference}

The full type inference of \CalcS is given in
\Cref{fig:qeffpopsub-type-inference}.
It uses the unification relations $\unify{\tau}{\tau'}{\theta}$ which
states that $\theta$ is the principal unifier of types $\tau$ and
$\tau'$, and $\unify{C}{C'}{\theta}$ which states that $\theta$ is the
principal unifier for computation types $C$ and $C'$.
The unification relations are directly defined by the unification
function.

\begin{mathpar}
  \inferrule*[Lab=\ulab{Type}]
  {\munify(\tau\sim \tau') = \theta}
  {\unify{\tau}{\tau'}{\theta}}

  \inferrule*[Lab=\ulab{Comp}]
  {\munify(C\sim C') = \theta}
  {\unify{C}{C'}{\theta}}
\end{mathpar}

\Cref{fig:unification} gives unification function $\munify(U)$ which
takes a set of unification predicates and returns the principal
unifiers of them.
It is relatively standard~\citep{MartelliM82}.
The arrow $\rightharpoonup$ indicates a meta function that might fail.
Following \citet{Leijen08} we explicitly indicate the successful
return of a result by $\mreturn$.
The auxiliary functions $\unifyrow$ and $\unifylin$ are given and
explained in \label{sec:constraint-solver}.
The unification predicates and predicate sets are defined as follows.
\begin{syntax}
  \slab{Unification predicates} &\Belong{u}{\meta{UPred}} &::= & \tau\sim\tau' \mid C\sim C' \\
  \slab{Unification sets} &\Belong{U}{\meta{USet}} &::= & U, u \\
\end{syntax}

Note that it is possible to postpone the solving of unification
constraints to the constraint solving algorithm.
We opt for this mixed style presentation for \CalcS in order to keep
close to the original presentation of qualified types \citep{jones94},
and to keep the constraint set cleaner.

\begin{figure}[htb]
  \raggedright
  \boxed{\infer{\Gamma}{V}{A}{\theta}{P}{\Sigma}}
  \boxed{\infer{\Gamma}{M}{C}{\theta}{P}{\Sigma}}
  \boxed{\infer{\Gamma}{H}{C \hto D}{\theta}{P}{\Sigma}}

  \begin{mathpar}
  \inferrule*[Lab=\tilab{Var}]
  {
    (x : \forall \ol{\alpha} . P \qto A) \in \Gamma \\\\
    \ol\beta \fresh \\
    \theta = [\ol{\beta}/ \ol{\alpha}]
  }
  {\infer{\Gamma}{x}{\theta A}{\theta}{\theta P}{\{x\}}}

  \inferrule*[Lab=\tilab{Let}]
  {
    \infer{\Gamma}{V}{A}{\theta_1}{P_1}{\Sigma_1} \\
    \sigma = \gen(\theta_1 \Gamma, P_1 \qto A) \\\\
    \infer{\theta_1\Gamma, x:\sigma}{M}{C}{\theta_2}{P_2}{\Sigma_2} \\\\
    Q = \makeunl{\theta_2\theta_1\Gamma |_{\Sigma_1 \cap \Sigma_2}} \cup
        \makeunl{\theta_2(x:\sigma) |_{\setcomplement{\Sigma_2}}}
  }
  {\infer{\Gamma}{\Let x = V \In M}{C}
    {\theta_2\theta_1}{P_2\cup Q}{\Sigma_1 \cup (\Sigma_2 \backslash x)}}

  \inferrule*[Lab=\tilab{Abs}]
  {
    \alpha,\phi \fresh\\
    \infer{\Gamma, x:\alpha}{M}{C}{\theta}{P}{\Sigma} \\\\
    Q = \makeleq{\theta\Gamma |_{\Sigma}, \phi} \cup
        \makeunl{\theta(x:\alpha) |_{\setcomplement{\Sigma}}}
  }
  {\infer{\Gamma}{\lambda x . M}{\theta\alpha \to^\phi C}
    {\theta}{P \cup Q}{\Sigma \backslash x}}

  \inferrule*[Lab=\tilab{App}]
  {
    \infer{\Gamma}{V}{A}{\theta_1}{P_1}{\Sigma_1} \\
    \infer{\theta_1\Gamma}{W}{B}{\theta_2}{P_2}{\Sigma_2} \\\\
    \alpha, \mu, \phi \fresh \\
    \unify{\theta_2 A}{(B \to^\phi \alpha\eff\mu)}{\theta_3} \\\\
    P = \theta_3 (\theta_2 P_1 \cup P_2) \\
    Q = \makeunl{\theta_3\theta_2\theta_1\Gamma |_{\Sigma_1 \cap \Sigma_2}}
  }
  {\infer{\Gamma}{V\; W}{\theta_3(\alpha\eff\mu)}
    {\theta_3\theta_2\theta_1}{P\cup Q}{\Sigma_1 \cup \Sigma_2}}

  \inferrule*[Lab=\tilab{Seq}]
  {
    \infer{\Gamma}{M}{A\eff\{R_1\}}{\theta_1}{P_1}{\Sigma_1} \\
    \infer{\theta_1\Gamma,x:A}{N}{B\eff\{R_2\}}{\theta_2}{P_2}{\Sigma_2} \\
    \mu \fresh \\\\
    Q = \makeunl{\theta_2\theta_1\Gamma |_{\Sigma_1 \cap \Sigma_2}} \cup
        \makeunl{\theta_2(x:A) |_{\setcomplement{\Sigma_2}}} \cup
        \makeleq{\theta_2\theta_1\Gamma |_{\Sigma_2}, \theta_2 R_1} \cup
        \makesub{\theta_2 R_1, \mu} \cup
        \makesub{R_2, \mu} \\
  }
  {\infer{\Gamma}{\Let x \revto M \In N}{B\eff \mu}
    {\theta_2\theta_1}{\theta_2 P_1 \cup P_2 \cup Q}{\Sigma_1 \cup (\Sigma_2 \backslash x)}}

  \inferrule*[Lab=\tilab{Return}]
  {
    \infer{\Gamma}{V}{A}{\theta}{P}{\Sigma} \\
    \mu \fresh
  }
  {\infer{\Gamma}{\Ret V}{A \eff \{\mu\}}{\theta}{P}{\Sigma}}

  \inferrule*[Lab=\tilab{Do}]
  {
    \infer{\Gamma}{V}{A}{\theta_1}{P}{\Sigma} \\
    \unify{A}{A_\ell}{\theta_2} \\\\
    \mu, \phi \fresh \\
    Q = \makesub{(\ell: A_\ell \sto^\phi B_\ell), \mu} \\
  }
  {\infer{\Gamma}{\Do \ell\; V}
    {B_\ell \eff \{\mu\}}{\theta_2\theta_1}{\theta_2 P\cup Q}{\Sigma}}

  \inferrule*[Lab=\tilab{Handle}]
  {
    \infer{\Gamma}{H}{A\eff\{R_1\} \hto D}{\theta_1}{P_1}{\Sigma_1} \\
    \infer{\theta_1\Gamma}{M}{A'\eff\{R\}}{\theta_2}{P_2}{\Sigma_2} \\\\
    \unify{\theta_2 A}{A'}{\theta_3} \\
    P = \theta_3 (\theta_2 P_1 \cup P_2) \\
    Q = \makesub{\theta_3 R,\theta_3\theta_2 R_1}\cup
        \makeunl{\theta_3\theta_2\theta_1\Gamma |_{\Sigma_1\cap\Sigma_2}}
  }
  {\infer{\Gamma}{\Handle M \With H}{\theta_3\theta_2 D}
    {\theta_3\theta_2\theta_1}{P \cup Q}{\Sigma_1 \cup \Sigma_2}}

  \inferrule*[Lab=\tilab{Handler}]
  {
    \alpha,\phi_i,\mu \fresh \\
    \infer{\Gamma, x:\alpha}{M}{D}{\theta_0}{P_0}{\Sigma_0} \\\\
    [\infer{\theta_{i-1}(\Gamma, p_i : A_{\ell_i}, r_i : {B_{\ell_i} \to^{\phi_i} D})}
           {N_i}{D_i}{\theta_i'}{P_i}{\Sigma_i} \\
      \unify{D_i}{\theta_i'\theta_{i-1} D}{\theta_i''} \\
      \theta_i = \theta_i''\theta_i'\theta_{i-1}]_{i=1}^n \\\\
    C = \theta_n (\alpha\eff\{(\ell_i : A_{\ell_i} \sto^{\phi_i} B_{\ell_i})_i\lsep\mu\}) \\
    B\eff\{R\} = \theta_n D \\\\
    \Sigma = (\Sigma_0 \backslash \{x\}) \cup
              (\cup_{i=1}^{n} (\Sigma_i \backslash \{p_i, r_i\})) \\
    P = (\cup_{i=0}^{n} \theta_n P_i)
      \cup \makeunl{\theta_n\Gamma |_{\Sigma}}
      \cup \makesub{\mu, R}
      \cup \makelack{\mu, \{\ell_i\}_i}
      \\
    Q = \makeunl{\theta_n(x:\alpha)|_{\setcomplement{\Sigma_0}}}
      \cup (\cup_{i=1}^n\makeunl{\theta_n(p_i:A_{\ell_i},r_i:B_{\ell_i}\to^{\phi_i}D)})
  }
  {\infer{\Gamma}{\{\Ret x \mapsto M\} \uplus
                  \{ \ell_i\;p_i\;r_i \mapsto N_i \}_{i=1}^n }{C \hto \theta_n D}
    {\theta_{n}}{P\cup Q}{\Sigma}}

  \end{mathpar}

  \[
    \begin{split}
    \makeleq{\Gamma, \tau} &= \factor(\Gamma \linle \tau) \\
    \makeunl{\Gamma} &= \makeleq{\Gamma, \Unl}
    \end{split}\qquad
    \begin{split}
    \makesub{R_1, R_2} &= \factor(R_1 \rle R_2) \\
    \makelack{R, \mL} &= \factor(R \rlack \mL)
    \end{split}
  \]

  \caption{Type Inference of \CalcQ}
  \label{fig:qeffpopsub-type-inference}
\end{figure}

\begin{figure}[htbp]
\begin{minipage}[t]{0.48\textwidth}
\[
\bl
\munify : \meta{USet} \rightharpoonup \meta{Subst}
\medskip \\
\munify(\cdot) = \mreturn\,\iota
  \medskip
  \\
\munify(\alpha\sim\alpha, U) = \munify(U)
  \medskip
  \\

\munify(\alpha\sim\tau, U) = \\
\quad\bl
  \massert\ \alpha\notin\ftv{\tau} \\
  \mlet\ \theta = [\tau/\alpha] \\
  \munify(\theta U) \theta
  \el
  \medskip
  \\

\munify(\tau\sim\alpha, U) = \\
\quad \munify(\alpha\sim\tau, U)
  \medskip
  \\

\munify(A\eff \{R\}\sim A'\eff \{R'\}, U) = \\
\quad \munify({A\sim A', R\sim R'}, U)
  \medskip
  \\

\munify((A\to^Y C)\sim (A'\to^{Y'} C'), U) = \\
\quad \munify({A\sim A', C\sim C', Y\sim Y'}, U)
  \medskip
  \\

\munify(Y \sim Y', U) = \\
\quad\bl
  \mlet\ \theta = \unifylin(Y, Y') \\
  \munify(\theta U)\theta
  \el
  \medskip
  \\
\el
\]
\end{minipage}
\begin{minipage}[t]{0.48\textwidth}
\[\bl
\munify(\cR_1\sim \cR_2, U) = \\
\quad\bl
  \mlet\ (\cR_1', \cR_2', \theta) = \unifyrow(\cR_1, \cR_2) \\
  \massert\ \rowset{\cR_1'} = \rowset{\cR_2'} \\
  \munify(\theta U) \theta
  \el
  \medskip
  \\

\munify(\cR_1\lsep\mu_1\sim \cR_2, U) = \\
\quad\bl
  \mlet\ (\cR_1',\cR_2',\theta) = \unifyrow(\cR_1, \cR_2) \\
  \massert\ \rowset{\cR_1'}\subseteq \rowset{\cR_2} \\
  \mfresh\ \mu \\
  \mlet\ \theta' = [((\cR_2'\backslash \cR_1')\lsep\mu)/\mu_1] \\
  \munify(\theta'\theta U) \theta'\theta
  \el
  \medskip
  \\

\munify(\cR_2\sim \cR_1\lsep\mu_1, U) = \\
\quad \munify(\cR_1\lsep\mu_1\sim \cR_2, U)
  \medskip
  \\

\munify(\cR_1\lsep\mu_1\sim \cR_2\lsep\mu_2, U) = \\
\quad\bl
  \mlet\ (\cR_1',\cR_2',\theta) = \unifyrow(\cR_1, \cR_2) \\
  \mfresh\ \mu \\
  \ba{@{}l@{}l}
  \mlet\ \theta' = [&((\cR_2'\backslash \cR_1')\lsep\mu)/\mu_1, \\
                    &((\cR_1'\backslash \cR_2')\lsep\mu)/\mu_2] \\
  \ea \\
  \munify(\theta'\theta U) \theta'\theta
  \el
  \medskip
  \\
\el\]
\end{minipage}

\caption{Unification of \CalcS}
\label{fig:unification}
\end{figure}

\subsection{Constraint Solving Algorithm}
\label{app:qfp-constraint-solving}

The constraint solving algorithm of \CalcS is given in
\Cref{fig:constraint-solving}.

The function $\unifylin$ unifies two linearity types.
The function $\unifylab$ unifies the signatures of shared labels of
two concrete rows.
The function $\unifyrow$ wraps $\unifylab$.
The function $\solvelin$ computes the transitive closure of linearity
constraints.

The function $\solverow(\theta,P,Q)$ solves row constraints.
It takes the current substitution $\theta$ and the currently solved
predicate set $P$, and solves the predicates in $Q$.
The basic idea is to transform the row subtyping predicates to forms
of $\mu\rle R$ and row lacking predicates to forms of $\mu\rlack\mL$,
which we call \emph{solved forms}.
It does a case analysis on the first predicate in $Q$.
For instance, consider the most complicated case
$\cR_1\lsep\mu_1\rle\cR_2\lsep\mu_2$.
It first unifies the common labels of $\cR_1$ and $\cR_2$.
When $\cR_1$ is a subset of $\cR_2$, we can directly transform it to
the solved form; otherwise, we allocate a fresh row variable to
substitute $\mu_2$ and transform it to the solved form.
Note that we also need to move all previously solved predicates to the
unsolved predicate set, because the row variable $\mu_2$ is
substituted, which might turn some predicates in solved forms to
unsolved forms.

The main function $\solve$ sequentially solves row constraints using
$\solverow$ and linearity constraints using $\solvelin$.
Note that we use $\factor$ to factorise the output predicate set to
transform the linearity constraints into the simplest form (i.e., only
between value type variables, row variables, and linearity), which is
suitable for computing the transitive closure using $\solvelin$.

\begin{figure}[h]
  \begin{minipage}[t]{0.48\textwidth}
  \[
  \bl
  \ba{@{}l@{\;}l@{\;}l@{}}
  \solverow &: &(\meta{Subst} \times \meta{PSet} \times \meta{PSet}) \\
  &\rightharpoonup &(\meta{Subst} \times \meta{PSet})
  \ea
  \medskip \\
  \solverow(\theta,P,\cdot) = \mreturn\ (\theta, P)
    \medskip \\

  \solverow(\theta,P,(\tau_1\linle\tau_2, Q)) = \\
  \quad\bl
    \solverow(\theta,(P,\tau_1\linle\tau_2), Q)
    \el
    \medskip \\

  \solverow(\theta, P, (\cR_1 \rle \cR_2, Q)) = \\
  \quad\bl
    \mlet\ (\cR_1', \cR_2', \theta') = \unifyrow(\cR_1, \cR_2) \\
    \massert\ \rowset{\cR_1'} \subseteq \rowset{\cR_2'} \\
    \solverow(\theta'\theta, \theta'P, \theta'Q)
    \el
    \medskip \\

  \solverow(\theta, P, (\cR_1 \lsep \mu \rle \cR_2 \lsep \mu, Q)) = \\
  \quad\bl
    \mlet\ (\cR_1', \cR_2', \theta') = \unifyrow(\cR_1, \cR_2) \\
    \massert\ \rowset{\cR_1'} \subseteq \rowset{\cR_2'} \\
    \solverow(\theta'\theta, \theta' P, \theta' Q)
    \el
    \medskip \\

  \solverow(\theta, P, (\cR_1 \lsep \mu \rle \cR_2, Q)) = \\
  \quad\bl
    \mlet\ (\cR_1', \cR_2', \theta') = \unifyrow(\cR_1, \cR_2) \\
    \massert\ \rowset{\cR_1'} \subseteq \rowset{\cR_2'} \\
    \solverow(\theta'\theta,
      (\theta'P, \mu \rle (\cR_2'\backslash \cR_1')), \theta' Q)
    \el
    \medskip \\

  \solverow(\theta, P, (\cR_1 \rle \cR_2 \lsep \mu_2, Q)) = \\
  \quad\bl
    \mlet\ (\cR_1', \cR_2', \theta') = \unifyrow(\cR_1, \cR_2) \\
    \meta{if}\ \rowset{\cR_1'}\subseteq\rowset{\cR_2'}\\
    \meta{then}\
      \solverow(\theta'\theta, \theta'P, \theta'Q) \\
    \meta{else}\
    \bl
      \mfresh\ \mu \\
      \mlet\ \theta'' = [((\cR_1'\backslash \cR_2')\lsep \mu) / \mu_2]\theta' \\
      \solverow(\theta''\theta, \cdot, \theta''(Q,P))
      \el
    \el
    \medskip \\

  \solverow(\theta, P, (\cR_1 \lsep \mu_1 \rle \cR_2 \lsep \mu_2, Q)) = \\
  \quad\bl
    \mlet\ (\cR_1', \cR_2', \theta') = \unifyrow(\cR_1, \cR_2) \\
    \meta{if}\ \rowset{\cR_1'}\subseteq\rowset{\cR_2'}\\
    \meta{then}\
    \bl
      \solverow(\theta'\theta,(\theta'P,\mu_1\rle(\cR_2'\backslash\cR_1')\lsep\mu_2),\theta'Q)
      \el\\
    \meta{else}\
    \bl
      \mfresh\ \mu \\
      \mlet\ \theta'' = [((\cR_1'\backslash \cR_2')\lsep \mu) / \mu_2]\theta' \\
      \solverow(\theta''\theta,%
      \mu_1 \rle (\cR_2'\backslash \cR_1')\lsep \mu, \theta''(Q,P))
      \el
    \el
    \medskip \\

  \solverow(\theta, P, (\cR \rlack \mL, Q)) = \\
  \quad\bl
    \massert\ \dom{\cR} \cap \mL = \emptyset \\
    \solverow(\theta, P, Q)
    \el
    \medskip \\

  \solverow(\theta, P, (\cR\lsep\mu \rlack \mL, Q)) = \\
  \quad\bl
    \massert\ \dom{\cR} \cap \mL = \emptyset \\
    \solverow(\theta, (P, \mu \rlack \mL), Q)
    \el
  \el
  \]
  \end{minipage}
  \begin{minipage}[t]{0.48\textwidth}
  \[
  \bl

  \ba{@{}l@{\;}l@{\;}l@{}}
  \unifyrow &: &(\meta{CRow} \times \meta{CRow}) \\
  &\rightharpoonup &(\meta{CRow}\times\meta{CRow}\times\meta{Subst})\\
  \ea
  \\
  \unifyrow(\cR, \cR') = \\
  \quad\bl
    \mlet\ \theta = \unifylab(\cR,\cR') \\
    \mreturn\ (\theta\cR, \theta\cR', \theta)
    \el
    \bigskip\\

  \ba{@{}l@{\;}l@{\;}l@{}}
  \unifylab &: &(\meta{CRow} \times \meta{CRow}) \rightharpoonup \meta{Subst}
  \ea
  \\

  \unifylab(\cdot, \cR) = \mreturn\ \iota
    \\

  \unifylab(\cR, \cdot) = \mreturn\ \iota
    \\

  \unifylab((\ell:Y_1\lsep \cR_1), (\ell:Y_2\lsep \cR_2)) = \\
  \quad\bl
    \mlet\ \theta = \unifylin(Y_1, Y_1) \\
    \mlet\ \theta' = \unifylab(\theta \cR_1, \theta \cR_2) \\
    \mreturn\  \theta'\theta
    \el
    \\

  \unifylab((\ell:Y\lsep \cR_1), \cR_2) = \unifylab(\cR_1, \cR_2)
    \\

  \unifylab(\cR_1, (\ell:Y\lsep \cR_2)) = \unifylab(\cR_1, \cR_2)
    \bigskip \\

  \unifylin : (\meta{Lin} \times \meta{Lin}) \rightharpoonup \meta{Subst}
  \\

  \ba{@{}l@{\;}l@{}}
  \unifylin(Y, Y) &= \mreturn\ \iota \\
  \unifylin(\Unl, \Lin) &= \mfail \\
  \unifylin(\Lin, \Unl) &= \mfail \\
  \unifylin(\phi, Y) &= \mreturn\ [Y/\phi] \\
  \unifylin(Y, \phi) &= \mreturn\ [Y/\phi] \\
  \ea
  \bigskip \\

  \solvelin : (\meta{PSet} \times \meta{PSet}) \to \meta{PSet}
  \\

  \solvelin(P, \cdot) = P \\
  \solvelin(P, (R_1\rle R_2, Q)) = \solvelin(P, Q) \\
  \solvelin(P, (\tau_1\linle\tau_2, Q)) = \solvelin(P\cup P'', Q) \\
  \quad\meta{where}\\
  \quad\quad\bl
    P' = P \cup \{\tau_1\linle\tau_1,\tau_2\linle\tau_2\} \\
    P'' =
      \{\tau_1'\linle\tau_2' \mid
      \{\tau_1'\linle\tau_1,\tau_2\linle\tau_2'\} \subseteq P'\}\\
    \el
  \bigskip \\

  \solve : \meta{PSet} \rightharpoonup (\meta{Subst}\times\meta{PSet}) \\
  \solve(P) = \\
  \quad\bl
    \mlet\ (\theta,Q) = \solverow(\iota,\cdot,P) \\
    \mlet\ Q' = \solvelin(\cdot, \factor(Q)) \\
    \massert\ (\Lin\linle\Unl) \notin Q' \\
    \mreturn\ (\theta, Q)
    \el
  \el
  \]
  \end{minipage}

\caption{Constraint Solving of \CalcS}
\label{fig:constraint-solving}
\end{figure}

\FloatBarrier
\section{Proofs of \CalcS}

In this section, we prove the theorems in \Cref{sec:qeffpopsub}.

\subsection{Correctness of Factorisation}
\label{app:factorisation}

We first prove some useful properties of the entailment relations.

\begin{restatable}[Properties of the entailment relation]{theorem}{propertyEntailmentRelation}
\label{thm:properties-of-the-entailment-relation}
The entailment relation between predicate sets satisfies the following
properties:
\begin{itemize}
  \item Monotonicity. If $Q \subseteq P$, then $P\vdash Q$.
  \item Transitivity. If $P_1\vdash P_2$ and $P_2\vdash P_3$, then $P_1\vdash P_3$.
  \item Closure property. If $P\vdash Q$, then $\theta P\vdash \theta Q$.
  \item Weakening. If $P\vdash Q$, then $P, P'\vdash Q$.
\end{itemize}
\end{restatable}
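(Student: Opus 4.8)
The plan is to prove all four properties by routine inductions on entailment derivations. A useful first observation is that the judgements $P \vdash Q$, $P \vdash \sigma \linle \tau$, and $P \vdash \Gamma \linle \tau$ all unfold — via \plab{PredSet}, the rules defining $P \vdash \sigma \linle \tau$, and \plab{Context} — into finite conjunctions of judgements of the basic form $P \vdash \pi$, so it suffices to argue about $P \vdash \pi$ (together with the scheme-linearity form $P \vdash \sigma \linle \tau$) and then lift componentwise. \emph{Monotonicity} is then immediate: if $Q \subseteq P$, every $\pi \in Q$ lies in $P$, so \plab{Subsume} gives $P \vdash \pi$ and \plab{PredSet} gives $P \vdash Q$. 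For \emph{Weakening} I would note that no entailment rule shrinks or otherwise consumes the hypothesis set — in each rule the $P$ of the conclusion reappears verbatim in every premise, and \plab{Subsume} is the only rule whose applicability depends on $P$ at all, and it remains applicable as $P$ grows — so an induction on the derivation of $P \vdash Q$ produces a derivation of the same shape for $P, P' \vdash Q$ (equivalently, one may cut $P, P' \vdash P$, an instance of Monotonicity, against $P \vdash Q$ once Transitivity is proved).

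The substance is \emph{Transitivity}, which I would reduce to a cut lemma: if $P_1 \vdash P_2$ and $P_2 \vdash \pi$, then $P_1 \vdash \pi$ (and similarly with $\pi$ replaced by $\sigma \linle \tau$). This is proved by induction on the derivation of the judgement with $P_2$ on the left. In the \plab{Subsume} case $\pi \in P_2$, so $P_1 \vdash \pi$ is literally one of the conjuncts of $P_1 \vdash P_2$. In \plab{Refl}, \plab{Lin}, \plab{Unl}, \plab{Sub} and \plab{Lack} the hypothesis set plays no role, so the very same rule concludes $P_1 \vdash \pi$. In \plab{Fun}, \plab{Row}, \plab{Qualifier}, \plab{PredSet} and \plab{Context} every premise carries $P_2$ unchanged, so I apply the induction hypothesis to each premise and re-apply the rule; in \plab{Quantifier} I apply the induction hypothesis to the premise for the chosen instance $[\tau'/\alpha]\sigma$ and re-apply \plab{Quantifier} with the same witness $\tau'$. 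Transitivity for $P_1 \vdash P_3$ then follows by running the cut lemma on each $\pi \in P_3$, with $P_2 \vdash P_3$ supplying $P_2 \vdash \pi$ and $P_1 \vdash P_2$ discharging it.

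For the \emph{Closure property} I would once more induct on the derivation of $P \vdash Q$, pushing $\theta$ through each rule: \plab{Subsume} uses $\pi \in P \Rightarrow \theta\pi \in \theta P$; \plab{Refl}, \plab{Lin}, \plab{Unl} use that $\theta$ fixes $\Lin$ and $\Unl$; \plab{Fun}, \plab{Row}, \plab{Qualifier}, \plab{PredSet}, \plab{Context} invoke the induction hypothesis componentwise; \plab{Quantifier} is dealt with after $\alpha$-renaming the bound variable out of the support of $\theta$, using $\theta\tau'$ as the new witness. The only cases that need a moment's care are \plab{Sub} and \plab{Lack}: one checks that $\rowset{-}$ commutes with substitution in the monotone sense $\rowset{\theta R} = \bigcup_{x \in \rowset{R}} \rowset{\theta\,x}$, so $\rowset{R_1} \subseteq \rowset{R_2}$ yields $\rowset{\theta R_1} \subseteq \rowset{\theta R_2}$, and similarly that $\dom{R} \cap \mL = \emptyset$ is preserved by the kind-respecting substitutions at hand, which never instantiate a row variable that must lack $\mL$ by a row mentioning a label of $\mL$.

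The main obstacle I anticipate is the cut lemma underpinning Transitivity: the care is in organising the simultaneous induction across the predicate, predicate-set and linearity judgement forms, and in confirming that the existential witness in \plab{Quantifier} and the two-way reading of \plab{Fun} and \plab{Row} introduce no circularity — in the cut lemma (as in all these proofs) only the ``premises-to-conclusion'' direction of each rule is ever used, so the witness is simply carried over and the two-way rules behave like ordinary rules. The other three properties are essentially bookkeeping once this is in place.
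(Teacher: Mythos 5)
Your proof follows the paper's approach exactly: monotonicity directly from \plab{Subsume} and \plab{PredSet}, and the other three properties reduced via \plab{PredSet} to single-predicate judgements and then discharged by induction on the entailment derivation, with transitivity going through precisely the cut lemma you state. The paper leaves all of these inductions as ``straightforward''; your case analyses (including the care with the \plab{Quantifier} witness and with substitution in the \plab{Sub} and \plab{Lack} cases, which the paper glosses over) are correct elaborations of the same argument.
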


\begin{proof} ~\\
\noindent\emph{Monotonicity.} Directly follows from \plab{Subsume} and
\plab{PredSet}.

\noindent\emph{Transitivity.} By \plab{PredSet}, we only need to prove
that if $P_1\vdash P_2$ and ${P_2}\vdash{\pi}$, then
${P_1}\vdash{\pi}$. By straightforward induction on
${P_2}\vdash{\pi}$.

\noindent\emph{Closure property.} By \plab{PredSet}, we only need to
prove that if ${P}\vdash{\pi}$ then ${\theta P}\vdash{\theta\pi}$. By
straightforward induction on ${P}\vdash{\pi}$.

\noindent\emph{Weakening.} By \plab{PredSet}, we only need to prove
that if ${P}\vdash{\pi}$ then
${P,P'}\vdash{\pi}$. By straightforward induction
on ${P}\vdash{\pi}$.

\end{proof}

\begin{lemma}[Inverse closure property]
  \label{lemma:closure-extended}
  If $P\vdash\theta(\sigma\linle\tau)$, then there exists
  $P'\vdash\sigma\linle\tau$ such that $P\vdash\theta P'$.
\end{lemma}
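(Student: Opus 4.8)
The plan is to prove \Cref{lemma:closure-extended} by induction on the structure of the type scheme $\sigma$, noting that the shape of $\sigma$ forces the last rule used in the derivation of $P \vdash \theta(\sigma \linle \tau) = P \vdash \theta\sigma \linle \theta\tau$.

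If $\sigma$ is a plain type (value type, row type, or linearity), then $\sigma \linle \tau$ is a genuine predicate $\pi$, and the derivation must end with one of \plab{Subsume}, \plab{Refl}, \plab{Lin}, \plab{Unl}, \plab{Fun}, \plab{Row}. In every such case I would simply take $P' = \{\sigma \linle \tau\}$: this entails $\sigma \linle \tau$ by \plab{Subsume}, and $\theta P' = \{\theta\sigma \linle \theta\tau\}$ is entailed by $P$ by hypothesis. If $\sigma = \pi_0 \qto \rho_0$, the derivation must end with \plab{Qualifier}, whose premises are $P \vdash \theta\pi_0$ and $P \vdash \theta(\rho_0 \linle \tau)$. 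I would apply the induction hypothesis to the latter to obtain $P'_1$ with $P'_1 \vdash \rho_0 \linle \tau$ and $P \vdash \theta P'_1$, and then set $P' = \{\pi_0\} \cup P'_1$. Then $P' \vdash (\pi_0 \qto \rho_0) \linle \tau$ by \plab{Qualifier}, using \plab{Subsume} for $\pi_0$ and monotonicity (\Cref{thm:properties-of-the-entailment-relation}) to lift $P'_1 \vdash \rho_0 \linle \tau$; and $P \vdash \theta P'$ follows from $P \vdash \theta\pi_0$, $P \vdash \theta P'_1$, and \plab{PredSet}.

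The remaining case $\sigma = \forall\alpha.\sigma_0$ is the crux. Here the derivation must end with \plab{Quantifier}, so there is an instance $\tau_1$ with $P \vdash [\tau_1/\alpha](\theta\sigma_0) \linle \theta\tau$. Invoking the usual hygiene convention that $\alpha$ is chosen fresh for $\theta$ (and for the codomain of $\theta$), for $\tau$, and for $P$, I would rewrite this premise as $P \vdash \psi(\sigma_0 \linle \tau)$ where $\psi = \theta \uplus [\tau_1/\alpha]$ is a well-defined single substitution, apply the induction hypothesis with substitution $\psi$ to get $P'_1 \vdash \sigma_0 \linle \tau$ with $P \vdash \psi P'_1$, and then take $P' = [\tau_1/\alpha]P'_1$, using $\tau_1$ as the instance in the reconstructing \plab{Quantifier} step. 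The entailment $P' \vdash (\forall\alpha.\sigma_0) \linle \tau$ follows by the closure property of \Cref{thm:properties-of-the-entailment-relation} applied to $[\tau_1/\alpha]$, together with $\alpha \notin \ftv{\tau}$.

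The delicate point, and the one I expect to be the main obstacle, is establishing $P \vdash \theta P'$: one must reconcile $\theta([\tau_1/\alpha]P'_1)$ with $[\tau_1/\alpha](\theta P'_1) = \psi P'_1$, which does not hold by definitional equality in general and requires the substitution calculus around $\alpha$ — its freshness for $\theta$ and for the codomain of $\theta$, so that $\theta$ and $[\tau_1/\alpha]$ interact in a controlled way — plus a further appeal to the closure and transitivity properties of entailment to transport $P \vdash \psi P'_1$ to the required conclusion. Everything else (the easy cases, the structural lemmas on entailment, and the bookkeeping of free variables) is routine, so the whole proof hinges on getting this substitution argument in the \plab{Quantifier} case exactly right.
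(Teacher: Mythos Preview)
Your proposal is essentially correct and follows the same overall strategy as the paper: induction driven by the shape of $\sigma$ (the paper phrases it as induction on the entailment derivation, but since the shape of $\sigma$ determines the last rule these coincide). Your base and \plab{Qualifier} cases match the paper's exactly, including the choice $P' = \{\pi\}$ in the former and $P' = \{\pi_0\} \cup P'_1$ in the latter.

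Where you differ is in the \plab{Quantifier} case. You apply the induction hypothesis with the \emph{extended} substitution $\psi = \theta \uplus [\tau_1/\alpha]$ on $\sigma_0$, obtain $P'_1$, set $P' = [\tau_1/\alpha]P'_1$, and then face the reconciliation of $\theta([\tau_1/\alpha]P'_1)$ with $\psi P'_1$. The paper instead front-loads the substitution calculus: under the freshness assumptions $\alpha \notin \dom\theta$ and $\alpha \notin \ftv\tau$ it commutes $[\tau'/\alpha]$ past $\theta$ so that the premise reads $P \vdash \theta([\tau'/\alpha]\sigma \linle \tau)$, and then applies the induction hypothesis with the \emph{same} $\theta$ but the instantiated scheme $[\tau'/\alpha]\sigma$. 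This directly yields some $P'$ with $P' \vdash [\tau'/\alpha]\sigma \linle \tau$ and $P \vdash \theta P'$, after which a single application of \plab{Quantifier} finishes. The delicate substitution interaction you correctly flag is exactly the content of the paper's commuting step; the paper is no more detailed about it than you are, but by pushing it before the induction hypothesis rather than after, it avoids your final reconciliation entirely and keeps the bookkeeping lighter.
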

\begin{proof}
  By induction on the entailment relations.
  \begin{description}
    \item[Case]
    \begin{mathpar}
  \inferrule*[Lab=\plab{Quantifier}]
  {\refa{P \vdash [\tau'/\alpha] \theta (\sigma \linle \tau)} \text{ for some } \tau'}
  {P \vdash \theta ((\forall\alpha . \sigma) \linle \tau)}
    \end{mathpar}
    Assume that $\alpha\notin\dom{\theta}$ and
    $\alpha\notin\ftv{\tau}$ without loss of generality. We can
    commute $[\tau'/\alpha]$ and $\theta$ in \refa{}.
    By the IH on \refa{}, there exists $P'\vdash
    [\tau'/\alpha](\sigma\linle\tau)$ such that $P\vdash\theta P'$.
    By \plab{Quantifier}, we have $P'\vdash(\forall\alpha.\sigma)\linle\tau$.

    \item[Case]
    \begin{mathpar}
  \inferrule*[Lab=\plab{Qualifier}]
  {
  \refa{P \vdash \theta\pi} \\ \refb{P \vdash \theta(\rho \linle \tau)}
  }
  {P \vdash \theta((\pi\qto\rho) \linle \tau)}
    \end{mathpar}
    By the IH on \refa{}, there exists $P_1\vdash\pi$ such that
    $P\vdash\theta P_1$.
    By the IH on \refb{}, there exists $P_2\vdash\rho\linle\tau$ such
    that $P\vdash\theta P_2$.
    By \plab{Qualifier}, we have $P_1\cup
    P_2\vdash(\pi\qto\rho)\linle\tau$.
    By \plab{PredSet}, we have $P\vdash\theta(P_1\cup P_2)$.

    \item[Case]
    For all other cases of $P\vdash\theta\pi$, just take $P'=\pi$.
  \end{description}
\end{proof}

\constraintFactorisation*

\begin{proof}
  The first part of the theorem is kind of obvious because
  $\factor(P)$ is almost directly defined from the entailment rules in
  \Cref{fig:qe-predicates-entailment}.
  We prove the auxiliary lemma that if $\factor(\pi) = Q$, then
  ${Q}\vdash{\pi}$ and ${\pi}\vdash{Q}$.
  Both directions follow from straightforward induction on the
  definition of $\factor$.
  Note that in the proof of ${\pi}\vdash{Q}$, we apply the bottom-up
  direction of the two-way rules \plab{Fun} and \plab{Row}.
  Then, given $\factor(P) = \bigcup_{\pi\in P}\factor(\pi)$, by the
  lemma we have ${\factor(\pi)}\vdash{\pi}$ for all $\pi\in P$, which
  then give $\bigcup_{\pi\in P}\factor(\pi)\vdash P$ by \plab{PredSet}
  and the \emph{weakening} of
  \Cref{thm:properties-of-the-entailment-relation}.
  We also have that ${\pi}\vdash{\factor(\pi)}$ for all $\pi\in P$,
  which then give $P \vdash \bigcup_{\pi\in P}{\factor(\pi)}$ by
  \plab{PredSet} and the \emph{weakening} of
  \Cref{thm:properties-of-the-entailment-relation}.

  For the second part of the theorem, we prove the auxiliary lemma
  that if $\factor(\sigma\linle\tau) = Q$, then
  {${Q}\vdash{\sigma\linle\tau}$} and {for any
  ${P}\vdash{\sigma\linle\tau}$, there exists $\theta$ such that
  $P\vdash \theta Q$}.
  The ${Q}\vdash{\sigma\linle\tau}$ follows from straightforward
  induction on the definition of $\factor$.
  The other direction is more involved.
  We proceed by induction on the definition of $\factor$.
  \begin{description}
    \item[Case]
  $$
\factor((\forall\alpha.\sigma) \linle \tau) = \refa{\factor([\beta/\alpha] \sigma \linle \tau)}
  \text{ for some fresh }\beta
  $$

  Suppose $\factor((\forall\alpha.\sigma) \linle \tau) = Q$.
  We want to show that for any \refb{$P\vdash (\forall\alpha.\sigma) \linle
  \tau$}, there exists $\theta$ such that $P\vdash\theta Q$.
  By \refb{} and \plab{Quantifier}, there exists
  $\theta_1=[\tau'/\alpha]$ such that
  $P\vdash\theta_1\sigma\linle\tau$.
  Let $\theta_2 = [\tau'/\beta]$. We have
  $P\vdash\theta_2[\beta/\alpha]\sigma\linle\tau$.
  By \Cref{lemma:closure-extended}, there exists $P'$ such that
  \refc{$P'\vdash[\beta/\alpha]\sigma\linle\tau$} and $P\vdash\theta_2
  P'$.
  By \refc{} and the IH on \refa{}, there exists $\theta_3$ such that
  $P'\vdash\theta_3 Q$.
  Then, by the closure property and transitivity of
  \Cref{thm:properties-of-the-entailment-relation}, we have
  $P\vdash\theta_2\theta_3 Q$.

  \item[Case]
  $$
\factor((\pi\qto\sigma) \linle \tau) = \refa{\factor(\pi)} \cup \refb{\factor(\sigma\linle\tau)}
  $$

  Suppose ${\factor(\pi)} = Q_1$ and ${\factor(\sigma\linle\tau)} =
  Q_2$.
  For any $P\vdash(\pi\qto\sigma)\linle\tau$, by \plab{Qualifier}, we
  have $P\vdash\pi$ and $P\vdash\sigma\linle\tau$.
  By the IH on \refa{},  there exists $\theta_1$ such that
  $P\vdash\theta_1 Q_1$.
  By the IH on \refb{}, there exists $\theta_2$ such that
  $P\vdash\theta_2 Q_2$.
  Note that $\dom{\theta_1} \cap \dom{\theta_2} = \emptyset$.
  Thus, we have $P\vdash\theta_1\theta_2 (Q_1\cup Q_2)$.

  \item[Case]
  $$
  \factor(\pi) = Q
  $$

  By the first part of the theorem which has been proved, we have
  $\pi\vdash Q$.
  For any $P\vdash\pi$, by the transitivity of
  \Cref{thm:properties-of-the-entailment-relation}, we have $P\vdash
  Q$.
  \end{description}
  With this lemma, our goal follows from a similar analysis to the
  proof of the first part of the theorem since \plab{Context} and
  \plab{PredSet} are both conjunction rules.

\end{proof}

\subsection{Principal Unifier}
\label{app:unification}

We have the following lemmas for the unification function in
\Cref{fig:unification} and its auxiliary functions.

\begin{lemma}[Principal auxiliary unifiers]
  \label{lemma:principal-auxiliary-unification}
  Given $\cR_1$ and $\cR_2$, let $\cR_1' = (\cR_1
  |_{\dom{\cR_1}\cap\dom{\cR_2}})$ and $\cR_2' = (\cR_2
  |_{\dom{\cR_1}\cap\dom{\cR_2}})$. If $\unifylab(\cR_1,\cR_2) =
  \theta$, then
  for any $\theta'\cR_1' = \theta'\cR_2'$, there exists $\theta''$
  such that $\theta'=\theta''\theta$; if it fails, then $\cR_1'$ and
  $\cR_2'$ cannot be unified.
\end{lemma}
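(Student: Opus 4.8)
The plan is to prove the two claims together by induction on $|\cR_1| + |\cR_2|$ (equivalently, on the depth of the recursion of $\unifylab$), strengthening the induction hypothesis to also record the soundness clause that $\theta$ is itself a unifier, i.e. $\theta\cR_1' = \theta\cR_2'$. Throughout I use that rows are identified up to reordering of labels, so whenever $\dom{\cR_1}\cap\dom{\cR_2}\neq\emptyset$ I may assume a common label $\ell$ heads both $\cR_1$ and $\cR_2$; I also use that a shared label carries the same parameter and result types in both rows (these come from the fixed global operation context), so unifying the two $\ell$-entries amounts to unifying their linearity annotations, which is exactly what the matching-heads clause does. A small auxiliary fact, proved by inspecting the five defining cases of $\unifylin$, is that $\unifylin(Y_1,Y_2)$ returns a principal unifier of $Y_1$ and $Y_2$ whenever one exists and fails precisely when $\{Y_1,Y_2\} = \{\Unl,\Lin\}$, in which case $Y_1$ and $Y_2$ have no unifier.

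For the base cases, if either row is $\cdot$ then $\dom{\cR_1}\cap\dom{\cR_2} = \emptyset$, so $\cR_1' = \cR_2' = \cdot$; the algorithm returns $\iota$, which trivially unifies them and through which every $\theta'$ factors as $\theta' = \theta'\iota$. If the head label $\ell$ of $\cR_1$ does not occur in $\cR_2$ (the symmetric case is dual), write $\cR_1 = \ell:Y\lsep\cR_1''$; deleting $\ell$ leaves the common domain unchanged, so the restricted rows of the recursive call $\unifylab(\cR_1'',\cR_2)$ are exactly $\cR_1'$ and $\cR_2'$ and the result is immediate from the induction hypothesis. The interesting case is a shared head label: write $\cR_1 = \ell:Y_1\lsep\cR_1''$ and $\cR_2 = \ell:Y_2\lsep\cR_2''$, and set $D = (\dom{\cR_1}\cap\dom{\cR_2})\setminus\{\ell\}$. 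Any $\theta'$ with $\theta'\cR_1' = \theta'\cR_2'$ must satisfy $\theta'Y_1 = \theta'Y_2$ and $\theta'(\cR_1''|_D) = \theta'(\cR_2''|_D)$. From the first equation and principality of $\unifylin$ we get $\theta_1 = \unifylin(Y_1,Y_2)$ and $\theta' = \rho\theta_1$ for some $\rho$; since substitutions do not rename labels, $\dom{\theta_1\cR_i''} = \dom{\cR_i''}$ and $(\theta_1\cR_i'')|_D = \theta_1(\cR_i''|_D)$, so $\rho$ unifies the common-domain restrictions of $\theta_1\cR_1''$ and $\theta_1\cR_2''$. Applying the induction hypothesis to $\unifylab(\theta_1\cR_1'',\theta_1\cR_2'') = \theta_2$ yields $\rho = \theta''\theta_2$, hence $\theta' = \theta''\theta_2\theta_1 = \theta''\theta$ as required; the strengthened soundness clause is checked the same way. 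For failure: if $\unifylin(Y_1,Y_2)$ fails then $\cR_1'$ and $\cR_2'$ disagree irreconcilably on $\ell$ and have no unifier; if instead the recursive call fails, the induction hypothesis gives that $\theta_1\cR_1''$ and $\theta_1\cR_2''$ have no unifier on $D$, and since $\theta_1$ is most general the original tails, hence $\cR_1'$ and $\cR_2'$, have none either.

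The main obstacle I anticipate is purely bookkeeping rather than conceptual: tracking precisely how the restriction $\cR|_{\dom{\cR_1}\cap\dom{\cR_2}}$ behaves under deleting a non-shared head label, deleting a shared head label, and applying a substitution, together with the up-to-reordering identification that brings a shared label to the front of both rows. Each commutation is routine, but the argument only becomes clean once one fixes the conventions $\dom{\theta\cR} = \dom{\cR}$ and $(\theta\cR)|_\Sigma = \theta(\cR|_\Sigma)$ and threads $D$ consistently through the shared-head case; with these pinned down the proof is the standard unification-algorithm correctness argument.
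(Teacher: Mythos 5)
Your proof is correct and follows essentially the same route as the paper, which simply appeals to ``straightforward induction on the definition of $\unifyrow$, $\unifylab$ and $\unifylin$''; you have carried out that induction explicitly, with the sensible strengthening that $\theta$ is itself a unifier of the restricted rows. The bookkeeping about restrictions commuting with substitution and label deletion is exactly the routine content the paper leaves implicit.
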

\begin{proof}
  By straightforward induction on the definition of $\unifyrow$,
  $\unifylab$ and $\unifylin$.
\end{proof}

\begin{lemma}[Principal unifiers]
  \label{lemma:principal-unifiers}
  If $\unify{A}{B}{\theta}$, then for any $\theta'A=\theta'B$, there
  exists $\theta''$ such that $\theta' = \theta''\theta$; if it fails,
  then $A$ and $B$ cannot be unified. The same applies to computation
  types.
\end{lemma}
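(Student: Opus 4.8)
The plan is to prove a generalisation of the statement for the auxiliary function $\munify$ acting on arbitrary unification sets, from which the lemma follows since $\unify{A}{B}{\theta}$ and $\unify{C}{C'}{\theta}$ are by definition just $\munify$ run on a singleton set. Concretely, I will show: if $\munify(U) = \theta$ then $\theta$ unifies $U$ (i.e.\ $\theta\tau = \theta\tau'$ for every $\tau\sim\tau'\in U$, and $\theta C = \theta C'$ for every $C\sim C'\in U$), and moreover $\theta$ is \emph{most general}, namely for any unifier $\theta'$ of $U$ there is $\theta''$ with $\theta'=\theta''\theta$ on the variables occurring in $U$; and if $\munify(U)$ fails then $U$ has no unifier. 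The lemma is the special case $U = \{A\sim B\}$ (resp.\ $\{C\sim C'\}$), so nothing extra is needed for computation types: $\munify$ decomposes $A\eff\{R\}\sim A'\eff\{R'\}$ uniformly into $A\sim A'$ and $R\sim R'$.

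The proof is by induction on the recursion of $\munify$, using the standard well-founded measure on $U$ (lexicographically: the number of distinct unsolved type/row/linearity variables in $U$, then the total syntactic size of $U$), which is precisely what makes $\munify$ terminate in the first place. The base case $\munify(\cdot) = \iota$ is immediate. The variable cases $\alpha\sim\tau$ use the usual occurs-check argument: if $\alpha\in\ftv\tau$ with $\alpha\neq\tau$ there is no unifier; otherwise $[\tau/\alpha]$ unifies $\{\alpha\sim\tau\}$, applying it strictly decreases the measure, and principality of $[\tau/\alpha]$ composed with the inductively-principal result of the recursive call on $\theta U$ follows by the standard substitution-composition argument. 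The structural cases for function, computation, and effect types just decompose into smaller unification predicates; the linearity case $Y\sim Y'$ invokes $\unifylin$, whose behaviour is finite and trivially principal ($\Unl\sim\Lin$ and $\Lin\sim\Unl$ have no unifier, a linearity variable gets substituted). Throughout I use that substitution composition is associative and that applying a substitution preserves being a unifier.

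The real work is in the row cases, where I will lean on \Cref{lemma:principal-auxiliary-unification}: $\unifylab$ (hence $\unifyrow$) produces a principal unifier of the restrictions of the two concrete rows to their common labels, or correctly reports failure. Given that, each row case of $\munify$ splits into (i) unifying common-label signatures via $\unifyrow$, (ii) checking a set inclusion or equality on the remaining labels, and (iii) for open rows, substituting the tail row variables $\mu_i$ by the appropriate label difference concatenated with a \emph{fresh} row variable. The subtlety — and the step I expect to be the main obstacle — is showing that introducing a fresh variable does not break principality: one must argue that for any unifier $\theta'$ of the original predicate, $\theta'$ factors through the computed substitution once we are free to choose the image of the fresh variable, i.e.\ the computed substitution is principal among unifiers whose domain excludes the fresh name (the classic ``principal unifier modulo fresh variables'' phenomenon for R\'emy-style row unification). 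The cases $\cR_1\lsep\mu_1\sim\cR_2$ and $\cR_1\lsep\mu_1\sim\cR_2\lsep\mu_2$ are the delicate ones; there one must also track that $\mu_1\neq\mu_2$ and pick the fresh $\mu$ fresh for $\theta'$ too. Once this bookkeeping is discharged, composing with the inductively-principal $\munify(\theta'\theta U)$ — whose measure is strictly smaller since $\theta'\theta$ eliminates at least one variable — yields the claim, and the failure cases follow because each $\massert$ or each auxiliary failure corresponds exactly to the non-existence of a unifier.
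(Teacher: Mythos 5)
Your proposal is correct and takes essentially the same route as the paper: the paper proves this lemma ``by straightforward induction on the definition of $\munify(U)$'', i.e.\ induction on the recursion structure over general unification sets, leaning on \Cref{lemma:principal-auxiliary-unification} for the row cases exactly as you do. Your writeup simply makes explicit the details the paper elides, in particular the termination measure and the ``principal modulo fresh row variables'' bookkeeping in the open-row cases.
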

\begin{proof}
  By straightforward induction on the definition of $\munify(U)$.
\end{proof}

\subsection{Soundness and Completeness of Type Inference}
\label{app:proof-type-inference}

We prove the soundness and completeness of type inference as well as
auxiliary lemmas.

\begin{lemma}[Closure property of typing]
  \label{lemma:typing-closure}
  If $\typ{P\mid\Gamma}{V:A}$, then $\typ{\theta
  P\mid\theta\Gamma}{V:\theta A}$. The same applies to computation and
  handler typing.
\end{lemma}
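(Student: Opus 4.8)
The plan is to prove all three statements simultaneously by mutual induction on the typing derivations $\typ{P\mid\Gamma}{V:A}$, $\typ{P\mid\Gamma}{M:C}$, and $\typ{P\mid\Gamma}{H:C\hto D}$, following the syntax-directed rules of \Cref{fig:qe-typing}. For each rule I would apply $\theta$ to every type, context, and predicate set appearing in it, invoke the induction hypothesis on each premise that is itself a typing judgement, and then check that the remaining side conditions survive substitution. Context-splitting structure such as $\Gamma = \Gamma_1,\Gamma_2,\Gamma$ is preserved pointwise by $\theta$, so the partitioning premises cause no trouble.

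The side conditions come in two flavours. First, entailment conditions of the form $P \vdash \pi$, $P \vdash R_1 \rle R_2$, $P \vdash R_1 \rlack \mL$, $P \vdash \Gamma \un$, and $P \vdash \Gamma_2 \linle R_1$. All of these are preserved by the \emph{closure property} of the entailment relation established in \Cref{thm:properties-of-the-entailment-relation}: from $P \vdash Q$ one gets $\theta P \vdash \theta Q$. The auxiliary judgements $P \vdash \sigma \linle \tau$ and $P \vdash \Gamma \linle \tau$ inherit this property by a straightforward induction on their defining rules in \Cref{fig:qe-predicates-entailment} (the same sort of induction already used for \Cref{lemma:closure-extended}); I would state this as a small extension of \Cref{thm:properties-of-the-entailment-relation} or simply fold it into the proof. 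Second, the generalisation/instantiation conditions: $(P \qto A) \sqgen \sigma$ in \qtylab{Var} and $\sigma = \gen((\Gamma_1,\Gamma),\, Q \qto A)$ in \qtylab{Let}. Here I would first alpha-rename the bound type variables of $\sigma$ so that they avoid both the domain and the free variables of the range of $\theta$; this is harmless since typing is invariant under renaming of bound type variables. Under this freshness assumption $\theta$ commutes with generalisation, since $\ftv{\theta\rho}\setminus\ftv{\theta\Gamma} = \theta(\ftv\rho\setminus\ftv\Gamma)$ when the quantified variables are fresh, and likewise $\theta$ respects the instantiation ordering $\sqgen$. Thus in \qtylab{Var} we obtain $(\theta P \qto \theta A) \sqgen \theta\sigma$, and in \qtylab{Let} the scheme bound into the context for checking $M$ is exactly $\theta\sigma$, so applying the induction hypothesis to the premise $\typ{P\mid\Gamma_2,\Gamma,x:\sigma}{M:C}$ yields $\typ{\theta P \mid \theta\Gamma_2,\theta\Gamma,x:\theta\sigma}{M:\theta C}$ as required. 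The remaining rules (\qtylab{Abs}, \qtylab{App}, \qtylab{Return}, \qtylab{Do}, \qtylab{Seq}, \qtylab{Handle}, \qtylab{Handler}) are then a mechanical rule-by-rule combination of the induction hypotheses on typing premises with $\theta$-closure on entailment premises.

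The main obstacle — really the only non-routine point — is getting the bookkeeping around generalisation right: proving the commutation of $\theta$ with $\gen$ (and with $\sqgen$) under the freshness hypothesis, and making the alpha-renaming argument precise. Everything else reduces to the closure property of entailment, which is already available from \Cref{thm:properties-of-the-entailment-relation}.
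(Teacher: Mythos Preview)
Your proposal is correct and takes essentially the same approach as the paper: the paper's own proof is the one-liner ``By the closure property of \Cref{thm:properties-of-the-entailment-relation} and straightforward induction on the typing derivations.'' You have simply unpacked what that straightforward induction involves, correctly identifying the closure of the auxiliary judgements $P \vdash \sigma \linle \tau$ and $P \vdash \Gamma \linle \tau$ and the alpha-renaming bookkeeping around $\gen$ and $\sqgen$ in \qtylab{Let} and \qtylab{Var} as the only places requiring any care.
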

\begin{proof}
  By the closure property of
  \Cref{thm:properties-of-the-entailment-relation} and straightforward
  induction on the typing derivations.
\end{proof}

\begin{lemma}[Weakening of predicates]
  \label{lemma:typing-weakening}
  If $\typ{P\mid\Gamma}{V:A}$, then $\typ{P, P'\mid\Gamma}{V: A}$. The
  same applies to computation and handler typing.
\end{lemma}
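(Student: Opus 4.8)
The plan is to proceed by a routine induction on the typing derivation of $\typ{P \mid \Gamma}{V : A}$, carried out simultaneously with the analogous statements for computation and handler typing. The typing rules of \CalcS have premises of exactly two kinds: structural premises that are themselves typing judgements, and side conditions that are entailment judgements of the form $P \vdash \pi$ or $P \vdash \Gamma' \linle \tau$ (together with $P \vdash \Gamma \un$). For a structural premise such as $\typ{P \mid \Gamma'}{M' : C'}$ I would apply the induction hypothesis to obtain $\typ{P, P' \mid \Gamma'}{M' : C'}$; for an entailment side condition I would invoke the \emph{weakening} clause of \Cref{thm:properties-of-the-entailment-relation}, which yields $P, P' \vdash \pi$ from $P \vdash \pi$. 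Reassembling each rule with the enlarged predicate set gives the required derivation, so there is one uniform case per rule.

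Two small points require care before the mechanical part. First, several rules (e.g. \qtylab{Abs}, \qtylab{Seq}, and the shallow-handler rule) use the auxiliary relation $P \vdash \Gamma' \linle \tau$, whereas \Cref{thm:properties-of-the-entailment-relation} only states weakening for $P \vdash Q$. I would therefore first prove, as an auxiliary lemma, that weakening also holds for $P \vdash \sigma \linle \tau$ and hence for $P \vdash \Gamma' \linle \tau$: by \plab{Context} the latter reduces to the former, and the former follows by a straightforward induction whose only $P$-dependent rule is \plab{Qualifier}, whose premise $P \vdash \pi$ is handled by weakening of $P \vdash Q$. Second, rule \qtylab{Let} types the bound value $V$ under a separate predicate set $Q$ that is generalised into $\sigma = \gen((\Gamma_1,\Gamma), Q \qto A)$; since we are only enlarging the outer set $P$, the sub-derivation for $V$ is untouched, and the induction hypothesis is applied only to the body derivation $\typ{P \mid \Gamma_2, \Gamma, x:\sigma}{M : C}$, with the side condition $P \vdash \Gamma \un$ weakened as above.

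The remaining rules (\qtylab{Var}, \qtylab{App}, \qtylab{Return}, \qtylab{Do}, \qtylab{Seq}, \qtylab{Handle}, \qtylab{Handler}) are entirely uniform instances of the pattern described above, and the same argument transfers verbatim to the computation and handler judgements. Consequently there is no genuine obstacle: all the substantive content is already packaged in \Cref{thm:properties-of-the-entailment-relation}. The only place where one might trip up is using weakening of the linearity-comparison judgement $P \vdash \Gamma' \linle \tau$ without having established it, so I would dispatch that auxiliary fact first and then let the main induction run its course.
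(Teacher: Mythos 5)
Your proposal is correct and follows essentially the same route as the paper, which proves this lemma by straightforward induction on the typing derivations together with the weakening clause of \Cref{thm:properties-of-the-entailment-relation}. Your additional observation—that weakening must first be extended from $P \vdash Q$ to the auxiliary judgements $P \vdash \sigma \linle \tau$ and $P \vdash \Gamma' \linle \tau$ via \plab{Context}, \plab{Quantifier}, and \plab{Qualifier}—is a point the paper leaves implicit, and your treatment of it is sound.
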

\begin{proof}
  By the weakening property of
  \Cref{thm:properties-of-the-entailment-relation} and straightforward
  induction on the typing derivations.
\end{proof}

\begin{lemma}[Extra is unlimited]
  \label{lemma:extra-is-unlimited}
If $\typ{P\mid\Gamma}{V:A}$, then $\typ{P'\mid\Gamma,x:\sigma}{V:A}$
for any $P'\vdash P$ and $P'\vdash \sigma\linle\Unl$. The same applies
to computation and handler typing.
\end{lemma}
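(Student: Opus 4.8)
The plan is to prove this simultaneously for values, computations and handlers by induction on the given typing derivation, at each step rebuilding the derivation with the context extended by $x:\sigma$ and the predicate assumptions replaced by $P'$. Since $V$ is well typed under $\Gamma$, the variable $x$ is fresh for $V$ and for every subterm, so no occurrence of $x$ ever needs to be looked up; the only work is to check that the new binding breaks no side condition of any rule.

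Two observations do almost all of the work. First, every predicate-mentioning side condition of the \CalcS typing rules has one of the forms $P\vdash\pi$, $P\vdash Q$, $P\vdash\Gamma\un$, $P\vdash\Gamma\linle\tau$, or it is the subsumption side condition $(P\qto A)\sqgen\sigma$ of \qtylab{Var}. Transitivity of entailment (\Cref{thm:properties-of-the-entailment-relation}), together with the evident extension of the same argument to the auxiliary judgement — if $P'\vdash P$ and $P\vdash\Gamma\linle\tau$ then $P'\vdash\Gamma\linle\tau$ — lets us replace $P$ by $P'$ uniformly in all of these; for $(P\qto A)\sqgen\sigma$ one unfolds it to an obligation $P\vdash\theta Q_0$ that $P'$ still discharges. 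Second, every rule except \qtylab{Abs}, \qtylab{Return} and \qtylab{Do} splits the context into pieces that include a shared part constrained only by $\un$ (this is $\Gamma$ in \qtylab{App}, \qtylab{Let}, \qtylab{Seq}, \qtylab{Handle}, \qtylab{Handler} and the residual context in \qtylab{Var}); I place the new binding $x:\sigma$ into that part, so the strengthened obligation $P'\vdash(\Gamma,x:\sigma)\un$ follows from the lifted obligation $P'\vdash\Gamma\un$ via \plab{Context} and the hypothesis $P'\vdash\sigma\linle\Unl$. For \qtylab{Return} and \qtylab{Do} there is no context-linearity side condition and the conclusion is immediate from the induction hypothesis. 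Throughout, the closure and weakening lemmas for typing (\Cref{lemma:typing-closure} and \Cref{lemma:typing-weakening}) and the monotonicity and closure properties of entailment from \Cref{thm:properties-of-the-entailment-relation} are used freely.

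The only rule that forces extra thought — and the step I expect to be the main obstacle — is \qtylab{Abs}. There the whole ambient context $\Gamma,x:\sigma$ is handed to the body and the conclusion carries the side condition $P'\vdash(\Gamma,x:\sigma)\linle Y$, where $Y$ is the function's linearity. The induction hypothesis applied to the body supplies its typing in the extended context, so the residual obligation is $P'\vdash\sigma\linle Y$. This is exactly where the hypothesis $P'\vdash\sigma\linle\Unl$ is needed: because $\Unl$ is the least element of $\linle$, when $Y$ is concrete the obligation follows at once (trivially for $Y=\Unl$, by \plab{Lin} for $Y=\Lin$), and when $Y$ is a linearity variable one discharges it by reusing the unfolding of $P'\vdash\sigma\linle\Unl$ through \plab{Quantifier}, \plab{Qualifier}, \plab{Fun} and \plab{Row}, whose leaf obligations constrain the concrete linearity annotations occurring inside $\sigma$ rather than $\sigma$ itself. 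Getting this last point to line up with the deliberately transitivity-free entailment relation is the delicate part; the remainder is a routine traversal of the rules.
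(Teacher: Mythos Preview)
Your plan matches the paper's: both argue by mutual induction on the typing derivation, and the paper's own proof is literally the one line ``by straightforward induction on the typing derivations''. Routing the new binding $x:\sigma$ into the shared $\un$-constrained fragment of each context split, and lifting every predicate side condition from $P$ to $P'$ via transitivity of entailment (and its obvious extension to the auxiliary judgement $P\vdash\Gamma\linle\tau$), handles every rule except \qtylab{Abs} exactly as you describe. You are also right that \qtylab{Abs} is the sole rule where the fresh binding lands in a context constrained by $\linle Y$ rather than by $\un$, leaving the residual obligation $P'\vdash\sigma\linle Y$.

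The gap is in how you discharge that obligation when $Y$ is a linearity variable. You claim that unfolding $P'\vdash\sigma\linle\Unl$ through \plab{Quantifier}, \plab{Qualifier} and \plab{Fun} exposes leaves that ``constrain the concrete linearity annotations occurring inside $\sigma$'', which could then be replayed with $Y$ in place of $\Unl$. But the leaf need not be concrete: if the underlying value type of $\sigma$ is a \emph{free} type variable $\alpha$ (or a function whose linearity annotation is a free variable $\phi'$), the leaf obligation $P'\vdash\alpha\linle\Unl$ can only have arisen via \plab{Subsume}, i.e.\ $\alpha\linle\Unl\in P'$. From that, none of \plab{Refl}, \plab{Lin}, \plab{Unl}, \plab{Fun} yields $P'\vdash\alpha\linle\phi$, and $\alpha\linle\phi$ need not lie in $P'$. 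Concretely, with $\Gamma=\cdot$, $P=\cdot$, $V=\lambda y.\Ret y$ at type $\alpha'\to^\phi\alpha'\eff\{\mu\}$, extended by $z:\alpha$ with $P'=\{\alpha\linle\Unl\}$, the required side condition $P'\vdash(z{:}\alpha)\linle\phi$ is underivable with the deliberately transitivity-free rules of \Cref{fig:qe-predicates-entailment}. The paper's one-line proof does not confront this point either; closing it appears to need either an admissibility argument that $P\vdash\tau\linle\Unl$ implies $P\vdash\tau\linle Y$ (which the given rules do not support in general) or a sharper hypothesis on $P'$ reflecting how the lemma is actually invoked in the soundness proof.
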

\begin{proof}
  By straightforward induction on the typing derivations.
\end{proof}

\soundness*
\begin{proof}
  By mutual induction on the type inference derivations
  $\infer{\Gamma}{V}{A}{\theta}{P}{\Sigma}$,
  $\infer{\Gamma}{M}{C}{\theta}{P}{\Sigma}$, and
  $\infer{\Gamma}{H}{C\hto D}{\theta}{P}{\Sigma}$.

\begin{description}
  \item[Case]
  \begin{mathpar}
    \inferrule*[Lab=\tilab{Var}]
    {
      (x : \forall \ol{\alpha} . P \qto A) \in \Gamma \\
      \ol\beta \fresh \\
      \refa{\theta = [\ol{\beta}/ \ol{\alpha}]}
    }
    {\infer{\Gamma}{x}{\theta A}{\theta}{\theta P}{\{x\}}}
  \end{mathpar}
  By \refa{}, we have $\theta P\qto\theta A
  \sqgen\theta(\forall\ol\alpha.P\qto A)$.
  Our goal then follows from \qtylab{Var}.

  \item[Case]
  \begin{mathpar}
    \inferrule*[Lab=\tilab{Let}]
    {
      \refa{\infer{\Gamma}{V}{A}{\theta_1}{P_1}{\Sigma_1}} \\
      \sigma = \gen(\theta_1 \Gamma, P_1 \qto A) \\\\
      \refb{\infer{\theta_1\Gamma, x:\sigma}{M}{C}{\theta_2}{P_2}{\Sigma_2}} \\\\
      Q = \makeunl{\theta_2\theta_1\Gamma |_{\Sigma_1 \cap \Sigma_2}} \cup
          \makeunl{\theta_2(x:\sigma) |_{\setcomplement{\Sigma_2}}}
    }
    {\infer{\Gamma}{\Let x = V \In M}{C}
      {\theta_2\theta_1}{P_2\cup Q}{\Sigma_1 \cup (\Sigma_2 \backslash x)}}
  \end{mathpar}
  By the IH on \refa{}, we have $\typ{P_1\mid\theta_1\Gamma|_{\Sigma_1}}{V:A}$.
  By \Cref{lemma:typing-closure}, we have \refc{$\typ{\theta_2
  P_1\mid\theta_2\theta_1\Gamma}{V:\theta_2 A}$}.
  By the IH on \refb{}, we have \refd{$\typ{P_2\mid
  \theta_2(\theta_1\Gamma,x:\sigma)|_{\Sigma_2}}{M:\theta_2 C}$}.
  Let $\sigma'=\gen(\theta_2\theta_1\Gamma,\theta_2 P_1\qto\theta_2
  A)$.
  Notice that $\theta_2$ is generated by the type inference judgement
  \refb{}, which cannot substitute any variables bound by $\sigma$
  (i.e., variables in $\ftv{P_1\qto
  A}\backslash\ftv{\theta_1\Gamma}$).
  Thus, we have $\theta_2\sigma = \sigma'$.
  Let $\Sigma_2'=\Sigma_2\backslash x$, $\Gamma_1 =
  (\theta_2\theta_1\Gamma)|_{\Sigma_1\backslash\Sigma_2'}$, $\Gamma_2
  = (\theta_2\theta_1\Gamma)|_{\Sigma_2'\backslash\Sigma_1}$, $\Gamma'
  = (\theta_2\theta_1\Gamma)|_{\Sigma_1\cap\Sigma_2'}$.
  By \refc{} and \refd{}, we have \refe{$\typ{\theta_2
  P_1\mid\Gamma_1,\Gamma'}{V:\theta_2 A}$} and {$\typ{P_2\mid
  \Gamma_2,\Gamma',(x:\sigma')|_{\Sigma_2}}{M:\theta_2 C}$}.
  By \Cref{lemma:extra-is-unlimited} we have
  \reff{$\typ{P_2\cup\makeunl{(x:\sigma')|_{\setcomplement{\Sigma_2}}}
  \mid\Gamma_2,\Gamma',x:\sigma'}{M:\theta_2 C}$}.
  By \Cref{thm:correctness-of-factorisation}, we have $Q\vdash
  \Gamma'\linle\Unl$.
  Our goal follows from \qtylab{Let}, \refe{}, \reff{} and
  \Cref{lemma:typing-weakening}.

  \item[Case]
  \begin{mathpar}
    \inferrule*[Lab=\tilab{Abs}]
    {
      \alpha,\phi \fresh\\
      \refa{\infer{\Gamma, x:\alpha}{M}{C}{\theta}{P}{\Sigma}} \\\\
      Q = \makeleq{\theta\Gamma |_{\Sigma}, \phi} \cup
          \makeunl{\theta(x:\alpha) |_{\setcomplement{\Sigma}}}
    }
    {\infer{\Gamma}{\lambda x . M}{\theta\alpha \to^\phi C}
      {\theta}{P \cup Q}{\Sigma \backslash x}}
  \end{mathpar}
  By the IH on \refa{}, we have
  $\typ{P\mid(\theta\Gamma,x:\theta\alpha)|_\Sigma}{M:C}$.
  Our goal follows from \Cref{lemma:extra-is-unlimited},
  \Cref{thm:correctness-of-factorisation} and \qtylab{Abs}.

  \item[Case]
  \begin{mathpar}
    \inferrule*[Lab=\tilab{App}]
    {
      \refa{\infer{\Gamma}{V}{A}{\theta_1}{P_1}{\Sigma_1}} \\
      \refb{\infer{\theta_1\Gamma}{W}{B}{\theta_2}{P_2}{\Sigma_2}} \\\\
      \alpha, \mu, \phi \fresh \\
      \refc{\unify{\theta_2 A}{(B \to^\phi \alpha\eff\mu)}{\theta_3}} \\\\
      P = \theta_3 (\theta_2 P_1 \cup P_2) \\
      Q = \makeunl{\theta_3\theta_2\theta_1\Gamma |_{\Sigma_1 \cap \Sigma_2}}
    }
    {\infer{\Gamma}{V\; W}{\theta_3(\alpha\eff\mu)}
      {\theta_3\theta_2\theta_1}{P\cup Q}{\Sigma_1 \cup \Sigma_2}}
  \end{mathpar}
  By the IH on \refa{}, we have $\typ{P_1\mid\theta_1\Gamma}{V:A}$.
  By the IH on \refb{}, we have $\typ{P_2\mid\theta_2\theta_1\Gamma}{W:B}$.
  By \Cref{lemma:typing-closure}, we have \refd{$\typ{\theta_3\theta_2
  P_1\mid\theta_3\theta_2\theta_1\Gamma}{V:\theta_3\theta_2 A}$} and
  \refe{$\typ{\theta_3 P_2\mid\theta_3\theta_2\theta_1\Gamma}{W:\theta_3 B}$}.
  By \refc{}, we have $\theta_3\theta_2 A = \theta_3
  (B\to^\phi\alpha\eff\mu)$.
  Let $\Gamma_1 =
  (\theta_3\theta_2\theta_1\Gamma)|_{\Sigma_1\backslash\Sigma_2}$, $\Gamma_2
  = (\theta_3\theta_2\theta_1\Gamma)|_{\Sigma_2\backslash\Sigma_1}$, $\Gamma'
  = (\theta_3\theta_2\theta_1\Gamma)|_{\Sigma_1\cap\Sigma_2}$.
  By \refd{} and \refe{}, we have \reff{$\typ{\theta_3\theta_2 P_1\mid
  \Gamma_1,\Gamma'}{V:\theta_3\theta_2 A}$} and \refg{$\typ{\theta_3
  P_2\mid\Gamma_2,\Gamma'}{W:\theta_3 B}$}.
  By \Cref{thm:correctness-of-factorisation}, we have
  $Q\vdash\Gamma'\linle\Unl$.
  Our goal follows from \qtylab{App}, \reff{}, \refg{},
  and \Cref{lemma:typing-weakening}.

  \item[Case]
  \begin{mathpar}
    \inferrule*[Lab=\tilab{Return}]
    {
      \refa{\infer{\Gamma}{V}{A}{\theta}{P}{\Sigma}} \\
      \mu \fresh
    }
    {\infer{\Gamma}{\Ret V}{A \eff \{\mu\}}{\theta}{P}{\Sigma}}
  \end{mathpar}
  Our goal follows from the IH on \refa{} and \qtylab{Return}.

  \item[Case]
  \begin{mathpar}
    \inferrule*[Lab=\tilab{Do}]
    {
      \refa{\infer{\Gamma}{V}{A}{\theta_1}{P}{\Sigma}} \\
      \unify{A}{A_\ell}{\theta_2} \\\\
      \mu, \phi \fresh \\
      Q = \makesub{(\ell: A_\ell \sto^\phi B_\ell), \mu} \\
    }
    {\infer{\Gamma}{\Do \ell\; V}
      {B_\ell \eff \{\mu\}}{\theta_2\theta_1}{\theta_2 P\cup Q}{\Sigma}}
  \end{mathpar}
  Our goal follows from the IH on \refa{}, \qtylab{Do},
  \Cref{thm:correctness-of-factorisation}, and
  \Cref{lemma:typing-closure}.

  \item[Case]
  \begin{mathpar}
    \inferrule*[Lab=\tilab{Seq}]
    {
      \refa{\infer{\Gamma}{M}{A\eff\{R_1\}}{\theta_1}{P_1}{\Sigma_1}} \\
      \refb{\infer{\theta_1\Gamma,x:A}{N}{B\eff\{R_2\}}{\theta_2}{P_2}{\Sigma_2}} \\
      \mu \fresh \\\\
      Q = \makeunl{\theta_2\theta_1\Gamma |_{\Sigma_1 \cap \Sigma_2}} \cup
          \makeunl{\theta_2(x:A) |_{\setcomplement{\Sigma_2}}} \cup
          \makeleq{\theta_2\theta_1\Gamma |_{\Sigma_2}, \theta_2 R_1} \cup
          \makesub{\theta_2 R_1, \mu} \cup
          \makesub{R_2, \mu} \\
    }
    {\infer{\Gamma}{\Let x \revto M \In N}{B\eff \mu}
      {\theta_2\theta_1}{\theta_2 P_1 \cup P_2 \cup Q}{\Sigma_1 \cup (\Sigma_2 \backslash x)}}
  \end{mathpar}
  Similar to the \tilab{Let} and \tilab{App} cases.
  Let $\Sigma_2'=\Sigma_2\backslash x$, $\Gamma_1 =
  (\theta_2\theta_1\Gamma)|_{\Sigma_1\backslash\Sigma_2'}$, $\Gamma_2
  = (\theta_2\theta_1\Gamma)|_{\Sigma_2'\backslash\Sigma_1}$, $\Gamma'
  = (\theta_2\theta_1\Gamma)|_{\Sigma_1\cap\Sigma_2'}$.
  By the IH on \refa{} and \Cref{lemma:typing-closure}, we have
  \refc{$\typ{\theta_2 P_1\mid\Gamma_1,\Gamma'}{M:\theta_2(A\eff\{R_1\})}$}.
  By the IH on \refb{}, we have
  \refd{$\typ{P_2\mid\Gamma_2,\Gamma',(x:A)|_{\Sigma_2}}{N:B\eff\{R_2\} }$}.
  Our goal follows from \qtylab{Seq}, \refc{}, \refd{},
  \Cref{thm:correctness-of-factorisation},
  \Cref{lemma:typing-weakening} and \Cref{lemma:extra-is-unlimited}.

  \item[Case]
  \begin{mathpar}
    \inferrule*[Lab=\tilab{Handle}]
    {
      \refa{\infer{\Gamma}{H}{A\eff\{R_1\} \hto D}{\theta_1}{P_1}{\Sigma_1}} \\
      \refb{\infer{\theta_1\Gamma}{M}{A'\eff\{R\}}{\theta_2}{P_2}{\Sigma_2}} \\\\
      \unify{\theta_2 A}{A'}{\theta_3} \\
      P = \theta_3 (\theta_2 P_1 \cup P_2) \\
      Q = \makesub{\theta_3 R,\theta_3\theta_2 R_1}\cup
          \makeunl{\theta_3\theta_2\theta_1\Gamma |_{\Sigma_1\cap\Sigma_2}}
    }
    {\infer{\Gamma}{\Handle M \With H}{\theta_3\theta_2 D}
      {\theta_3\theta_2\theta_1}{P \cup Q}{\Sigma_1 \cup \Sigma_2}}
  \end{mathpar}
  By a similar proof to the \qtylab{App} case, our goal follows from
  the IHs on \refa{} and \refb{},
  \Cref{thm:correctness-of-factorisation}, \Cref{lemma:typing-closure}
  and \Cref{lemma:typing-weakening}.

  \item[Case]
  \begin{mathpar}
    \inferrule*[Lab=\tilab{Handler}]
    {
      \alpha,\phi_i,\mu \fresh \\
      \refa{\infer{\Gamma, x:\alpha}{M}{D}{\theta_0}{P_0}{\Sigma_0}} \\\\
      [\refb{\infer{\theta_{i-1}(\Gamma, p_i : A_{\ell_i}, r_i : {B_{\ell_i} \to^{\phi_i} D})}
             {N_i}{D_i}{\theta_i'}{P_i}{\Sigma_i}} \\
        \unify{D_i}{\theta_i'\theta_{i-1} D}{\theta_i''} \\
        \theta_i = \theta_i''\theta_i'\theta_{i-1}]_{i=1}^n \\\\
      C = \theta_n (\alpha\eff\{(\ell_i : A_{\ell_i} \sto^{\phi_i} B_{\ell_i})_i\lsep\mu\}) \\
      B\eff\{R\} = \theta_n D \\\\
      \Sigma = (\Sigma_0 \backslash \{x\}) \cup
                (\cup_{i=1}^{n} (\Sigma_i \backslash \{p_i, r_i\})) \\
      P = (\cup_{i=0}^{n} \theta_n P_i)
        \cup \makeunl{\theta_n \Gamma |_{\Sigma}}
        \cup \makesub{\mu, R}
        \cup \makelack{\mu, \{\ell_i\}_i}
        \\
    Q = \makeunl{\theta_n(x:\alpha)|_{\setcomplement{\Sigma_0}}}
      \cup (\cup_{i=1}^n\makeunl{\theta_n(p_i:A_{\ell_i},r_i:B_{\ell_i}\to^{\phi_i}D)})
    }
    {\infer{\Gamma}{\{\Ret x \mapsto M\} \uplus
                    \{ \ell_i\;p_i\;r_i \mapsto N_i \}_{i=1}^n }{C \hto \theta_n D}
      {\theta_{n}}{P\cup Q}{\Sigma}}
  \end{mathpar}
  The type inference for handlers is the most complicated, but there
  is nothing really new about the proof compared to previous cases.
  By the IH on \refa{}, we have
  $\typ{P_0\mid\theta_0(\Gamma,x:\alpha)|_{\Sigma_0}}{M:D}$.
  By the IH on \refb{}, we have
  $\typ{P_i\mid\theta_i'\theta_{i-1}(\Gamma,p_i : A_{\ell_i}, r_i :
  B_{\ell_i} \to^{\phi_i} D)|_{\Sigma_i}}{N_i:D_i}$.
  By \Cref{lemma:typing-closure}, we have {$\typ{\theta_n
  P_0\mid\theta_n(\Gamma,x:\alpha)|_{\Sigma_0}}{M:\theta_n D}$} and
  {$\typ{\theta_n P_i\mid\theta_n(\Gamma,p_i : A_{\ell_i}, r_i :
  B_{\ell_i} \to^{\phi_i} D)|_{\Sigma_i}}{N_i:\theta_n D_i}$}.
  By \Cref{lemma:extra-is-unlimited}, we have
  $$
  \refc{\typ{\theta_n P_0 \cup \makeunl{\theta_n \Gamma|_{\Sigma}} \cup
  \makeunl{\theta_n(x:\alpha)|_{\setcomplement{\Sigma_0}}}\mid\theta_n(\Gamma|_{\Sigma},x:\alpha)}{M:\theta_n D}}
  $$
  and
  $$
  \refd{\typ{\theta_n P_i \cup \makeunl{\theta_n\Gamma|_{\Sigma}} \cup
  \makeunl{\theta_n(p_i:A_{\ell_i},r_i:B_{\ell_i}\to^{\phi_i}D)}
  \mid\theta_n(\Gamma|_{\Sigma},p_i :
  A_{\ell_i}, r_i : B_{\ell_i} \to^{\phi_i}
  D)}{N_i:\theta_n D_i}}
  $$
  By \Cref{thm:correctness-of-factorisation}, we have $P\cup Q \vdash
  \{\mu\rle R, \mu\rlack\{\ell_i\}_i\}$.
  Our goal follows from \qtylab{Handler}, \refc{}, \refd{}, and
  \Cref{lemma:typing-weakening}.
\end{description}
\end{proof}

\begin{lemma}[More general contexts]
  \label{lemma:more-general-contexts}
  If $\typ{P\mid\Gamma,x:\sigma}{V:A}$ and $\sigma\sqgen \sigma'$,
  then $\typ{P\mid\Gamma,x:\sigma'}{V:A}$. The same applies to
  computation and handler typing.
\end{lemma}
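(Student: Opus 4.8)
The plan is to prove the lemma by a routine mutual induction on the typing derivations of $\typ{P\mid\Gamma,x:\sigma}{V:A}$, $\typ{P\mid\Gamma,x:\sigma}{M:C}$, and $\typ{P\mid\Gamma,x:\sigma}{H:C\hto D}$, assuming (by $\alpha$-conversion) that the distinguished variable $x$ is not rebound by any binder occurring inside $V$, $M$, or $H$. In every syntax-directed rule the entry $x:\sigma$ is simply threaded through to the premises, possibly being routed unchanged into one side of a context split; in these cases the result follows immediately from the induction hypotheses, since splitting and the auxiliary judgements only care about which names occur where, not about the particular schemes assigned to them. So the only rule that genuinely inspects $\sigma$ is \qtylab{Var}, and the only side conditions that do are the unlimitedness premises $P\vdash\Gamma_0\un$ appearing in \qtylab{Var}, \qtylab{Let}, \qtylab{App}, \qtylab{Seq}, \qtylab{Handle}, and \qtylab{Handler} for the portion $\Gamma_0$ of the context in which $x:\sigma$ happens to lie.

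For the \qtylab{Var} case in which the looked-up variable is exactly $x$, the premises are $P\vdash\Gamma\un$ (unaffected, since that portion of the context does not mention $x$) and $(P\qto A)\sqgen\sigma$. Transitivity of the generic-instance ordering $\sqgen$, applied to $(P\qto A)\sqgen\sigma$ and the hypothesis $\sigma\sqgen\sigma'$, yields $(P\qto A)\sqgen\sigma'$, so \qtylab{Var} re-derives $\typ{P\mid\Gamma,x:\sigma'}{x:A}$. If $\sqgen$ is not already equipped with a transitivity lemma, I would first establish it from its definition in terms of instantiation and entailment; this is standard for qualified-type instance orderings.

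The remaining obligation — and the one place that requires real care — is re-establishing the unlimitedness side conditions after replacing $\sigma$ by $\sigma'$: whenever the original derivation used $P\vdash\Gamma_0\un$ with $(x:\sigma)\in\Gamma_0$, i.e. $P\vdash\sigma\linle\Unl$, I need $P\vdash\sigma'\linle\Unl$. This follows from an auxiliary monotonicity lemma: if $\sigma\sqgen\sigma'$ and $P\vdash\sigma\linle\tau$, then $P\vdash\sigma'\linle\tau$. The proof of that lemma exploits the \emph{existential} form of rule \plab{Quantifier}: a more general scheme $\sigma'=\forall\bar\alpha.\,P'\qto A'$ can always mimic the linearity behaviour of any of its instances — instantiate the extra quantifiers $\bar\alpha$ so as to reproduce the instantiation $[\bar\tau/\bar\alpha]$ that witnesses $\sigma\sqgen\sigma'$, discharge the residual predicates $[\bar\tau/\bar\alpha]P'$ via \plab{Qualifier} using the hypotheses in $P$ (precisely the place where the asymmetry flagged in the discussion of \plab{Qualifier} is needed), and then reuse the derivation of $P\vdash\sigma\linle\tau$ for the residual part.

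I expect this monotonicity sub-lemma to be the main obstacle, since it forces one to pin down the exact definition of $\sqgen$ and the interaction of \plab{Quantifier} and \plab{Qualifier} with the free variables of the outer context — the same subtlety that motivated the unusual $P\vdash\pi$ premise in \plab{Qualifier}. Once it is in hand, the inductive step for every rule is mechanical: apply the induction hypothesis to the subderivations with $x:\sigma$ replaced by $x:\sigma'$, then re-check the (now monotone) unlimitedness conditions and, for \qtylab{Var} on $x$, invoke transitivity of $\sqgen$. The mutual statements for computations and handlers are handled in lockstep, with no additional ideas required.
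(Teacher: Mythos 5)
Your proposal is correct and takes the same route as the paper, whose entire proof is ``by straightforward induction on the typing derivation.'' You in fact supply more detail than the paper does, correctly isolating the two sub-obligations that make the induction go through: transitivity of $\sqgen$ for the \qtylab{Var} case, and monotonicity of $P\vdash\sigma\linle\tau$ under $\sqgen$ (via the existential \plab{Quantifier} rule and \plab{Qualifier}) for the linearity side conditions on contexts.
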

\begin{proof}
  By straightforward induction on the typing derivation.
\end{proof}

\begin{lemma}[Zero is unlimited]
  \label{lemma:zero-is-unlimited}
  If $\typ{P\mid\Gamma,x:\sigma}{V:A}$ and $x$ does not appear in $V$,
  then $P \vdash \sigma \linle \Unl$. The same applies to computation
  and handler typing.
\end{lemma}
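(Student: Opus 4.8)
The plan is to prove the lemma by induction on the typing derivation of $\typ{P\mid\Gamma,x:\sigma}{V:A}$, carried out simultaneously with the analogous statements for computation and handler typing, and using the usual Barendregt convention so that $x$ differs from every variable bound inside the term. The base case is \qtylab{Var}: the term is then a variable $y$, and since $x$ does not occur in it we have $y\neq x$, so the binding $x:\sigma$ lies in the remaining context, for which the rule's premise $P\vdash\Gamma\un$ holds; unfolding $\un$ as $\linle\Unl$ and applying \plab{Context} gives $P\vdash\sigma\linle\Unl$ at once.

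For the inductive step I would, in each rule, inspect where the binding $x:\sigma$ lands in the partition of the context among the premises (after the implicit permutation of hypotheses). If it sits in a context the rule forces to be unlimited --- the shared context $\Gamma$ of \qtylab{App}, \qtylab{Let}, \qtylab{Seq}, and \qtylab{Handle}, or the whole context of \qtylab{Handler} --- then $P\vdash\Gamma\un$ with \plab{Context} settles it directly. Otherwise $x:\sigma$ occurs in the context of exactly one subterm, and since $x$ is absent from the whole term it is absent from that subterm, so I can invoke the induction hypothesis on the corresponding premise. This is immediate whenever that subterm is typed under the same predicate set $P$, which covers the body of \qtylab{Abs}, both arguments of \qtylab{App}, both subterms of \qtylab{Seq} and of \qtylab{Handle}, and the subterms of \qtylab{Return} and \qtylab{Do}.

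The single case that resists this bookkeeping --- and the step I expect to be the main obstacle --- is \qtylab{Let} when $x:\sigma$ belongs to the context $\Gamma_1$ used exclusively by the bound value, since that premise is typed under a \emph{separate} predicate set $Q$, so the induction hypothesis only yields $Q\vdash\sigma\linle\Unl$ whereas the conclusion must hold under $P$. To bridge this I would use that $\sigma$'s free type variables lie within $\Gamma_1$ and hence are not among those generalised in the let-bound scheme $\sigma'=\gen((\Gamma_1,\Gamma),Q\qto A)$, so substitutions into the generalised variables fix $\sigma$; then, by the closure and transitivity properties of entailment (\Cref{thm:properties-of-the-entailment-relation}), the unlimitedness of $\sigma$ carries over to any predicate set under which $\sigma'$ is instantiated. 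If the let-bound variable is itself unused in the body, the induction hypothesis applied to it gives $P\vdash\sigma'\linle\Unl$, and peeling off \plab{Quantifier} and \plab{Qualifier} exposes a substitution instance of $Q$ entailed by $P$, from which $P\vdash\sigma\linle\Unl$ follows; otherwise each \qtylab{Var} use of that variable re-imposes such an instance of $Q$ under the active predicate set, which one traces back to the ambient $P$. It may well be cleanest to first establish a strengthening lemma --- an unused binding may be dropped from the context --- and feed it into this step. The remaining cases are routine and need nothing beyond the entailment properties already available.
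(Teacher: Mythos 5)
Your proof takes the same route as the paper, which disposes of this lemma with a one-line ``straightforward induction on the typing derivation,'' and your case analysis is correct. The one place where you go beyond the paper --- the \qtylab{Let} case with $x:\sigma$ in the portion of the context reserved for the bound value, which is typed under a different predicate set $Q$ --- is a genuine subtlety that the paper's proof elides, and your bridge is the right one: since $\ftv{\sigma}\subseteq\ftv{\Gamma_1,\Gamma}$ the relevant variables are not generalised, and whether the let-bound variable is used (so that \qtylab{Var}'s instantiation premise yields $P\vdash\theta Q$) or unused (so that the inductive hypothesis plus the $P\vdash\pi$ premise of \plab{Qualifier} yields it), closure and transitivity of entailment transport $Q\vdash\sigma\linle\Unl$ to $P\vdash\sigma\linle\Unl$.
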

\begin{proof}
  By straightforward induction on the typing derivation.
\end{proof}

\begin{lemma}[Closure property of factorisation]
  \label{lemma:closure-factor}
  If $\factor(P) = Q$, then $\factor(\theta P) = \theta Q$.
  If $\factor(\Gamma\linle\tau) = Q$, then
  $\factor(\theta(\Gamma\linle\tau)) = \theta Q$.
\end{lemma}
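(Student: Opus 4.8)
The plan is to prove both statements simultaneously by structural induction on the recursive definition of $\factor$ in \Cref{fig:constraints-factorisation}, matching each defining equation of $\factor$ against the same equation applied to $\theta\pi$ (respectively $\theta(\Gamma\linle\tau)$). For the first statement, $\factor(P)=\bigcup_{\pi\in P}\factor(\pi)$ and the fact that substitution distributes over unions reduce the claim to the single-predicate statement $\factor(\theta\pi)=\theta(\factor(\pi))$; likewise $\factor(\Gamma\linle\tau)=\bigcup_{(x:\sigma)\in\Gamma}\factor(\sigma\linle\tau)$ reduces to $\factor(\theta(\sigma\linle\tau))=\theta(\factor(\sigma\linle\tau))$, and the value/row/linearity cases of the predicate $\tau\linle\tau'$ reduce to the corresponding $\factor:\meta{Pred}\to\meta{PSet}$ equations.

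Then I would dispatch the equations. The purely syntactic recursive cases --- $\factor(A\to^Y C \linle \tau)=\factor(Y\linle\tau)$, $\factor(\tau\linle \cR\lsep\mu)=\factor(\tau\linle \cR)\cup\factor(\tau\linle\mu)$, $\factor(\tau\linle \cR)=\bigcup_{(\ell:A\sto^Y B)\in \cR}\factor(\tau\linle Y)$, and $\factor((\pi\qto\sigma)\linle\tau)=\factor(\pi)\cup\factor(\sigma\linle\tau)$ --- follow immediately from the induction hypothesis together with the observation that $\theta$ commutes with the relevant head constructors ($\to$, row extension, $\qto$) and with taking the label set of a concrete row. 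The base cases $\factor(\tau\linle\tau)=\emptyset$, $\factor(\tau\linle\Lin)=\emptyset$, $\factor(\Unl\linle\tau)=\emptyset$, and the default clause $\factor(\pi)=\pi$ are checked by inspection, using $\theta\Lin=\Lin$, $\theta\Unl=\Unl$, and that $\theta$ preserves the head shape that selected the clause. For the quantifier case $\factor((\forall\alpha.\sigma)\linle\tau)=\factor([\beta/\alpha]\sigma\linle\tau)$ with $\beta$ fresh, I would invoke the usual Barendregt convention and choose $\beta$ also fresh for $\theta$ and $\tau$, so that $\theta$ and $[\beta/\alpha]$ commute and $\theta$ fixes $\beta$; then $\theta(\factor([\beta/\alpha]\sigma\linle\tau))=\factor(\theta[\beta/\alpha]\sigma\linle\theta\tau)=\factor([\beta/\alpha]\theta\sigma\linle\theta\tau)=\factor(\theta((\forall\alpha.\sigma)\linle\tau))$ by the induction hypothesis, which is the appropriate equality since $\factor$'s output is only well-defined up to such renamings.

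The main obstacle is the two row-predicate clauses with side conditions: $\factor(R_1\rle R_2)=\emptyset$ when $\rowset{R_1}\subseteq\rowset{R_2}$, $\factor(R\rlack\mL)=\emptyset$ when $\dom{R}\cap\mL=\emptyset$, with the default $\factor(\pi)=\pi$ otherwise. I must show the side condition is stable under $\theta$. The monotone direction is routine: writing $\rowset{\theta R}=\{\ell:\theta P\mid(\ell:P)\in R\}\cup\bigcup_{\mu\in R}\rowset{\theta\mu}$ (and the analogous decomposition for $\dom{\cdot}$, noting labels in $\mL$ are never row variables), $\rowset{R_1}\subseteq\rowset{R_2}$ yields $\rowset{\theta R_1}\subseteq\rowset{\theta R_2}$ and $\dom{R}\cap\mL=\emptyset$ yields $\dom{\theta R}\cap\mL=\emptyset$, which handles the $\emptyset$-producing clauses. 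The converse direction --- required for the default clause, namely that $\theta$ cannot turn a non-contained pair into a contained one --- fails for arbitrary $\theta$ (e.g.\ $[\cdot/\mu]$ applied to $\mu\rle\mu'$ with $\mu\neq\mu'$, or a substitution identifying the signatures of a shared label), so the proof must restrict to the substitutions under which the lemma is actually applied in the completeness proof: the idempotent principal substitutions produced by inference, which only instantiate fresh variables and neither identify distinct labels' signatures nor collapse a row variable into another row's elements. Under that restriction the converse direction holds and the case closes; I would state this restriction explicitly, and the remaining cases are then entirely routine.
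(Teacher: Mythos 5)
Your overall strategy---induction on the defining equations of $\factor$---is exactly the paper's (its entire proof reads ``by the closure property of the entailment relation and straightforward induction on the definition of $\factor$''), and your treatment of the recursive cases and of the fresh-variable renaming in the quantifier case is fine. You have also put your finger on something the paper glosses over: read as a \emph{syntactic} equality of predicate sets, the statement fails whenever $\theta$ changes which clause of $\factor$ fires. But your diagnosis and your repair are both too narrow. The clause-flipping problem is not confined to the row predicates; it already occurs in the linearity base cases, which you wave through as ``checked by inspection, using \ldots{} that $\theta$ preserves the head shape that selected the clause''---that is precisely the assumption at issue. Concretely, $\alpha\linle\phi$ falls into the default clause, so $\factor(\alpha\linle\phi)=\{\alpha\linle\phi\}$ and $\theta\,\factor(\alpha\linle\phi)=\{\alpha\linle\Lin\}$ for $\theta=[\Lin/\phi]$, whereas $\factor(\theta(\alpha\linle\phi))=\factor(\alpha\linle\Lin)=\emptyset$; similarly $[\beta/\alpha]$ applied to $\alpha\linle\beta$ trips the reflexivity clause. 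Substitutions of this shape are routinely produced by inference (cf.\ $\unifylin(\phi,Y)=[Y/\phi]$), so your proposed restriction to idempotent principal substitutions does not exclude them, and your case analysis still does not close under your reading of ``$=$''.

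The repair that actually matches the lemma's uses is not to restrict $\theta$ but to weaken the conclusion to mutual entailment, $\factor(\theta P)\vdash\theta\,\factor(P)$ and $\theta\,\factor(P)\vdash\factor(\theta P)$: every application of the lemma in the completeness proof only ever needs to conclude an entailment of the form $P\vdash\theta_p\theta Q$. Under that reading all the clause flips above are harmless, because each $\emptyset$-producing clause of $\factor$ discards only predicates that are tautologies of the entailment system ($\tau\linle\Lin$ by \plab{Lin}, $\Unl\linle\tau$ by \plab{Unl}, $\tau\linle\tau$ by \plab{Refl}, and $R_1\rle R_2$ with $\rowset{R_1}\subseteq\rowset{R_2}$ by \plab{Sub}), so the extra predicates retained by $\theta\,\factor(P)$ are entailed by anything, and your monotonicity argument for the containment and lacking side conditions supplies the other direction. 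This is evidently what the paper's one-line proof is gesturing at by citing the closure property of entailment alongside the induction: together with \Cref{thm:correctness-of-factorisation} it yields that $\theta\,\factor(P)$, $\theta P$ and $\factor(\theta P)$ are pairwise mutually entailed, for arbitrary $\theta$. I would restate the lemma in that form, keep your induction for the $\Gamma\linle\tau$ half (where \Cref{thm:correctness-of-factorisation} is asymmetric and the entailment-only shortcut does not immediately apply), and drop the restriction on substitutions.
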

\begin{proof}
  By the closure property of
  \Cref{thm:properties-of-the-entailment-relation} and straightforward
  induction on the definition of $\factor$.
\end{proof}

\completeness*

\begin{proof}
  By mutual induction on the syntax-directed typing derivations
  $\typ{P\mid\Gamma}{V:A}$, $\typ{P\mid\Gamma}{M:C}$, and
  $\typ{P\mid\Gamma}{H:C\hto D}$.

\begin{description}
  \item[Case]
  \begin{mathpar}
  \inferrule*[Lab=\qtylab{Var}]
  { P \vdash \Gamma\un \\
    P \qto A \sqgen \forall\ol\alpha.Q\qto B}
  {\typ{P\mid\theta(\Gamma, x : \forall\ol\alpha.Q\qto B)}{x : A}}
  \end{mathpar}
  By $P \qto A \sqgen \forall\ol\alpha.Q\qto B$, there exists
  $\theta_1$ such that $A = \theta_1 B$ and $P \vdash \theta_1 Q$.
  By \tilab{Var}, we have the following derivation
  \begin{mathpar}
  \inferrule*[Lab=\tilab{Var}]
  {
    \ol\beta \fresh \\
    \theta' = [\ol{\beta}/ \ol{\alpha}]
  }
  {\infer{\Gamma, x : \forall\ol\alpha.Q\qto B}{x}{\theta' B}{\theta'}{\theta' Q}{\{x\}}}
  \end{mathpar}
  Let $\theta'' = \theta \theta_1 [\ol\alpha / \ol\beta]$, we have $A
  = \theta_1 B = \theta''\theta' B$, $P\vdash \theta_1 Q =
  \theta''\theta' Q$, and $(\theta = \theta''\theta') |_\Gamma$.

  \item[Case]
  \begin{mathpar}
  \inferrule*[Lab=\qtylab{Let}]
  { \refa{\typ{P_1\mid\theta(\Gamma_1, \Gamma)}{V : A}} \\
    \sigma = \gen(\theta(\Gamma_1, \Gamma), P_1\qto A) \\\\
    \refb{\typ{P_2\mid\theta(\Gamma_2, \Gamma), x : \sigma}{M : C}} \\
    P_2 \vdash \theta\Gamma\un
  }
  {\typ{P_2\mid\theta(\Gamma_1, \Gamma_2, \Gamma)}{\Let x = V \In M : C}}
  \end{mathpar}
  By the IH on \refa{}, we have
  $\infer{\Gamma_1,\Gamma}{V}{A'}{\theta_1}{P_1'}{\Sigma_1}$
  and there exists $\theta_1'$ such that $A=\theta_1'A'$,
  $P_1\vdash\theta_1' P_1'$, and $(\theta =
  \theta_1'\theta_1)|_{\Gamma_1,\Gamma}$.
  By context weakening, we have
  \refc{$\infer{\Gamma_1,\Gamma_2,\Gamma}{V}{A'}{\theta_1}{P_1'}{\Sigma_1}$}.
  We also have
  $\sigma = \gen(\theta(\Gamma_1,\Gamma_2,\Gamma),P_1\qto A)$.
  Let $\sigma' = \gen(\theta_1(\Gamma_1,\Gamma_2,\Gamma),P_1'\qto A')$.
  By $(\theta = \theta_1'\theta_1)|_{\Gamma_1,\Gamma}$, it is easy to
  see that $\sigma \sqgen \theta_1'\sigma'$.
  Then by \refb{} and \Cref{lemma:more-general-contexts}, we have
  $\typ{P_2\mid \theta(\Gamma_2,\Gamma), x:\theta_1'\sigma'}{M:C}$,
  which further implies
  \refd{$\typ{P_2\mid\theta_3\theta_1'\theta_1(\Gamma_2,\Gamma,x:\sigma')}{M:C}$}
  for some $\theta_3$ with $\theta = \theta_3\theta_1'\theta_1$. %
  By the IH on \refd{}, we have
  $\infer{\theta_1(\Gamma_2,\Gamma,x:\sigma')}{M}{C'}{\theta_2}{P_2'}{\Sigma_2}$
  and there exists $\theta_2'$ such that $C=\theta_2' C'$,
  $P_2\vdash\theta_2' P_2'$, and
  $(\theta_3\theta_1'=\theta_2'\theta_2)|_{\Gamma_2,\Gamma}$.
  By context weakening and $\theta_1\sigma'=\sigma'$, we have
  \refe{$\infer{\theta_1(\Gamma_1,\Gamma_2,\Gamma),x:\sigma'}{M}{C'}{\theta_2}{P_2'}{\Sigma_2}$}.
  Let $Q = \makeunl{\theta_2\theta_1(\Gamma_1,\Gamma_2,\Gamma) |_{\Sigma_1 \cap \Sigma_2}} \cup
           \makeunl{\theta_2(x:\sigma) |_{\setcomplement{\Sigma_2}}}$.
  By \tilab{Let}, \refc{} and \refe{}, we have
  $$\infer{\Gamma_1,\Gamma_2,\Gamma}{\Let x=V\In M}{C'}
  {\theta_2\theta_1}{P_2'\cup Q}{\Sigma_1\cup(\Sigma_2\backslash x)}$$

  With $\theta' = \theta_1'\theta_2'$, we have
  $(\theta=\theta'\theta_2\theta_1)|_{\Gamma_1,\Gamma_2,\Gamma_3}$.
  By $\Sigma_1\cap\Sigma_2\subseteq \dom{\Gamma}$,
  \Cref{lemma:zero-is-unlimited}, \Cref{lemma:closure-factor} and
  \Cref{thm:correctness-of-factorisation}, there exists $\theta_p$
  such that \reff{$P_2\vdash\theta_p \theta' Q$}.
  Let $\theta'' = \theta_p\theta'$.
  Our goal follows from
  $(\theta=\theta''\theta_2\theta_1)|_{\Gamma_1,\Gamma_2,\Gamma_3}$,
  $C= \theta'' C'$, and $P_2\vdash\theta'' (P_2'\cup Q)$.

  \item[Case]
  \begin{mathpar}
  \inferrule*[Lab=\qtylab{Abs}]
  { \refa{\typ{P\mid\theta\Gamma, x : A}{M : C}}\\
    P \vdash \theta\Gamma \linle Y }
  {\typ{P\mid\theta\Gamma}{\lambda x . M : A \to^Y C}}
  \end{mathpar}
  Take a fresh variable $\alpha$ and let $\theta_1 =
  \theta[A/\alpha]$. By \refa{}, we have
  $\refb{\typ{P\mid\theta_1(\Gamma, x : \alpha)}{M : C}}$.
  By the IH on \refb{}, we have
  \refc{$\infer{\Gamma,x:\alpha}{M}{C'}{\theta'}{P'}{\Sigma}$} and there
  exists $\theta''$ such that $C=\theta''C'$, $P\vdash\theta''P'$, and
  $(\theta_1 = \theta''\theta')|_{\Gamma,x:\alpha}$.
  Let $Q = \makeleq{\theta'\Gamma |_{\Sigma}, \phi} \cup
           \makeunl{\theta'(x:\alpha) |_{\setcomplement{\Sigma}}}$
  By \tilab{Abs} and  \refc{}, taking a fresh variable $\phi$, we have
  $$
  \infer{\Gamma}{\lambda x . M}{\theta'\alpha \to^\phi C'}
  {\theta'}{P' \cup Q}{\Sigma \backslash x}
  $$
  With $\theta_2 = \theta''[Y/\phi]$, we have $(\theta =
  \theta_2\theta')|_{\Gamma,x:\alpha}$.
  By $P\vdash\theta\Gamma\linle Y$, \Cref{lemma:zero-is-unlimited},
  \Cref{lemma:closure-factor}, and
  \Cref{thm:correctness-of-factorisation}, there exists $\theta_p$
  such that $P\vdash\theta_p\theta_2 Q$.
  Let $\theta_3 = \theta_p \theta_2$.
  Our goal follows from $(\theta =
  \theta_3\theta')|_{\Gamma,x:\alpha}$, $A\to^Y C =
  \theta_3(\theta'\alpha\to^\phi C')$ and $P \vdash \theta_3 (P'\cup Q)$.

  \item[Case]
  \begin{mathpar}
  \inferrule*[Lab=\qtylab{App}]
  { \refa{\typ{P\mid\theta(\Gamma_1, \Gamma)}{V : B \to^Y C}} \\
    \refb{\typ{P\mid\theta(\Gamma_2, \Gamma)}{W : B}} \\
    P \vdash \theta\Gamma\un
  }
  {\typ{P\mid\theta(\Gamma_1, \Gamma_2, \Gamma)}{V\,W : C}}
  \end{mathpar}
  By the IH on \refa{}, we have
  \refc{$\infer{\Gamma_1,\Gamma}{V}{A'}{\theta_1}{P_1}{\Sigma_1}$} and there
  exists $\theta_1'$ such that $B\to^Y C = \theta_1'A'$,
  $P\vdash\theta_1'P_1$, and $(\theta = \theta_1'\theta_1)|_{\Gamma_1,\Gamma}$.
  Let $\theta = \theta'\theta_1'\theta_1$ where $\theta'$ only
  substitutes type variables only appearing in $\Gamma_2$.
  By the IH on \refb{}, we have
  \refd{$\infer{\theta_1(\Gamma_2,\Gamma)}{W}{B'}{\theta_2}{P_2}{\Sigma_2}$}
  and there exists $\theta_2'$ such that $B = \theta_2'B'$,
  $P\vdash\theta_2'P_2$, and \refe{$(\theta'\theta_1' = \theta_2'\theta_2)|_{\Gamma_2,\Gamma}$}.
  Take fresh variables $\alpha,\mu,\phi$. By $B\to^Y C = \theta_1'A'$,
  the unification $\unify{\theta_2 A'}{B'\to^\phi\alpha\eff\mu}{\theta_3}$ succeeds.
  By \Cref{lemma:principal-unifiers} and \refe{}, there exists
  $\theta_4$ such that $\theta_4\theta_3(\theta_2 A') =
  \theta_4\theta_3(B'\to^\phi\alpha\eff\mu) = B\to^Y C$.
  Let $P_3 = \theta_3 (\theta_2 P_1 \cup P_2)$ and $Q =
  \makeunl{\theta_3\theta_2\theta_1(\Gamma_1,\Gamma_2,\Gamma) |_{\Sigma_1 \cap \Sigma_2}}$.
  By \tilab{App}, \refc{}, \refd{} and context weakening, we have
  $\infer{\Gamma_1,\Gamma_2,\Gamma}{V\,W}{\theta_3(\alpha\eff\mu)}
  {\theta_3\theta_2\theta_1}{P_3\cup Q}{\Sigma_1\cup\Sigma_2}$.
  With $\theta''=\theta_4\theta_2'\theta_1'$, we have
  $(\theta=\theta''\theta_3\theta_2\theta_1)|_{\Gamma_1,\Gamma_2,\Gamma}$.
  By $\Sigma_1\cap\Sigma_2 \subseteq \dom{\Gamma}$,
  $P\vdash\theta\Gamma\linle\Unl$, \Cref{lemma:closure-factor}, and
  \Cref{thm:correctness-of-factorisation}, we have
  $P\vdash\theta_p\theta'' Q$.
  Let $\theta_5 = \theta_p\theta''$.
  Our goal follows from $C=\theta_5{\theta_3(\alpha\eff\mu)}$,
  $P\vdash\theta_5 (P_3\cup Q)$ and
  $(\theta=\theta_5\theta_3\theta_2\theta_1)|_{\Gamma_1,\Gamma_2,\Gamma}$.

  \item[Case]
  \begin{mathpar}
    \inferrule*[Lab=\qtylab{Return}]
      {\refa{\typ{P\mid\theta\Gamma}{V : A}}}
      {\typc{P\mid\theta\Gamma}{\Ret V : A}{\{R\}}}
  \end{mathpar}
  Our goal follows from the IH on \refa{}.

  \item[Case]
  \begin{mathpar}
    \inferrule*[Lab=\qtylab{Do}]
      {
        \refa{\typ{P\mid\theta\Gamma}{V : A_\ell}} \\
        P \vdash \{\ell\lsig A_\ell\sto^Y B_\ell\} \rle R
      }
      {\typc{P\mid\theta\Gamma}{\Do \ell \; V : B_\ell}{\{E\}}}
  \end{mathpar}
  Similar to previous cases.
  Our goal follows from the IH on \refa{},
  \Cref{lemma:principal-unifiers}, and
  \Cref{thm:correctness-of-factorisation}.

  \item[Case]
  \begin{mathpar}
  \inferrule*[Lab=\qtylab{Seq}]
  { \refa{\typc{P\mid\theta(\Gamma_1, \Gamma)}{M : A}{\{R_1\}}} \\
    \refb{\typc{P\mid\theta(\Gamma_2, \Gamma), x : A}{N : B}{\{R_2\}}} \\\\
    P \vdash R_1 \rle R \\ P \vdash R_2 \rle R \\
    P \vdash \theta\Gamma_2 \linle R_1 \\
    P \vdash \theta\Gamma \un
  }
  {\typc{P\mid\theta(\Gamma_1,\Gamma_2,\Gamma)}{\Let x \revto M \In N : B}{\{R\}}}
  \end{mathpar}
  By the IH on \refa{}, we have
  \refd{$\infer{\Gamma_1,\Gamma}{M}{A'\eff\{R_1'\}}{\theta_1}{P_1}{\Sigma_1}$}
  and there exists $\theta_1'$ such that $A\eff\{R_1\} =
  \theta_1'(A'\eff\{R_1'\})$, $P\vdash\theta_1' P_1$, and
  $(\theta=\theta_1'\theta_1)|_{\Gamma_1,\Gamma}$.
  Let $\theta=\theta'\theta_1'\theta_1$ where $\theta'$ substitutes
  type variables only appearing in $\Gamma_2$.
  By \refb{}, we have
  \refc{$\typc{P\mid\theta'\theta_1'\theta_1(\Gamma_2, \Gamma, x :
  A')}{N : B}{\{R_2\}}$}.
  By the IH on \refc{}, we have
  \refe{$\infer{\theta_1(\Gamma_2,\Gamma,x:A')}{N}{B'\eff\{R_2'\}}{\theta_2}{P_2}{\Sigma_2}$}
  and there exists $\theta_2'$ such that $B\eff\{R_2\}=\theta_2'(B'\eff\{R_2'\})$,
  $P\vdash\theta_2'P_2$ and $(\theta'\theta_1' = \theta_2'\theta_2)|_{\Gamma_2,\Gamma}$.
  Take a fresh variable $\mu$.
  Let $Q = \makeunl{\theta_2\theta_1(\Gamma_1,\Gamma_2,\Gamma) |_{\Sigma_1 \cap \Sigma_2}} \cup
  \makeunl{\theta_2(x:A) |_{\setcomplement{\Sigma_2}}} \cup
  \makeleq{\theta_2\theta_1(\Gamma_1,\Gamma_2,\Gamma)|_{\Sigma_2}, \theta_2 R_1} \cup
  \makesub{\theta_2 R_1, \mu} \cup
  \makesub{R_2, \mu}$.
  By \tilab{Seq}, \refd{}, \refe{}, and context weakening, we have
  $\infer{\Gamma_1,\Gamma_2,\Gamma}{\Let x\revto M\In N}{B'\eff\{R_2\}}
  {\theta_2\theta_1}{\theta_2P_1\cup P_2\cup Q}{\Sigma_1\cup(\Sigma_2\backslash x)}$.
  With $\theta'' = [R/\mu]\theta_2'\theta_1'$, we have $(\theta =
  \theta''\theta_2\theta_1)|_{\Gamma_1,\Gamma_2,\Gamma}$.
  By $\Sigma_1\cap\Sigma_2\subseteq\dom{\Gamma}$,
  $P\vdash\theta\Gamma\linle\Unl$, \Cref{lemma:zero-is-unlimited},
  $\Gamma_2 = \Gamma|_{\Sigma_2}$, $P \vdash \theta\Gamma_2 \linle
  R_1$, $P\vdash R_1\rle R$, $P\vdash R_2\rle R$,
  \Cref{lemma:closure-factor} and
  \Cref{thm:correctness-of-factorisation}, there exists $\theta_p$
  such that $P\vdash\theta_p\theta'' Q$.
  Let $\theta_3 = \theta_p\theta''$.
  Our goal follows from $B\eff\{R_2\} = \theta_3({B'\eff\{\mu\}})$, $P
  \vdash \theta_3(\theta_2 P_1 \cup P_2)$, and $(\theta =
  \theta_3\theta_2\theta_1)|_{\Gamma_1,\Gamma_2,\Gamma}$.

  \item[Case]
  \begin{mathpar}
  \inferrule*[Lab=\qtylab{Handle}]
  {
    \refa{\typ{P\mid\theta(\Gamma_1,\Gamma)}{H : A\eff\{R_1\} \hto D}} \\
    \refb{\typc{P\mid\theta(\Gamma_2, \Gamma)}{M : A}{\{R\}}} \\\\
    P \vdash \theta\Gamma \un \\ \refc{P \vdash R \rle R_1}
  }
  {P\mid\theta(\Gamma_1, \Gamma_2,\Gamma)\vdash \Handle M \With H : D}
  \end{mathpar}
  By a similar proof to the \qtylab{App} case, our goal follows from
  the IHs on \refa{} and \refb{}, \Cref{lemma:closure-factor},
  \Cref{thm:correctness-of-factorisation}, and
  \Cref{lemma:principal-unifiers}.
  The only difference is the subtyping constraint $\makesub{\theta_3
  R, \theta_3\theta_2 R_1}$ used by \tilab{Handle}, which follows from
  \refc{}.

  \item[Case]
  \begin{mathpar}
  \inferrule*[Lab=\qtylab{Handler}]
  {
    C = A \eff \{(\ell_i : A_{\ell_i} \sto^{Y_i} B_i)_i; R_1\} \\
    D = B \eff \{R_2\} \\\\
    \refa{\typ{P\mid\theta\Gamma, x : A}{M : D}}\\
    \refb{[\typ{P\mid\theta\Gamma, p_i : A_i, r_i : B_i \to^{Y_i} D}{N_i : D}]_i} \\\\
    P \vdash \theta\Gamma \un \\
    P \vdash R_1 \rle R_2 \\
    P \vdash R_1 \rlack \{\ell_i\}_i
  }
  {\typ{P\mid\theta\Gamma}{\{\Ret x \mapsto M\} \uplus \{ \ell_i\;p_i\;r_i \mapsto N_i \}_{i=1}^n : C \hto D}}
  \end{mathpar}
  The typing rule for handler is the most complicated one, but there
  is actually nothing new of the proof compared to previous cases for
  other rules.
  For each typing derivation on the handler clauses, we do a similar
  proof to the \qtylab{Abs} case.
  Take fresh variables $\alpha, \phi_i$, and $\mu$.

  First, by \refa{} we have $\typ{P\mid\theta[A/\alpha](\Gamma, x :
  \alpha)}{M : D}$. By the IH on it, we have
  \refc{$\infer{\Gamma,x:\alpha}{M}{D'}{\theta_0}{P_0}{\Sigma_0}$} and there
  exists $\theta_0'$ such that $D=\theta_0'D'$, $P\vdash \theta_0'
  P_0$ and $(\theta[A/\alpha] =
  \theta_0'\theta_0)|_{\Gamma,x:\alpha}$.
  Let $\theta_0^a = \theta_0$ and $\theta_0^b = \theta_0'$.
  We have $(\theta_0^b \theta_0^a = \theta)|_\Gamma$.

  By the typing derivation on the first handler clause in \refb{}, we
  have $\typ{P\mid \theta_0^b[Y_1/\phi_1]\theta_0^a(\Gamma, p_1:A_1,
  r_1:B_1\to^{\phi_1}D)}{N_1: D}$.
  By the IH on it, we have $\infer{\theta_0^a (\Gamma, p_1:A_1,
  r_1:B_1\to^{\phi_1}D)}{N_1}{D_1}{\theta_1}{P_1}{\Sigma_1}$ and
  $\theta_1'$ such that $D = \theta_1'D_1$, $P\vdash\theta_1'P_1$ and
  $(\theta_0^b[Y_1/\phi_1]=\theta_1'\theta_1)|_{(\Gamma, p_1:A_1,
  r_1:B_1\to^{\phi_1}D)}$.
  By $D = \theta_1'D_1$, the unification
  $\unify{D_1}{\theta_1'\theta_1 D'}{\theta_1^x}$ succeeds. By
  \Cref{lemma:principal-unifiers}, there exists $\theta_1^y$ such that
  $\theta_1^y D_1 = D$.
  Set $\theta_1^a = \theta_x\theta_1\theta_0^a$ and $\theta_1^b
  =\theta_1'\theta_1^y$. We have $(\theta_1^b \theta_1^a =
  \theta)|_\Gamma$.

  Repeating the above process for every $i$ from $2$ to $n$, we have
  \refd{$\infer{\theta_{i-1}^a (\Gamma, p_i:A_i,
  r_i:B_i\to^{\phi_i}D)}{N_i}{D_i}{\theta_i}{P_i}{\Sigma_i}$} and
  $(\theta_i^b \theta_i^a = \theta)|_\Gamma$.
  Let
  \[\bl
  C' = \theta_n^a (\alpha\eff\{(\ell_i : A_{\ell_i} \sto^{\phi_i} B_{\ell_i})_i\lsep\mu\}) \\
  B'\eff\{R\} = \theta_n^a D' \\
  \Sigma = (\Sigma_0 \backslash \{x\}) \cup
            (\cup_{i=1}^{n} (\Sigma_i \backslash \{p_i, r_i\})) \\
  P' = (\cup_{i=0}^{n} \theta_n^a P_i)
    \cup \makeunl{\theta_n^a\Gamma |_{\Sigma}}
    \cup \makesub{\mu, R}
    \cup \makelack{\mu, \{\ell_i\}_i} \\
  Q' = \makeunl{\theta_n^a(x:\alpha)|_{\setcomplement{\Sigma_0}}}
    \cup (\cup_{i=1}^n\makeunl{\theta_n^a(p_i:A_{\ell_i},r_i:B_{\ell_i}\to^{\phi_i}D)})
  \el\]
  By \tilab{Handler}, \refc{}, and \refd{}, we have
  $\infer{\Gamma}{\{\Ret x \mapsto M\} \uplus \{ \ell_i\;p_i\;r_i
  \mapsto N_i \}_{i=1}^n}{C'\hto\theta_n^a D'}{\theta_n^a}{P'\cup Q'}{\Sigma} $.
  With $\theta' = \theta_n^b[R_1/\mu]$,
  we have $(\theta = \theta'\theta_n^a)|_\Gamma$.
  By \Cref{lemma:zero-is-unlimited},  \Cref{lemma:closure-factor}, and
  \Cref{thm:correctness-of-factorisation} there exists $\theta_p$ such
  that $P\vdash\theta_p\theta'(P\cup Q)$.
  Let $\theta'' = \theta_p\theta'$.
  Our goal follows from $C\hto D = \theta'' (C'\hto\theta_n^a D')$,
  $P\vdash\theta''(P\cup Q)$, and $(\theta =
  \theta''\theta_n^a)|_\Gamma$.
\end{description}
\end{proof}

\subsection{Correctness of Constraint Solving}
\label{app:proof-constraint-solving}

\begin{lemma}%
  \label{lemma:auxiliary-unification}
  If $\unifyrow(\cR_1,\cR_2)$ returns $(\cR_1',\cR_2',\theta)$, then
  $\satsubst{\cR_1\rle\cR_2} = \satsubst{\cR_1'\rle\cR_2'}\theta$; if
  it fails, then $\cR_1\rle\cR_2$ is not satisfiable.
\end{lemma}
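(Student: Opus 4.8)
The plan is to reduce the statement to the principal-unifier property already established for $\unifylab$ in \Cref{lemma:principal-auxiliary-unification}. First I would unfold the definitions: $\unifyrow(\cR_1,\cR_2)$ succeeds exactly when $\unifylab(\cR_1,\cR_2)$ succeeds with some $\theta$, and in that case returns $(\theta\cR_1,\theta\cR_2,\theta)$; so the outputs $\cR_1'$ and $\cR_2'$ in the statement are literally $\theta\cR_1$ and $\theta\cR_2$, and it remains to prove $\satsubst{\cR_1\rle\cR_2}=\satsubst{\theta\cR_1\rle\theta\cR_2}\,\theta$ when $\unifylab$ succeeds, and unsatisfiability of $\cR_1\rle\cR_2$ when it fails.

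Next I would establish a concrete characterisation of $\satsubst{\cR_1\rle\cR_2}$. By inspection of the entailment rules in \Cref{fig:qe-predicates-entailment}, the only way to derive $\cdot\vdash\sigma(\cR_1\rle\cR_2)$ is via \plab{Sub}, so $\sigma\in\satsubst{\cR_1\rle\cR_2}$ iff $\rowset{\sigma\cR_1}\subseteq\rowset{\sigma\cR_2}$. Since a substitution does not alter labels and concrete rows have no repeated labels, this inclusion holds iff (i) $\dom{\cR_1}\subseteq\dom{\cR_2}$, a condition independent of $\sigma$, and (ii) for every shared label $\ell$ in $D:=\dom{\cR_1}\cap\dom{\cR_2}$ the substitution $\sigma$ unifies the $\ell$-signature of $\cR_1$ with that of $\cR_2$ --- equivalently, $\sigma$ unifies the restricted concrete rows $\cR_1|_{D}$ and $\cR_2|_{D}$. (For concrete rows arising during inference, shared labels carry identical parameter and result types because operation signatures are taken from the fixed global context, so unifying only the linearity annotations --- which is what $\unifylab$ does --- is the whole unification problem; alternatively one reads $A\sto^Y B$ as an opaque constructor whose components $\unifylab$ unifies.)

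Given this, I would handle the failure case first: if $\unifyrow$ fails then $\unifylab(\cR_1,\cR_2)$ fails, and by \Cref{lemma:principal-auxiliary-unification} the rows $\cR_1|_{D}$ and $\cR_2|_{D}$ admit no unifier; were there a $\sigma\in\satsubst{\cR_1\rle\cR_2}$, condition (ii) would supply one, a contradiction, so $\satsubst{\cR_1\rle\cR_2}=\emptyset$ and $\cR_1\rle\cR_2$ is unsatisfiable. For the success case, write $\theta=\unifylab(\cR_1,\cR_2)$ and $\cR_i'=\theta\cR_i$. For $\supseteq$: if $\sigma'\in\satsubst{\cR_1'\rle\cR_2'}$ then $\sigma:=\sigma'\theta$ satisfies $\sigma\cR_i=\sigma'\cR_i'$, hence $\rowset{\sigma\cR_1}=\rowset{\sigma'\cR_1'}\subseteq\rowset{\sigma'\cR_2'}=\rowset{\sigma\cR_2}$, so $\sigma\in\satsubst{\cR_1\rle\cR_2}$. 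For $\subseteq$: if $\sigma\in\satsubst{\cR_1\rle\cR_2}$ then by (ii) $\sigma$ unifies $\cR_1|_{D}$ and $\cR_2|_{D}$, so \Cref{lemma:principal-auxiliary-unification} gives $\sigma''$ with $\sigma=\sigma''\theta$; then $\sigma''\cR_i'=\sigma''\theta\cR_i=\sigma\cR_i$ shows $\rowset{\sigma''\cR_1'}\subseteq\rowset{\sigma''\cR_2'}$, i.e. $\sigma''\in\satsubst{\cR_1'\rle\cR_2'}$, whence $\sigma=\sigma''\theta\in\satsubst{\cR_1'\rle\cR_2'}\,\theta$. Combining the inclusions yields the equality.

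The bulk of the argument is routine bookkeeping about substitution composition and the behaviour of $\rowset{-}$. The one step that needs care --- and is really the crux --- is the decomposition of $\rowset{\sigma\cR_1}\subseteq\rowset{\sigma\cR_2}$ into the substitution-independent domain condition (i) plus the pointwise unification condition (ii) over shared labels, since it is precisely this separation that lets us hand the remaining content to the already-proven most-general-unifier property of $\unifylab$. A secondary nuisance is keeping the notation straight: the $\cR_i'$ of the present statement (outputs of $\unifyrow$) are not the primed rows of \Cref{lemma:principal-auxiliary-unification} (restrictions to common labels), and the two must not be conflated.
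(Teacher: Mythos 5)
Your proof is correct and follows essentially the same route as the paper: the paper's own argument is a one-sentence appeal to \Cref{lemma:principal-auxiliary-unification}, observing that unifying the signatures of shared labels is a necessary condition for any solution of $\cR_1\rle\cR_2$, which is exactly the decomposition you make explicit. Your write-up simply fills in the bookkeeping (the characterisation of $\satsubst{\cR_1\rle\cR_2}$ via \plab{Sub}, the two inclusions, and the failure case) that the paper leaves implicit.
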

\begin{proof}
  By \Cref{lemma:principal-auxiliary-unification}, the substitution
  $\theta$ returned by $\unifyrow(\cR_1,\cR_2)$ is the principal
  unifier that unifies the linearity types of the same labels in
  $\cR_1$ and $\cR_2$, which is a necessary condition for any solution
  of $\cR_1 \rle \cR_2$.
\end{proof}

\begin{lemma}
  \label{lemma:factor-preserves-solutions}
  If $\factor(P) = Q$, then $\satsubst{P} = \satsubst{Q}$.
\end{lemma}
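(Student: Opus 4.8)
The plan is to derive the equality of solution sets directly from the correctness of factorisation (Theorem~\ref{thm:correctness-of-factorisation}) together with the structural properties of the entailment relation (Theorem~\ref{thm:properties-of-the-entailment-relation}). First I would note that a predicate set $P$ contains only predicates of the shapes $\tau_1\linle\tau_2$, $R_1\rle R_2$, and $R\rlack\mL$, and that in \CalcS value types and row types never contain quantifiers (quantifiers occur only in type schemes $\sigma$). Consequently the clause of $\factor$ that allocates a fresh type variable — the $(\forall\alpha.\sigma)\linle\tau$ case of $\factor : \meta{TySch}\linle\meta{Type}\to\meta{PSet}$ — is never triggered when $\factor$ is applied to an argument in $\meta{PSet}$. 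Hence $\factor(P)=Q$ introduces no fresh variables, so there is no renaming subtlety when comparing the solution substitutions of $P$ and $Q$; in particular $\satsubst P$ and $\satsubst Q$ range over the same class of substitutions.

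Next I would apply Theorem~\ref{thm:correctness-of-factorisation} to obtain $Q\vdash P$ and $P\vdash Q$. Fix an arbitrary substitution $\theta$. By the closure property of entailment (Theorem~\ref{thm:properties-of-the-entailment-relation}) we get $\theta Q\vdash\theta P$ and $\theta P\vdash\theta Q$. If $\theta\in\satsubst P$, i.e. $\cdot\vdash\theta P$, then combining with $\theta P\vdash\theta Q$ by transitivity of entailment (Theorem~\ref{thm:properties-of-the-entailment-relation}) gives $\cdot\vdash\theta Q$, so $\theta\in\satsubst Q$. The reverse inclusion is symmetric, using $\theta Q\vdash\theta P$. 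Therefore $\satsubst P=\satsubst Q$.

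I do not expect a genuine obstacle here: the only point requiring a moment's care is confirming that $\factor$ restricted to $\meta{PSet}$ is variable-preserving, so that the two solution sets are genuinely comparable; the rest is an immediate two-line symmetric argument packaging the closure and transitivity facts already established for $\vdash$.
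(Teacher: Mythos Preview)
Your proposal is correct and follows essentially the same route as the paper: invoke the correctness of factorisation to obtain mutual entailment $P\vdash Q$ and $Q\vdash P$, then use the closure property and transitivity of $\vdash$ to conclude that the two solution sets coincide. Your preliminary remark that $\factor$ on a $\meta{PSet}$ never hits the fresh-variable clause is a valid (and careful) observation that the paper leaves implicit, but it does not change the structure of the argument.
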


\begin{proof}
  By \Cref{thm:correctness-of-factorisation}, we have $P\vdash Q$ and
  $Q\vdash P$.
  For any $\theta\in\satsubst{P}$, we have $\cdot\vdash\theta P$.
  By the closure property of
  \Cref{thm:properties-of-the-entailment-relation}, we have
  $\theta P\vdash \theta Q$.
  By the transitivity of
  \Cref{thm:properties-of-the-entailment-relation}, we have
  $\cdot\vdash\theta Q$, which implies $\theta\in\satsubst{Q}$.
  Symmetrically, for any $\theta\in\satsubst{Q}$, we can prove
  $\theta\in\satsubst{P}$.
  Finally, we have $\satsubst{P} = \satsubst{Q}$.
\end{proof}

\constraintSolving*

\begin{proof}
  The termination of $\solvelin$ and $\factor$ is obvious.
  It may be not very obvious that $\solverow$ always terminates since the
  $\solverow(\theta,P,Q)$ moves the solved predicates in $P$ to the
  set of unsolved constraints $Q$ in some cases.
  Note that only row subtyping constraints of forms
  $\cR_1\rle\cR_2\lsep\mu_2$ and $\cR_1\lsep\mu_1\rle\cR_2\lsep\mu_2$
  might require resolving previously solved constraints because they
  substitute row variables.
  In both cases, when $\rowset{\cR_1'}\not\subseteq\rowset{\cR_2'}$,
  we substitute $\mu_2$ with $(\cR_1'\backslash\cR_2')\lsep\mu$.
  Notice that the number of labels used in the whole predicate set is
  finite, and the $\solverow$ fails when there are duplicated labels
  in the same row, which implies that this kind of substitution
  terminates.
  Finally, we can conclude that $\solverow$ terminates.

  For the correctness, the idea is to show that every step preserves
  solutions. We first show $\solverow$ preserves solutions by proving
  a lemma that if $\solverow(\theta,P,Q)$ returns $(\theta'\theta,
  Q')$, then we have $\satsubst{P\cup Q} = \satsubst{Q'} \theta'$; if
  it fails, then $P\cup Q$ is not satisfiable.
  We prove by induction on the definition of $\solverow$.
  \begin{description}
    \item[Case]
    \[\bl
    \solverow(\theta,P,\cdot) = \mreturn\ (\theta, P)
    \el\]
    Our goal follows from $\satsubst{P} = \satsubst{P}\iota$.

    \item[Case]
    \[\bl
    \solverow(\theta,P,(\tau_1\linle\tau_2, Q)) =
      \refa{\solverow(\theta,(P,\tau_1\linle\tau_2), Q)}
    \el\]
    Our goal follows from the IH on \refa{} and
    $\satsubst{P\cup(\tau_1\linle\tau_2, Q)} =
    \satsubst{(P,\tau_1\linle\tau_2)\cup Q}$.

    \item[Case]
    \[\bl
    \solverow(\theta, P, (\cR_1 \rle \cR_2, Q)) = \\
    \quad\bl
      \refb{\mlet\ (\cR_1', \cR_2', \theta') = \unifyrow(\cR_1, \cR_2)} \\
      \refc{\massert\ \rowset{\cR_1'} \subseteq \rowset{\cR_2'}} \\
      \refa{\solverow(\theta'\theta, \theta'P, \theta'Q)}
      \el
    \el\]
    Obviously $\refc{}$ fails when $\cR_1'\rle\cR_2'$ is not satisfiable.
    Our goal follows from the IH on \refa{},
    \Cref{lemma:auxiliary-unification} on \refb{}, and
    $\satsubst{P\cup(\cR_1 \rle \cR_2, Q)} =
    \satsubst{\theta'P\cup\theta'Q}\theta'$.

    \item[Case]
    \[\bl
    \solverow(\theta, P, (\cR_1 \lsep \mu \rle \cR_2 \lsep \mu, Q)) = \\
    \quad\bl
      \refb{\mlet\ (\cR_1', \cR_2', \theta') = \unifyrow(\cR_1, \cR_2)} \\
      \refc{\massert\ \rowset{\cR_1'} \subseteq \rowset{\cR_2'}} \\
      \refa{\solverow(\theta'\theta, \theta' P, \theta' Q)}
      \el
    \el\]
    Obviously $\refc{}$ fails when $\cR_1'\rle\cR_2'$ is not satisfiable.
    Our goal follows from the IH on \refa{},
    \Cref{lemma:auxiliary-unification} on \refb{}, and
    $\satsubst{P\cup(\cR_1\lsep\mu \rle \cR_2\lsep\mu, Q)} =
    \satsubst{\theta'P\cup\theta'Q}\theta'$.

    \item[Case]
    \[\bl
    \solverow(\theta, P, (\cR_1 \lsep \mu \rle \cR_2, Q)) = \\
    \quad\bl
      \refb{\mlet\ (\cR_1', \cR_2', \theta') = \unifyrow(\cR_1, \cR_2)} \\
      \refc{\massert\ \rowset{\cR_1'} \subseteq \rowset{\cR_2'}} \\
      \refa{\solverow(\theta'\theta,
        (\theta'P, \mu \rle (\cR_2'\backslash \cR_1')), \theta' Q)}
      \el
    \el\]
    Obviously $\refc{}$ fails when $\cR_1'\rle\cR_2'$ is not satisfiable.
    Our goal follows from the IH on \refa{},
    \Cref{lemma:auxiliary-unification} on \refb{}, and
    $\satsubst{P\cup(\cR_1\lsep\mu \rle \cR_2, Q)} =
    \satsubst{(\theta'P,\mu\rle(\cR_2'\backslash \cR_1'))\cup\theta'Q}\theta'$.

    \item[Case]
    \[\bl
    \solverow(\theta, P, (\cR_1 \rle \cR_2 \lsep \mu_2, Q)) = \\
    \quad\bl
      {\mlet\ (\cR_1', \cR_2', \theta') = \unifyrow(\cR_1, \cR_2)} \\
      \mfresh\ \mu \\
      \meta{if}\ \rowset{\cR_1'}\subseteq\rowset{\cR_2'}\\
      \meta{then}\
        \refa{\solverow(\theta'\theta, \theta'P, \theta'Q)} \\
      \meta{else}\
      \bl
        \mlet\ \theta'' = [((\cR_1'\backslash \cR_2')\lsep \mu) / \mu_2]\theta' \\
        \refb{\solverow(\theta''\theta, \cdot, \theta''(Q,P))}
        \el
      \el
    \el\]
    For the true branch, our goal follows from the IH on \refb{},
    \Cref{lemma:principal-auxiliary-unification}, and
    $$\satsubst{P\cup(\cR_1 \rle \cR_2 \lsep \mu_2, Q)} =
    \satsubst{\theta'P\cup\theta' Q}\theta'$$

    For the false branch, our goal follows from the IH on \refb{},
    \Cref{lemma:principal-auxiliary-unification}, and
    $$\satsubst{P\cup(\cR_1 \rle \cR_2 \lsep \mu_2, Q)} =
    \satsubst{\theta''(Q,P)}\theta''$$

    Both of the above equations follow from the fact that in order to
    solve $\cR_1 \rle \cR_2 \lsep \mu_2$, it is necessary to unify the
    linearity types of the same labels in $\cR_1$ and $\cR_2$, and
    instantiate $\mu_2$ with at least other labels only in $\cR_1$ (no
    instantiation needed when
    $\rowset{\cR_1'}\subseteq\rowset{\cR_2'}$).

    \item[Case]
    \[\bl
    \solverow(\theta, P, (\cR_1 \lsep \mu_1 \rle \cR_2 \lsep \mu_2, Q)) = \\
    \quad\bl
      {\mlet\ (\cR_1', \cR_2', \theta') = \unifyrow(\cR_1, \cR_2)} \\
      \mfresh\ \mu \\
      \meta{if}\ \rowset{\cR_1'}\subseteq\rowset{\cR_2'}\\
      \meta{then}\
      \bl
        \refa{\solverow(\theta'\theta,(\theta'P,\mu_1\rle(\cR_2'\backslash\cR_1')\lsep\mu_2),\theta'Q)}
        \el\\
      \meta{else}\
      \bl
        \mlet\ \theta'' = [((\cR_1'\backslash \cR_2')\lsep \mu) / \mu_2]\theta' \\
        \refb{\solverow(\theta''\theta, \mu_1 \rle (\cR_2'\backslash \cR_1')\lsep \mu, \theta''(Q,P))}
        \el
      \el
    \el\]
    For the true branch of if, our goal follows from the IH on
    \refa{}, \Cref{lemma:principal-auxiliary-unification}, and
    $$
    \satsubst{P\cup (\cR_1 \lsep \mu_1 \rle \cR_2 \lsep \mu_2, Q)} =
    \satsubst{(\theta'P,\mu_1\rle(\cR_2'\backslash\cR_1')\lsep\mu_2)\cup\theta'Q)}\theta'
    $$

    For the false branch of if, our goal follows from the IH on
    \refa{}, \Cref{lemma:principal-auxiliary-unification}, and
    $$\satsubst{P\cup(\cR_1\lsep\mu_1 \rle \cR_2 \lsep \mu_2, Q)} =
    \satsubst{(\mu_1 \rle (\cR_2'\backslash \cR_1')\lsep \mu)\cup
      \theta''(Q,P)}\theta''$$

    Both of the above two equations follow from the fact that in
    order to solve $\cR_1\lsep\mu_1 \rle \cR_2 \lsep \mu_2$, it is
    necessary to unify the linearity types of the same labels in
    $\cR_1$ and $\cR_2$, and instantiate $\mu_2$ with at least other
    labels only in $\cR_1$ (no instantiation needed when
    $\rowset{\cR_1'}\subseteq\rowset{\cR_2'}$).

    \item[Case]
    \[\bl
    \solverow(\theta, P, (\cR \rlack \mL, Q)) = \\
    \quad\bl
      \refb{\massert\ \dom{\cR} \cap \mL = \emptyset} \\
      \refa{\solverow(\theta, P, Q)}
      \el
      \medskip \\
    \el\]
    Obviously \refb{} fails when $\cR\rlack\mL$ is not satisfiable.
    Our goal follows the IH on \refa{}.

    \item[Case]
    \[\bl
    \solverow(\theta, P, (\cR\lsep\mu \rlack \mL, Q)) = \\
    \quad\bl
      \refb{\massert\ \dom{\cR} \cap \mL = \emptyset} \\
      \refa{\solverow(\theta, (P, \mu \rlack \mL), Q)}
      \el
    \el\]
    Obviously \refb{} fails when $\cR\lsep\mu\rlack\mL$ is not
    satisfiable. Our goal follows the IH on \refa{}.
  \end{description}

  Then, we can conclude that if $\solverow{\iota,\cdot,P}$ returns
  $(\theta,Q)$, then we have $\satsubst{P}=\satsubst{Q}\theta$; if it
  fails, then $P$ is not satisfiable.
  Moreover, in $Q$, row subtyping constraints are all in the forms of
  $\mu\rle\cR$ and $\mu\rle\cR\lsep\mu'$.

  By \Cref{lemma:factor-preserves-solutions}, we have
  $\satsubst{Q}=\satsubst{\factor(Q)}$.
  Moreover, in $\factor(Q)$, linearity constraints are all in atomic
  forms, which means they are only between type variables, row
  variables, and linearity types $Y$.

  Let $Q'' = \factor(Q)$. For $\solvelin(\cdot,Q'') = Q'$, we want to
  show that $\satsubst{Q'} = \satsubst{\factor(Q)}$. Notice that
  $\solvelin(\cdot,Q'')$ essentially computes the transitive closure
  of the linearity constraints in $Q''$. Obviously we have
  $\satsubst{Q''} \subseteq \satsubst{Q'}$.
  For the other direction, we need to show that for any
  $\{\tau_1\linle\tau_2, \tau_2\linle\tau_3\} \subseteq Q'$ and
  $\theta\in\satsubst{\tau_1\linle\tau_2, \tau_2\linle\tau_3}$, we
  have $\cdot\vdash\theta(\tau_1\linle \tau_3)$.
  Notice that the type inference of \CalcS only generates linearity
  constraints of forms $\Gamma \linle \tau$, which means rows only
  appear on the RHS.
  Thus, after factorisation, $\theta\tau_2$ can only be $A$ or $Y$.
  The $\cdot\vdash\theta(\tau_1\linle \tau_3)$ follows from a
  straightforward case analysis on $\theta\tau_2$.

  Finally, if $\Lin\linle\Unl \in Q'$, then $Q'$ is obviously not
  satisfiable. Otherwise, we have a trivial solution by substituting
  all row variables with the empty row $\cdot$, value variables with
  $\TUnit$, and linearity variables with $\Unl$.
  We also have $\satsubst{P} = \satsubst{Q'}\theta$, which further
  implies the trivial solution of $Q'$ also gives a solution of $P$.
  These results also hold for $Q$ since $\satsubst{Q'} =
  \satsubst{\factor(Q)} = \satsubst{Q}$.
\end{proof}

\end{document}